\setlist{itemsep=0pt, topsep=0.5em}
\setlist[description]{leftmargin=*, itemsep = 5pt}
\providecommand{\noopsort}[1]{}
\theoremstyle{plain}
\newtheorem{theorem}{Theorem}[section]
\newtheorem*{theorem*}{Theorem}
\newtheorem{lemma}[theorem]{Lemma}
\newtheorem{proposition}[theorem]{Proposition}
\newtheorem{corollary}[theorem]{Corollary}
\newtheorem*{claim*}{Claim}
\newtheorem{fact}[theorem]{Fact}
\newtheorem*{fact*}{Fact} 
\theoremstyle{definition}
\newtheorem{definition}[theorem]{Definition}
\newtheorem{remark}[theorem]{Remark}
\newtheorem{example}[theorem]{Example}
\newcommand{\Fraisse}{Fra\"{i}ss\'{e}}
\newcommand{\EF}{Ehrenfeucht\-/\Fraisse\ }
\renewcommand{\epsilon}{\varepsilon}
\renewcommand{\theta}{\vartheta}
\renewcommand{\phi}{\varphi}
\newcommand{\eg}{e.g.~}
\newcommand{\ie}{i.e.~}
\newcommand{\iec}{i.e.,~}
\newcommand{\Iec}{I.e.,~}
\newcommand{\down}{{\downarrow}\,} 
\newcommand{\cvr}{\prec} 
\DeclareMathOperator{\htf}{\mathrm{ht}} 
\newcommand{\pit}{\pitchfork}
\newcommand{\br}[1]{\llbracket #1 \rrbracket} 
\newcommand{\brp}[1]{\br{#1}_{+}}
\renewcommand{\o}{\overline}
\DeclareMathOperator{\dom}{dom} 
\DeclareMathOperator{\tp}{tp} 
\newcommand{\B}{\mathcal{B}} 
\newcommand{\sg}{\sigma}
\newcommand{\pref}{\sqsubseteq}
\newcommand{\N}{\mathbb{N}}
\newcommand{\tr}[1]{\llbracket #1 \rrbracket} 
\renewcommand{\leq}{\leqslant}
\renewcommand{\geq}{\geqslant}
\newcommand{\Gf}{\mathfrak{G}} 
\newcommand{\IMP}{\Rrightarrow}
\newcommand{\game}{\mathcal{G}}
\newcommand{\bisim}{\mathrel{\raisebox{1.5pt}{$\underline{\leftrightarrow}$}}} 
\newcommand{\pbisim}{\mathrel{\raisebox{1.5pt}{$\underline{\rightarrow}$}}}
\newcommand{\Ek}{\mathbb{E}_k} 
\newcommand{\EkI}{\Ek^{I}} 
\newcommand{\Mk}{\mathbb{M}_k} 
\newcommand{\Pk}{\mathbb{P}_k} 
\newcommand{\PkI}{\Pk^{I}} 
\DeclareMathOperator{\Emb}{\mathbb{S}} 
\DeclareMathOperator{\Path}{\mathbb{P}} 
\newcommand{\colim}{\operatornamewithlimits{colim}}
\newcommand{\FI}{\mathscr{F}^{I}}  
\DeclareMathOperator{\C}{\mathscr{C}} 
\DeclareMathOperator{\A}{\mathscr{A}} 
\DeclareMathOperator{\T}{\mathscr{T}} 
\newcommand{\CSstar}{\mathbf{Struct}_{\bullet}(\sg)} 
\newcommand{\CSplus}{\mathbf{Struct}(\sg^I)} 
\DeclareMathOperator{\D}{\mathscr{D}} 
\DeclareMathOperator{\Mod}{\mathbf{Mod}} 
\newcommand{\CS}{\mathbf{Struct}(\sg)}
\newcommand{\EM}{\mathbf{EM}} 
\newcommand{\LL}{\mathcal{L}} 
\newcommand{\ELL}{\exists\LL} 
\newcommand{\PLLk}{\prescript{+}{}{\LL}^{k} }
\newcommand{\EPLL}{\exists^{+}\LL} 
\newcommand{\FO}{\mathrm{FO}} 
\newcommand{\EFO}{\exists\FO} 
\newcommand{\PFO}{\prescript{+}{}{\FO}}
\newcommand{\EPFO}{\exists^+\FO} 
\newcommand{\ML}{\mathrm{ML}} 
\newcommand{\EML}{\exists\ML} 
\newcommand{\PML}{\prescript{+}{}{\ML} }
\newcommand{\emb}{\rightarrowtail} 
\newcommand{\epi}{\twoheadrightarrow} 
\newcommand{\id}{\mathrm{id}} 
\newcommand{\Q}{\mathscr{Q}} 
\newcommand{\M}{\mathscr{M}} 
                \def\sourcecoordinate{\pgfpointanchor{\tikztostart}{center}}
                \def\targetcoordinate{\pgfpointanchor{\tikztotarget}{center}}
                \pgfmathanglebetweenpoints{\sourcecoordinate}{\targetcoordinate}
                \edef\tempangle{\pgfmathresult}
\tikzset{hbend left/.style={
    to path={let \p1=($(\tikztotarget)-(\tikztostart)$),
        \p2=($(\tikztostart.north)-(\tikztostart.south)$),
        \p3=($(\tikztotarget.north)-(\tikztotarget.south)$),
        \n1={max(\y2,\y3)/2} in
    \ifdim\x1>0pt
     ([yshift=\n1]\tikztostart.east) to[bend left] ([yshift=\n1]\tikztotarget.west)
    \else
     ([yshift=-0.7*\n1]\tikztostart.west) to[bend left] ([yshift=-0.7*\n1]\tikztotarget.east)
    \fi
    }}}
\renewcommand{\tocsection}[3]{%
  \indentlabel{\@ifnotempty{#2}{\bfseries\ignorespaces#1 #2\quad}}\bfseries#3}
\renewcommand{\tocsubsection}[3]{%
  \indentlabel{\@ifnotempty{#2}{\ignorespaces#1 #2\quad}}#3}
\newcommand\@dotsep{4.5}
\def\@tocline#1#2#3#4#5#6#7{\relax
  \ifnum #1>\c@tocdepth 
  \else
    \par \addpenalty\@secpenalty\addvspace{#2}%
    \begingroup \hyphenpenalty\@M
    \@ifempty{#4}{%
      \@tempdima\csname r@tocindent\number#1\endcsname\relax
    }{%
      \@tempdima#4\relax
    }%
    \parindent\z@ \leftskip#3\relax \advance\leftskip\@tempdima\relax
    \rightskip\@pnumwidth plus1em \parfillskip-\@pnumwidth
    #5\leavevmode\hskip-\@tempdima{#6}\nobreak
    \leaders\hbox{$\m@th\mkern \@dotsep mu\hbox{.}\mkern \@dotsep mu$}\hfill
    \nobreak
    \hbox to\@pnumwidth{\@tocpagenum{\ifnum#1=1\fi#7}}\par
    \nobreak
    \endgroup
  \fi}
\renewcommand\csname r@tocindent0\endcsname{0pt}
\def\l@subsection{\@tocline{2}{0pt}{2.5pc}{5pc}{}}
\begin{document}
\title[Existential and positive games: a comonadic and axiomatic view]{Existential and positive games: \\ a comonadic and axiomatic view}

\author{Samson Abramsky}
\address{Department of Computer Science, University College London, 66--72 Gower Street, London, WC1E 6EA, United Kingdom}\email{s.abramsky@ucl.ac.uk}

\author{Thomas Laure}
\address{D\'epartement d'Informatique, Ecole Normale Sup\'erieure, PSL University, 45 Rue d’Ulm, Paris 75005, France}
\email{thomas.laure@ens.psl.eu}

\author{Luca Reggio}
\address{Dipartimento di Matematica ``Federigo Enriques'', Universit\`a degli Studi di Milano, via Saldini 50, 20133 Milano, Italy}
\email{luca.reggio@unimi.it}

\thanks{Research supported by the EPSRC grant EP/V040944/1.}

\begin{abstract}
A number of model-comparison games central to (finite) model theory, such as pebble and \EF games, can be captured as comonads on categories of relational structures. 
In particular, the coalgebras for these comonads encode in a syntax-free way preservation of resource-indexed logic fragments, such as first-order logic with bounded quantifier rank or a finite number of variables.

In this paper, we extend this approach to existential and positive fragments (i.e., without universal quantifiers and without negations, respectively) of first-order and modal logic. 
We show, both concretely and at the axiomatic level of arboreal categories, that the preservation of existential fragments is characterised by the existence of so-called pathwise embeddings, while positive fragments are captured by a newly introduced notion of positive bisimulation.
As an application, we offer a new proof of an equi-resource Lyndon positivity theorem for (multi)modal logic.
\end{abstract}

\maketitle
\tableofcontents

\section{Introduction}

\subsection{Context}
There is a remarkable divide in computer science between semantics and structure on the one hand, exemplified by categorical semantics, and expressiveness and computational complexity on the other, exemplified by finite model theory. The present work is part of a broader programme aimed at bridging these two sides, called ``structure'' and ``power'' in \cite{ABRAMSKY2020}.

Finite model theory studies finite models of logical theories and is a notable example of the convergence of mathematical logic and computer science. It developed rapidly in the 1970s and 1980s due to its connections with database theory and complexity theory. A fundamental result embodying this connection is Fagin's theorem \cite{Fagin1974}. The latter initiated descriptive complexity, which aims to classify problems according to the language required to express them.
In this setting, a far-reaching idea is to stratify logic fragments by resource parameters such as quantifier rank or number of variables; cf.\ \eg\cite{Immerman1999}. This approach is reflected in the use of model-comparison games in which the two players, Spoiler and Duplicator, have restricted access to resources. Examples include \EF games with a finite number of rounds \cite{Ehrenfeucht1960,Fraisse1954} or pebble games \cite{Immerman1982}. Duplicator has a winning strategy if, and only if, two given structures cannot be distinguished in the corresponding resource-sensitive logic fragments.

From a categorical viewpoint, these games can be studied in a ``syntax-free'' way by means of game comonads, introduced in \cite{abramsky2017pebbling,DBLP:conf/csl/AbramskyS18}. The key idea is to consider plays in model-comparison games not as external artefacts, but instead as having a structure of their own. 
Generally speaking, each type of model-comparison game induces a comonad on the category of relational structures such that Duplicator winning strategies correspond to morphisms (or more generally spans of morphisms) in the category of coalgebras for the comonad. In the setting of game comonads, logical resources can be studied in a uniform way. This has led to the axiomatic notion of arboreal category~\cite{AR2022}, which captures the essential properties of the categories of coalgebras for game comonads, and the study of homomorphism preservation theorems therein~\cite{AR2024}. 

\subsection{Existential and positive games}
It is well known that if $A,B$ are finite relational structures such that $B$ satisfies all existential positive sentences satisfied by $A$ (\iec all sentences without $\forall$ and $\neg$), then there exists a homomorphism $A\to B$. In general, this does not hold when $A$ and $B$ are infinite, or when we restrict ourselves to a smaller logic fragment \eg by bounding the quantifier rank or the number of variables. However, as a consequence of the structural and resource-sensitive nature of game comonads, this characterisation can be recovered in this setting: the preservation of resource-sensitive existential positive fragments corresponds to the existence of a morphism between the co-free coalgebras on $A$ and $B$, respectively. This fact, which holds uniformly for finite and infinite structures, was observed at the concrete level in~\cite{abramsky2017pebbling,DBLP:conf/csl/AbramskyS18} and extended to the axiomatic level in~\cite{AR2022}. Similarly, the full (resource-sensitive) fragments are captured by the existence of a bisimulation between co-free coalgebras, in the sense of Definition~\ref{d:bisimulation}.

In this work we consider two intermediate logic fragments: the existential fragment, which forbids universal quantifiers and allows negations only at the atomic level, and the positive fragment, which forbids negations but allows universal quantifiers. We characterise the preservation of these fragments by first translating it into the existence of a winning strategy for Duplicator in an appropriate variant of the game, and then into the existence of appropriate morphisms in the category of coalgebras for the comonad.

Interestingly, while the existential and positive fragments behave in a somewhat symmetric (or ``dual'') way in terms of games, their coalgebraic characterisations are quite different. 
The existential fragments correspond to a forth-only variant of the game with a two-way (partial isomorphism) winning condition. The fact that it is a forth-only game corresponds to the absence of universal quantifiers, while the two-way winning condition captures the presence of negations on atoms. The existential fragments are captured by the notion of pathwise embedding introduced in~\cite{AS2021,AR2022}, which can be thought of as ``local embeddings'' between coalgebras.

In coalgebraic terms, positive fragments require a more subtle definition. At the level of games, they correspond to a back-and-forth variant of the game with a one-way (partial homomorphism) winning condition. We show that the preservation of positive fragments is captured by the existence of an asymmetric span in the category of coalgebras, which we call \emph{positive bisimulation}; see Definition~\ref{d:pos-bisim-concrete}.

For the benefit of the reader, we first explain our results in the concrete settings of the game comonads for \EF games, pebble games and bisimulation games. These results are then extended to the axiomatic level of arboreal categories.

\subsection{Outline of the paper}
The paper is structured as follows. In Sections~\ref{s:prel} and~\ref{s:comonads} we recall the necessary background on logic and games, and on game comonads, respectively. In Section~\ref{s:concrete-exs} we study preservation of existential resource-sensitive fragments from the comonadic viewpoint, and in Section~\ref{s:concrete-pos} we carry out an analogous analysis for positive fragments.
In the second part of the paper we adopt an axiomatic perspective. In Section~\ref{s:arboreal} we collect some basic facts about arboreal categories (some of which are new with this paper) and arboreal games, and in Sections~\ref{s:ex-arbor} and~\ref{s:pos-arbor} we provide axiomatic extensions of preservation results for existential and positive variants of arboreal games. Finally, in Section~\ref{s:preservation} we offer a new proof of an equi-resource positivity theorem for (multi)modal logic \`a la Lyndon, based on our previous results.

\section{Preliminaries}
\label{s:prel}
\subsection{Structures}\label{s:structures}

Consider a \emph{finite} relational vocabulary $\sigma$, \ie a first-order vocabulary containing finitely many relation symbols but no function symbols and no constant symbols. A \mbox{\emph{$\sigma$-structure}} is given by a set $A$ together with, for each relation symbol $R\in\sigma$ of arity $n\in\N$, an interpretation $R^A\subseteq A^n$. A \emph{homomorphism} of $\sigma$-structures $h\colon A \to B$ is a map that preserves the interpretations of the relation symbols, that is
\begin{equation}\label{eq:hom}
R^A(a_1,\dots,a_n) \ \Longrightarrow \ R^B(h(a_1),\dots,h(a_n))
\end{equation}
for each $R\in \sigma$ of arity $n$ and each tuple $(a_1,\ldots,a_n)\in A^n$.

We denote by 
\[
\CS
\]
the category of $\sigma$-structures and homomorphisms between them.
A morphism in the category $\CS$ is said to be an \emph{embedding} if it is injective and moreover it reflects the interpretations of the relation symbols, \ie the converse implication in eq.~\eqref{eq:hom} holds.

When working in the setting of modal logic, we will suppose that $\sigma$ is a \emph{(multi)modal vocabulary}, \ie it contains relation symbols of arity at most $2$. For each unary relation symbol $P\in\sigma$ we consider a corresponding propositional variable $p$, and for each binary relation symbol $R\in\sigma$ we consider corresponding modalities $\Diamond_R$ and $\Box_R$. We regard $\sigma$-structures as Kripke models for this (multi)modal logic, where the interpretations of unary relation symbols specify the worlds satisfying the corresponding propositional variables, and the interpretations of binary relations give the accessibility relations for the corresponding modalities. Modal formulas are evaluated over \emph{pointed} Kripke models, \ie pairs $(A,a)$ where $A$ is a Kripke model and $a\in A$. We let
\[
\CSstar
\]
be the category of pointed Kripke models and homomorphisms preserving the distinguished elements.

\subsection{Logic fragments}\label{s:logic-fragments}

We consider the infinitary logic $\LL_{\infty,\omega}(\sigma)$ obtained by extending usual first-order logic $\FO(\sigma)$ with conjunctions and disjunctions of arbitrary (set-sized) arity; however, each formula in $\LL_{\infty,\omega}(\sigma)$ can only use a finite number of (free or bound) variables. 
If no confusion arises, we omit reference to the vocabulary $\sigma$ and simply write, e.g., $\LL_{\infty,\omega}$ and $\FO$. 
Given a set of formulas $\mathbb{L}\subseteq\LL_{\infty,\omega}$ and $\sigma$-structures $A$ and $B$, write
\[
A\IMP_{\mathbb{L}} B
\]
if, for all sentences $\phi\in \mathbb{L}$, $A\models \phi$ implies $B\models \phi$. Further, we let $\equiv_{\mathbb{L}}$ be the symmetrization of the preorder $\IMP_{\mathbb{L}}$; \iec $A\equiv_{\mathbb{L}}B$ if and only if $A$ and $B$ satisfy the same sentences in~$\mathbb{L}$. If $(A,a)$ and $(B,b)$ are pointed structures, we take $(A,a)\IMP_{\mathbb{L}} (B,b)$ to mean that $A\models \phi(a)$ implies $B\models \phi(b)$ for all $\phi\in\mathbb{L}$ with at most one free variable. Following the usual practice, we write $\phi(x_{1},\ldots,x_{n})$ to denote that the free variables of $\phi$ are contained in $\{x_{1},\ldots,x_{n}\}$. If we want to emphasise the fact that $\phi$ is a sentence, we write $\phi(\varnothing)$.

We shall consider the following fragments of $\LL_{\infty,\omega}$:

\subsubsection*{Bounded quantifier rank}
Recall that the \emph{quantifier rank} of a formula is the maximum number of nested quantifiers appearing in it.
For each $k\in\N$, we denote by $\LL_k$ the fragment of $\LL_{\infty,\omega}$ consisting of formulas with quantifier rank at most $k$. Since $\sigma$ is finite, there are only finitely many formulas in $\LL_k$ up to logical equivalence, and so each formula in $\LL_k$ is equivalent to one that uses only finite conjunctions and disjunctions. Thus, $\LL_k$ can be identified with the fragment 
\[
\FO_{k}
\]
of first-order logic $\FO$ consisting of formulas with quantifier rank at most $k$.

\subsubsection*{Finite-variable logic}
For each positive integer $k$, we write 
\[
\LL^k
\] 
for the \emph{$k$-variable} fragment of $\LL_{\infty,\omega}$, consisting of the formulas using only (free or bound) variables $x_1,\ldots,x_k$. In general, $\LL^k$ contains infinitely many formulas up to logical equivalence, therefore it is a proper extension of $k$-variable first-order logic $\FO^k$. However, $\LL^k$ and $\FO^k$ have the same expressive power over \emph{finite} $\sigma$-structures, in the sense that for all finite $\sigma$-structures $A,B$ we have $A\equiv_{\LL^k} B$ if, and only if, $A\equiv_{\FO^k} B$.

\subsubsection*{Bounded modal depth}
Assume that $\sigma$ is a modal vocabulary. Whenever convenient, we shall tacitly identify the corresponding (multi)modal logic with a fragment of first-order logic via its \emph{standard translation} (see e.g.\ \cite[\S 2.4]{blackburn2002modal}), associating to each modal formula~$\phi$ a first-order formula $\tr{\phi}_x$ in a free variable~$x$. Then $(A,a)\models \phi$ according to Kripke semantics if, and only if, $A\models \tr{\phi}_x (a)$ in the usual model-theoretic sense.

Recall that the \emph{modal depth} of a modal formula is the maximum nesting of modalities in the formula. For each $k\in\N$, we let 
\[
\ML_k
\]
be the fragment of first-order logic $\FO$ consisting of (standard translations of) modal formulas with modal depth at most $k$. As with the fragments $\LL_k$, there are only finitely many formulas in $\ML_k$ up to logical equivalence.

\subsection{Model-comparison games}\label{s:games}

We recall the notions of model-comparison games corresponding to the logic fragments $\FO_k$, $\LL^k$ and $\ML_k$. These are two-player games, played by Spoiler and Duplicator on $\sigma$-structures $A$ and $B$.

\subsubsection*{\EF games}
In the $i$th round of the game, Spoiler chooses an element of one of the two structures, say $a_i\in A$, and Duplicator responds by choosing an element of the other structure, say $b_i\in B$. After $k$ rounds, Duplicator wins if the relation 
\begin{equation}\label{eq:rel-EF}
\{(a_i,b_i)\mid 1\leq i \leq k\}
\end{equation}
formed by the pairs of chosen elements is a \emph{partial isomorphism} between $A$ and $B$, \ie the assignment $a_i\mapsto b_i$ is an isomorphism between the induced substructures of $A$ and $B$, respectively, with carriers $\{a_1,\ldots,a_k\}$ and $\{b_1,\ldots,b_k\}$.

\begin{theorem}[\cite{Ehrenfeucht1960,Fraisse1954}]
    Duplicator has a winning strategy in the $k$-round \EF game between $A$ and $B$ if, and only if, $A\equiv_{\FO_k} B$.
\end{theorem}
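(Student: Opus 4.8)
The plan is to prove, by induction on the number of rounds $k$, a statement about partial plays that specialises to the theorem when no pebbles have yet been placed. For tuples $\o a=(a_1,\dots,a_m)\in A^m$ and $\o b=(b_1,\dots,b_m)\in B^m$, write $(A,\o a)\equiv_k(B,\o b)$ to mean that $A\models\phi(\o a)$ iff $B\models\phi(\o b)$ for every formula $\phi(x_1,\dots,x_m)\in\FO_k$, and say that \emph{Duplicator wins the $k$\/-round game from $(\o a,\o b)$} if Duplicator has a winning strategy in the variant in which the first $m$ pairs of pebbles already sit on $\o a,\o b$ and $k$ further rounds remain. The theorem is then the case $m=0$ of the biconditional: Duplicator wins the $k$\/-round game from $(\o a,\o b)$ if, and only if, $(A,\o a)\equiv_k(B,\o b)$. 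Generalising to partial plays is what makes the induction go through, since an unfinished play is exactly a pair of tuples.

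The main tool I would introduce is the family of \emph{Hintikka formulas} (rank-$k$ types). Writing $\o a c$ for the $(m+1)$\/-tuple $(a_1,\dots,a_m,c)$, define $\theta^{k}_{A,\o a}(x_1,\dots,x_m)$ by recursion on $k$: let $\theta^{0}_{A,\o a}$ be the conjunction of all atomic and negated atomic formulas in $x_1,\dots,x_m$ satisfied by $\o a$ in $A$, and set
\[
\theta^{k+1}_{A,\o a} \;=\; \Bigl(\bigwedge_{c\in A}\exists x_{m+1}\,\theta^{k}_{A,\o a c}\Bigr)\wedge \forall x_{m+1}\Bigl(\bigvee_{c\in A}\theta^{k}_{A,\o a c}\Bigr).
\]
Because $\sg$ is finite, at each rank there are only finitely many such formulas up to logical equivalence (as already observed for $\LL_k$ in \S\ref{s:logic-fragments}), so these conjunctions and disjunctions may be taken finite and $\theta^{k}_{A,\o a}\in\FO_k$. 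I would then check by induction on $k$ the two defining properties: (i) $A\models\theta^{k}_{A,\o a}(\o a)$, and (ii) every $\phi(\o x)\in\FO_k$ is equivalent to a finite disjunction of rank-$k$ Hintikka formulas. Properties (i) and (ii) together yield that $B\models\theta^{k}_{A,\o a}(\o b)$ holds exactly when $(A,\o a)\equiv_k(B,\o b)$.

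With this in place the induction on the main biconditional is short. In the base case, Duplicator wins the $0$\/-round game from $(\o a,\o b)$ iff the relation $\{(a_i,b_i)\}$ is a partial isomorphism, iff $\o a$ and $\o b$ satisfy the same atomic formulas, iff $(A,\o a)\equiv_0(B,\o b)$. For the inductive step, unfolding one round shows that Duplicator wins the $(k+1)$\/-round game from $(\o a,\o b)$ iff for every $a'\in A$ there is $b'\in B$ with Duplicator winning the $k$\/-round game from $(\o a a',\o b b')$, and symmetrically for every Spoiler move in $B$. By the induction hypothesis these are the ``forth'' and ``back'' clauses asserting that the rank-$k$ types realised over $\o a$ coincide with those realised over $\o b$, which by the recursion above is precisely the content of $B\models\theta^{k+1}_{A,\o a}(\o b)$, i.e.\ $(A,\o a)\equiv_{k+1}(B,\o b)$. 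Taking $m=0$ recovers the stated theorem.

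The hard part will be the step linking the back-and-forth condition to $\equiv_{k+1}$, that is, establishing property (ii) of the Hintikka formulas. Its nontrivial content is that agreement on \emph{all} rank-$(k+1)$ formulas reduces to back-and-forth agreement on the finitely many rank-$k$ types of one-element extensions: the forth clause matches preservation of formulas $\exists x_{m+1}\,\psi$ with $\psi\in\FO_k$, the back clause matches preservation of $\forall x_{m+1}\,\psi$, and every rank-$(k+1)$ formula is a Boolean combination of such formulas and atoms. It is exactly here that the finiteness of $\sg$ is essential, since it guarantees that the relevant conjunctions and disjunctions are finite and hence that the characteristic formulas $\theta^{k}_{A,\o a}$ live in $\FO_k$ rather than merely in $\LL_{\infty,\omega}$.
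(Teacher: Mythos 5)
Your proof is correct, but note that the paper does not prove this statement at all: it is quoted as a classical result with attribution to Ehrenfeucht and Fra\"{\i}ss\'{e}, so there is no in-paper argument to compare against. What you give is the standard textbook proof via rank-$k$ Hintikka formulas, generalising to partial plays $(\o a,\o b)$ and using finiteness of the relational vocabulary to keep the characteristic formulas inside $\FO_k$; all the steps (properties (i) and (ii), the one-round unfolding, the base case via partial isomorphisms and atomic types including equality atoms) are sound. The only cosmetic slip is the reference to ``pebbles'' in the $k$-round \EF game, where the paper reserves that word for the pebble game; positions here are simply pairs of tuples of chosen elements.
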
    

\subsubsection*{Pebble games}
Let $k$ be a positive integer. In the $k$-pebble game, each player has~$k$ pebbles available. In the $i$th round, Spoiler takes a pebble $p_i$ and places it on an element of one of the two structures, say $a_i\in A$ (if the pebble $p_i$ was already placed on another element, the effect is that of removing it and placing it on $a_i$). Duplicator responds by placing their corresponding pebble $p_i$ on an element of the other structure, say $b_i\in B$.
Duplicator wins the round if the relation determined by the current placings of the pebbles is a partial isomorphism between $A$ and $B$. Duplicator wins the $k$-pebble game if they have strategy that is winning after $n$ rounds, for all $n\in \N$. 

\begin{theorem}[\cite{Barwise1977,Immerman1982}]
    Duplicator has a winning strategy in the $k$-pebble game between~$A$ and~$B$ if, and only if, $A\equiv_{\LL^k} B$.
\end{theorem}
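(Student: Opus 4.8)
The plan is to realise Duplicator's winning strategies as back-and-forth systems and to identify these with $\LL^k$-equivalence of positions. First I would fix the bookkeeping: a position is a pair of partial assignments $\bar a,\bar b$ of the $k$ pebbles into $A$ and $B$, and Spoiler's move of placing pebble $i$ on $a'\in A$ replaces $\bar a$ by the assignment $\bar a[i\mapsto a']$ that overwrites the $i$th coordinate (symmetrically on the $B$ side). The point to record here is that reusing the bound variable $x_i$ under a quantifier corresponds exactly to this overwriting of pebble $i$; this is what keeps the whole argument inside the $k$-variable fragment. Writing $(A,\bar a)\equiv_{\LL^k}(B,\bar b)$ when the two positions satisfy the same formulas of $\LL^k$ with free variables among $x_1,\dots,x_k$, the empty position encodes precisely the hypothesis $A\equiv_{\LL^k}B$, and any position with $(A,\bar a)\equiv_{\LL^k}(B,\bar b)$ is automatically a partial isomorphism, since the atomic formulas $R(x_{i_1},\dots,x_{i_n})$ and $x_i=x_j$ and their negations all lie in $\LL^k$.

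For the implication from a winning strategy to $\equiv_{\LL^k}$, I would take the set $\mathcal F$ of positions reachable under a fixed winning strategy; this is nonempty, consists of partial isomorphisms, and enjoys the forth and back properties of the game. An induction on the structure of $\phi(x_1,\dots,x_k)\in\LL^k$ then shows that $A\models\phi(\bar a)\iff B\models\phi(\bar b)$ for every $(\bar a,\bar b)\in\mathcal F$: the atomic case is the partial-isomorphism condition, negations and the (possibly infinitary) conjunctions and disjunctions are immediate from the induction hypothesis, and for $\exists x_i\,\phi$ a witness $a'$ on the $A$ side is matched, via the forth property, by some $b'$ with $(\bar a[i\mapsto a'],\bar b[i\mapsto b'])\in\mathcal F$, whence the induction hypothesis gives $B\models\phi(\bar b[i\mapsto b'])$; the back property handles the symmetric case. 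Reading this off at the empty position yields $A\equiv_{\LL^k}B$.

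The converse is the substantial half, and the main obstacle is to show that the relation $\equiv_{\LL^k}$ on positions is itself a back-and-forth system. For the forth property, suppose $(A,\bar a)\equiv_{\LL^k}(B,\bar b)$ and Spoiler places pebble $i$ on $a'\in A$, and assume for contradiction that no response $b'\in B$ preserves $\equiv_{\LL^k}$. Using that $\LL^k$ is closed under negation, for each $b'\in B$ I may pick a formula $\phi_{b'}(x_1,\dots,x_k)\in\LL^k$ with $A\models\phi_{b'}(\bar a[i\mapsto a'])$ but $B\not\models\phi_{b'}(\bar b[i\mapsto b'])$. The crucial step is to form $\psi:=\bigwedge_{b'\in B}\phi_{b'}$: although this conjunction may be infinite, it still lies in $\LL^k$ because every conjunct uses only the variables $x_1,\dots,x_k$ --- this is exactly where the infinitary logic $\LL^k$, rather than $\FO^k$, is needed (over finite structures the two coincide). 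Now $a'$ witnesses $A\models(\exists x_i\,\psi)(\bar a)$, and since $\exists x_i\,\psi\in\LL^k$ the equivalence of $(A,\bar a)$ and $(B,\bar b)$ gives some $b''\in B$ with $B\models\psi(\bar b[i\mapsto b''])$, hence $B\models\phi_{b''}(\bar b[i\mapsto b''])$, contradicting the choice of $\phi_{b''}$. The back property is symmetric. Duplicator then wins by maintaining the invariant $(A,\bar a)\equiv_{\LL^k}(B,\bar b)$, which holds initially by hypothesis and is preserved round by round by the forth and back properties, with the partial-isomorphism winning condition guaranteed at every stage.
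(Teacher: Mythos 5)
This theorem is quoted in the paper as a classical result of Barwise and Immerman, with no proof given, so there is nothing internal to compare against; your argument is the standard back-and-forth proof and it is correct. Both directions are sound: the set of positions reachable under a winning strategy is a back-and-forth system of partial isomorphisms, and the induction on (possibly infinitary) formula structure goes through; conversely, the key step of forming the set-indexed conjunction $\bigwedge_{b'\in B}\phi_{b'}$ inside $\LL^k$ is exactly where the infinitary logic is needed, and you identify this correctly. The only point you gloss over is the bookkeeping that a formula can only be evaluated at a position when its free variables correspond to currently placed pebbles (so that, e.g., $\exists x_i\,\psi$ is a formula about the position \emph{before} pebble $i$ is moved); this is routine and does not affect the argument.
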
        

\subsubsection*{Bisimulation games}
Suppose that $\sigma$ is a modal vocabulary. The bisimulation game for modal logic is played between pointed Kripke models $(A,a)$ and $(B,b)$.
The initial position of the game is $(a_0,b_0)\coloneqq(a,b)$. In the $i$th round, with $i>0$, if the current position is $(a_{i-1},b_{i-1})$, Spoiler chooses a binary relation symbol and one of the two structures, say $R\in\sigma$ and $A$, and an element $a_i\in A$ such that $R^A(a_{i-1},a_i)$. Duplicator responds with an element in the other structure, say $b_i\in B$, such that $R^B(b_{i-1},b_i)$. If Duplicator has no such response available, they lose. Otherwise, after $k$ rounds, Duplicator wins if for all unary relation symbols $P\in\sigma$ and all $0\leq i \leq k$ we have 
\begin{equation}\label{eq:winning-bisim}
P^A(a_i) \ \Longleftrightarrow \ P^B(b_i).
\end{equation}

\begin{theorem}[\cite{HM1980}]
    Duplicator has a winning strategy in the $k$-round bisimulation game between $(A,a)$ and $(B,b)$ if, and only if, $(A,a)\equiv_{\ML_k} (B,b)$.
\end{theorem}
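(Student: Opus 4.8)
The plan is to prove the biconditional by induction on $k$, with the pointed Kripke models $(A,a)$ and $(B,b)$ universally quantified, so that the residual games arising after a round are again instances of the statement for smaller parameter. For the base case $k=0$ no round is played, so Duplicator wins exactly when the initial position $(a,b)$ already satisfies the winning condition \eqref{eq:winning-bisim}, that is $P^A(a)\Leftrightarrow P^B(b)$ for every unary $P\in\sigma$. On the logical side $\ML_0$ is the Boolean closure of the propositional variables, so $(A,a)\equiv_{\ML_0}(B,b)$ says precisely that $a$ and $b$ satisfy the same propositional variables; the two conditions coincide. For the inductive step I assume the equivalence for $k$ and establish it for $k+1$, treating the two implications separately.

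For the implication from games to logic, suppose Duplicator wins the $(k+1)$-round game from $(a,b)$, and argue by induction on the structure of a modal formula $\phi$ of modal depth at most $k+1$ that $(A,a)\models\phi$ iff $(B,b)\models\phi$. The propositional and Boolean cases use the atomic winning condition at $(a,b)$ together with the structural induction hypothesis. For $\phi=\Diamond_R\psi$ with $\psi$ of modal depth at most $k$: if $(A,a)\models\Diamond_R\psi$, fix a witness $a'$ with $R^A(a,a')$ and have Spoiler play it; Duplicator's strategy returns some $b'$ with $R^B(b,b')$, and the residual strategy witnesses that Duplicator wins the remaining $k$-round game from $(a',b')$. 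By the outer induction hypothesis $(A,a')\equiv_{\ML_k}(B,b')$, whence $(B,b')\models\psi$ and $(B,b)\models\Diamond_R\psi$. The reverse implication follows by letting Spoiler move in $B$ instead, and the case $\phi=\Box_R\psi$ reduces to the diamond case because $\ML_k$ is closed under negation.

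For the implication from logic to games, suppose $(A,a)\equiv_{\ML_{k+1}}(B,b)$. The atomic condition at $(a,b)$ holds since $\ML_0\subseteq\ML_{k+1}$, and it remains to describe Duplicator's reply to a first move of Spoiler; say Spoiler plays $a'\in A$ with $R^A(a,a')$, the case of a move in $B$ being symmetric as $\equiv_{\ML_{k+1}}$ is symmetric. Here I use the finiteness recorded above: since $\ML_k$ contains only finitely many formulas up to logical equivalence, the conjunction $\theta$ of all formulas of $\ML_k$ true at $a'$ is equivalent to a single formula of modal depth at most $k$, the characteristic formula of the $\ML_k$-type of $a'$. Then $(A,a)\models\Diamond_R\theta$, a formula of modal depth at most $k+1$, so $(B,b)\models\Diamond_R\theta$, yielding $b'$ with $R^B(b,b')$ and $(B,b')\models\theta$. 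Because $\ML_k$ is closed under negation, satisfying $\theta$ forces $(A,a')\equiv_{\ML_k}(B,b')$, so by the outer induction hypothesis Duplicator has a winning strategy for the remaining $k$-round game from $(a',b')$; answering $a'$ by $b'$ and then following that strategy (and symmetrically for first moves in $B$) defines a winning strategy for the $(k+1)$-round game.

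I expect the main obstacle to be the logic-to-games direction, and specifically the characteristic-formula construction. It relies essentially on the finiteness of $\ML_k$ up to equivalence, to compress a whole type into one formula of bounded depth, and on the closure of the fragment under negation, to upgrade ``$b'$ satisfies every $\ML_k$-formula true at $a'$'' into genuine $\ML_k$-equivalence rather than a one-sided implication. This Boolean closure is exactly the feature that the existential and positive fragments studied later lack, which is why their game-theoretic characterisations demand the finer notions of pathwise embedding and positive bisimulation.
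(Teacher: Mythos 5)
Your proof is correct. Note, however, that the paper does not prove this statement at all: it is quoted as a known result of Hennessy and Milner \cite{HM1980}, so there is no in-paper argument to compare against. Your argument is the standard one, and it is essentially the same technique the paper does spell out in the appendix for the existential and positive modal variants (Propositions~\ref{p:ex-mod-logic} and~\ref{p:pos-mod-logic}): there the authors maintain type inclusions $\tp^i_A(a_i)\subseteq\tp^i_B(b_i)$ round by round and use the characteristic formula $\Diamond_R\bigwedge\tp^i_A(a_i)$ to extract Duplicator's response, whereas you run an outer induction on $k$ with residual games and upgrade the one-sided type inclusion to full $\ML_k$-equivalence via closure under negation. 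The two presentations buy the same thing; your closing remark correctly identifies closure under negation (and, for the back moves, the availability of $\Box_R$ via $\neg\Diamond_R\neg$) as exactly what fails in the existential and positive fragments, which is why the paper needs the one-sided type machinery there.
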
        

\section{Game comonads}
\label{s:comonads}

\subsection{Comonads}
We start by recalling the notion of a comonad:
\begin{definition}
A \emph{comonad (in Kleisli form)} on a category $\C$ is given by:
\begin{itemize}
    \item an object map $G\colon \mathrm{Ob}(\C)\to \mathrm{Ob}(\C)$,
    \item for each object $A$ of $\C$, an arrow $\epsilon_A \colon G A\to A$,
    \item a \emph{coextension} operation associating with each arrow $f\colon GA\rightarrow B$ an arrow $f^*\colon G A\rightarrow G B$,
\end{itemize}
such that the following equations are satisfied for all arrows ${f\colon GA\to B}$ and ${g\colon GB\to C}$:
\[
\epsilon_A^*=\id_{GA}, \ \ \epsilon_B\circ f^*=f, \ \ (g\circ f^*)^*=g^*\circ f^*.
\]
\end{definition}

Every comonad in Kleisli form $G$ on a category $\C$ induces an endofunctor $G\colon \C\to\C$ by setting $Gf \coloneqq (f\circ \epsilon_A)^*$ for each morphism $f\colon A\to B$ in $\C$. Furthermore, if we let $\delta_A\coloneqq \id_{GA}^*$ for each object $A$ of $\C$, then the tuple $(G,\epsilon,\delta)$ is a comonad in the traditional sense (see e.g.\ \cite[\S VI.1]{MacLane}), where the natural transformations $\epsilon\colon G\Rightarrow \id_{\C}$ and $\delta\colon G\Rightarrow G^2$ are, respectively, the counit and the comultiplication of the comonad.

\vspace{1em}
We will now define three comonads on $\sigma$-structures, corresponding to the three types of games introduced in Section~\ref{s:games}. Given finite sequences $s$ and $t$, we consider the \emph{prefix order} $\pref$ defined by 
\[
s\pref t
\]
if, and only if, there exists a (possibly empty) sequence $w$ such that $sw=t$, where $sw$ denotes the result of concatenating the sequences~$s$ and~$w$. 

\subsubsection*{The \EF comonad}
Fix $k\in\N$. For each $\sigma$-structure $A$, let $\Ek A$ be the set of all non-empty sequences of elements of $A$ of length $\leq k$, and consider the map 
\[
\epsilon_A\colon \Ek A \to A, \ \ \epsilon_A([a_1,\dots,a_i])\coloneqq a_i
\]
sending a sequence to its last element. For each relation symbol $R\in\sigma$ of arity $n$, its interpretation $R^{\Ek A}$ consists of the tuples of sequences $(s_1,\dots,s_n)\in (\Ek A)^n$ satisfying the following condition:

\begin{enumerate}[label=(E)]
\item\label{i:E} the sequences $s_1,\ldots,s_n$ are pairwise comparable with respect to $\pref$ (\iec for all $i,j\in\{1,\ldots,n\}$, either $s_i\pref s_j$ or $s_j\pref s_i$) and $R^A(\epsilon_A(s_1),\dots,\epsilon_A(s_n))$.
\end{enumerate} 
This gives an object map $\Ek$ from $\CS$ to itself, and condition~\ref{i:E} entails that the functions $\epsilon_A$ are homomorphisms. Finally, to define the coextension operation, for each homomorphism $f\colon \Ek A\rightarrow B$ we let 
\[
f^*([a_1,\dots,a_i])\coloneqq [b_1,\dots,b_i]
\]
where $b_{\ell}\coloneqq f([a_1,\dots,a_{\ell}])$ for all $\ell\in \{1,\ldots,i\}$.

These data define a comonad $\Ek$ on $\CS$, the \emph{\EF comonad}.

\subsubsection*{The pebbling comonad}
Fix a positive integer $k$. For each $\sigma$-structure $A$, we consider the set
$\Pk A \coloneqq (\{1,\dots,k\}\times A)^+$ of all finite non-empty sequences of pairs $(p_i,a_i)$, called \emph{moves}, where~$p_i$ is an integer between~$1$ and~$k$, and $a_i\in A$. We refer to $p_i$ as the \emph{pebble index} of the move. Similarly to the above, we define the maps
\[
\epsilon_A[(p_1,a_1),\dots,(p_i,a_i)] \coloneqq a_i
\]
sending a sequence of moves to the element of $A$ selected in the last move. For each relation symbol $R\in\sigma$ of arity $n$, we let $R^{\Pk A}$ be the set of all tuples $(s_1,\dots,s_n)\in (\Pk A)^n$ satisfying condition~\ref{i:E} above along with the following condition: 
\begin{enumerate}[label=(P)]
\item\label{i:P} for all $i,j\in\{1,\ldots,n\}$, if $s_i\pref s_j$ then the pebble index of the last move in~$s_i$ does not appear in the suffix of~$s_i$ in~$s_j$.
\end{enumerate} 
This defines an object map $\Pk$ on $\sigma$-structures, and condition~\ref{i:E} ensures that the $\epsilon_A$ are homomorphisms. The coextension operation is defined similarly to the above: for each homomorphism $f\colon \Pk A\rightarrow B$, we set
\[
f^*([(p_1,a_1),\dots,(p_i,a_i)])\coloneqq [(p_1,b_1),(p_2,b_2),\dots,(p_i,b_i) ]
\]
where $b_{\ell}\coloneqq f([(p_1,a_1),\dots,(p_{\ell},a_{\ell})])$ for all $\ell\in \{1,\ldots,i\}$.

This yields a comonad $\Pk$ on $\CS$, the \emph{pebbling comonad}.

\subsubsection*{The modal comonad}
Let $\sigma$ be a modal vocabulary and fix $k\in\N$. Given a pointed Kripke model $(A,a)$, we denote by $\Mk(A,a)$ the set of all paths in $A$ of length at most~$k$ starting from $a$:
\begin{equation}\label{eq:path-Mk}
a=a_0 \xrightarrow{R_1} a_1 \xrightarrow{R_2} a_2 \to\cdots\xrightarrow{R_i}a_i.
\end{equation}
(That is, $R^A_{j}(a_{j-1},a_j)$ for all $j\in\{1,\ldots, i\}$.)

For each unary relation symbol $P\in\sigma$ its interpretation in $P^{\Mk(A,a)}$ is the set of all paths whose last elements are in $P^A$, and for each binary relation symbol $R\in\sigma$ its interpretation $R^{\Mk(A,a)}$ consists of the pairs of paths $(s,t)$ such that $t$ is obtained by extending $s$ by one step along $R$. 
The distinguished element of $\Mk(A,a)$ is the trivial path $(a)$ of length $0$.
This yields an object map $\Mk$ on pointed Kripke models, which sends a model to its \emph{$k$-unravelling} (cf.\ \cite[Definition~4.51]{blackburn2002modal}).
The maps
\[
\epsilon_{(A,a)}\colon \Mk(A,a)\to (A,a)
\]
sending a path to its last elements are homomorphisms. Finally, the coextension operation associates with a homomorphism $f\colon \Mk(A,a)\to (B,b)$ the homomorphism
\[
f^*\colon \Mk(A,a)\to\Mk(B,b)
\]
that sends a path as in eq.~\eqref{eq:path-Mk} to the path
\[
b=b_0 \xrightarrow{R_1} b_1 \xrightarrow{R_2} b_2 \to\cdots\xrightarrow{R_i}b_i
\]
where $b_{\ell}\coloneqq f(a_0 \xrightarrow{R_1}\cdots\xrightarrow{R_{\ell}}a_{\ell})$ for all $\ell\in\{0,\ldots,i\}$.

This gives a comonad $\Mk$ on $\CSstar$, the \emph{modal comonad}.

\subsection{Coalgebras}\label{s:coalgebras}
Every comonad $G$ on a category $\C$ induces a notion of coalgebras (sometimes referred to as $G$-coalgebras), which we now recall.

\begin{definition}
A \emph{coalgebra} for $G$ is a pair $(X,\alpha)$ consisting of an object~$X$ of $\C$ and a morphism $\alpha \colon X \rightarrow G X$ in $\C$ such that the following diagrams commute: 

\begin{center}
\begin{tikzcd}
    X \arrow{r}{\alpha} \arrow{d}[swap]{\alpha} & G X\arrow{d}{\delta_X \, = \, \id_{GX}^*} \\
    G X\arrow{r}{G \alpha} & G^2 X
\end{tikzcd}
\ \ \ \ \ 
\begin{tikzcd}
    X \arrow{r}{\alpha}\arrow{rd}[swap]{\id_X} & G X \arrow{d}{\epsilon_X}\\ 
    & X
\end{tikzcd}
\end{center}
A \emph{coalgebra morphism} $(X,\alpha)\to(Y,\beta)$ is a morphism $h\colon X\rightarrow Y$ in $\C$ making the following square commute:
\begin{center}
    \begin{tikzcd}
        X\arrow{r}{\alpha} \arrow{d}[swap]{h} & G X \arrow{d}{G h} \\
        Y \arrow{r}{\beta} & G Y
    \end{tikzcd}
\end{center}
    Finally, the \emph{Eilenberg--Moore category} $\EM(G)$ of the comonad $G$ is the category whose objects are $G$-coalgebras and whose arrows are coalgebra morphisms.
\end{definition}

The forgetful functor $L\colon \EM(G)\to \C$ that sends a coalgebra $(X,\alpha)$ to $X$ has a right adjoint 
\[
F\colon \C\to \EM(G), 
\]
which sends an object $A$ of $\C$ to the co-free coalgebra $(GA, \delta_A)$. We recall from~\cite{DBLP:conf/csl/AbramskyS18} the concrete description of the adjunction $L\dashv F$, typically referred to as \emph{Eilenberg--Moore adjunction}, in the case of the comonads $\Ek$, $\Pk$ and $\Mk$. To this end, we make use of the following terminology. 

\begin{definition}
The \emph{Gaifman graph} $\Gf(A)$ of a $\sigma$-structure $A$ is the (undirected) graph defined as follows:
\begin{itemize}
\item the vertices of $\Gf(A)$ are the elements of $A$,
\item elements $a,a'\in A$ are adjacent in $\Gf(A)$, denoted by $a\frown a'$, if they are distinct and both appear in some tuple of related elements $\vec{a}\in R^A$, for some $R\in\sigma$.  
\end{itemize}
\end{definition}

\begin{definition}\label{d:forests}
Let $(X,\leq)$ be a poset. 
\begin{enumerate}[label=(\arabic*)]
\item $(X,\leq)$ is a \emph{forest} if, for each $x\in X$, the set 
\[\down x \coloneqq \{y\in X\mid y\leq x\}\] is finite and linearly ordered by $\leq$. A minimal element of a forest is called a \emph{root}; a forest with exactly one root is a \emph{tree}. 
\item Given a forest $(X,\leq)$, the \emph{height} of $x\in X$ is 
$
\htf(x)\coloneqq |\down x|-1
$
(in particular, roots have height $0$). The \emph{height} of the forest $(X,\leq)$ is $\sup{\{\htf(x)\mid x\in X\}}$.
\item The \emph{covering relation} $\cvr$ associated with the partial order $\leq$ is defined by $x\cvr y$ if, and only if, $x<y$ and there is no $z$ such that $x<z<y$. 
\item A \emph{forest morphism} between forests $(X,\leq)$ and $(X',\leq')$ is a map $f\colon X\to X'$ that sends roots to roots and preserves the covering relation (\ie $x\cvr y$ implies $f(x)\cvr' f(y)$ for all $x,y\in X$).\footnote{Equivalently, it is a monotone map that preserves the height of elements.}
\end{enumerate}
\end{definition}

\subsubsection*{$\Ek$-coalgebras}
The Eilenberg--Moore category $\EM(\Ek)$ for the \EF comonad $\Ek$ is isomorphic to the following category:
\begin{itemize}
\item objects: forest-ordered structures of height $\leq k$, \ie pairs $(A,\leq)$ where $A$ is a $\sigma$-structure and $\leq$ is a forest order on $A$ of height $\leq k$, satisfying
\begin{enumerate}[label=($\mathsf{E}$)]
\item\label{i:E-bis} for all $a,a'\in A$, if $a\frown a'$ in $\Gf(A)$ then $a$ and $a'$ are comparable with respect to $\leq$ (\ie $a$ and $a'$ belong to the same branch of the forest).
\end{enumerate} 
\item morphisms: homomorphisms of $\sigma$-structures that are also forest morphisms.
\end{itemize}

The right adjoint to the forgetful functor $L_{k}\colon \EM(\Ek)\to\CS$, which forgets the forest order, is the functor $F_{k}\colon \CS\to \EM(\Ek)$ that sends a structure $A$ to the forest ordered structure $(\Ek A,\pref)$, where $\pref$ is the prefix order.

\subsubsection*{$\Pk$-coalgebras}
The Eilenberg--Moore category $\EM(\Pk)$ for the pebbling comonad $\Pk$ can be identified, up to isomorphism, with the following category:

\begin{itemize}
\item objects: $k$-pebble forest-ordered structures, \ie tuples $(A,\leq,p)$ where $A$ is a $\sigma$-structure,  $\leq$ is a forest order on $A$ and $p\colon A\to \{1,\dots,k\}$ is a function (called \emph{pebbling function}) satisfying condition~\ref{i:E-bis} above along with
\begin{enumerate}[label=($\mathsf{P}$)]
\item\label{i:P-bis} for all $a,a'\in A$, if $a\frown a'$ in $\Gf(A)$ and $a<a'$ in the forest order, then $p(a)\neq p(x)$ for all $x$ such that $a<x\leq a'$.
\end{enumerate} 
\item morphisms: homomorphisms of $\sigma$-structures that are also forest morphisms and preserve the pebbling functions.
\end{itemize}

The forgetful functor $L_{k}\colon \EM(\Pk)\to\CS$ forgets both the forest order and the pebbling function, and its right adjoint $F_{k}\colon \CS\to\EM(\Pk)$ sends a structure $A$ to $(\Pk A,\pref,p_A)$, where the pebbling function is given by $p_A([(p_1,a_1), \dots , (p_n,a_n)]) \coloneqq p_n$.

\subsubsection*{$\Mk$-coalgebras}
The Eilenberg--Moore category $\EM(\Mk)$ for the modal comonad $\Mk$ is isomorphic to the following category:

\begin{itemize}
\item objects: \emph{synchronization trees} of height $\leq k$, \ie tuples $(A,a,\leq)$ such that $(A,a)$ is a pointed Kripke model, $\leq$ is a tree order on $A$ of height $\leq k$ with root $a$, and 
\begin{enumerate}[label=($\mathsf{M}$)]
\item\label{i:M-bis} for all $x,y\in A$, $x\cvr y$ if and only if $R^A(x,y)$ for some unique $R\in\sigma$.
\end{enumerate} 
\item morphisms: homomorphisms of Kripke models. 
\end{itemize}

Note that the tree order of a synchronization tree is ``definable'' and hence preserved by any homomorphism of Kripke models. Thus the forgetful functor $L_{k}\colon \EM(\Mk)\to\CSstar$ is fully faithful and can be identified with the inclusion into $\CSstar$ of the full subcategory defined by synchronization trees of height $\leq k$. Its right adjoint $F_{k}\colon \CSstar\to \EM(\Mk)$ sends a Kripke model to its $k$-unravelling.

\begin{remark}
In fact, the modal comonads $\Mk$ are \emph{idempotent}, meaning that their comultiplications $\delta$ are natural isomorphisms. In turn, idempotent comonads on a category~$\C$ correspond precisely to \emph{coreflective subcategories} of~$\C$, \ie full subcategories~$\D$ such that the inclusion functor $\D\hookrightarrow \C$ has a right adjoint.
\end{remark}

\subsection{Logical equivalences}

In \cite{abramsky2017pebbling,DBLP:conf/csl/AbramskyS18} it was shown how game comonads can be used to capture, in a categorical fashion, preservation of various fragments and extensions of the logics $\FO_{k}$, $\LL^{k}$ and~$\ML_{k}$. We recall here the case of the existential positive fragments and the full fragments, and outline the way in which the equality symbol is handled in the framework of game comonads.

\subsubsection*{Existential positive fragments} 
Given a collection of formulas $\mathbb{L}\subseteq \LL_{\infty,\omega}$, consider its \emph{existential positive} fragment
\[
\exists^{+}\mathbb{L}
\] 
consisting of formulas with no universal quantifiers and no negations; for modal logic, this corresponds to barring the use of modalities~$\Box_{R}$ and negations. The next result shows that preservation of existential positive fragments is captured by the homomorphism preorder in the Eilenberg--Moore category for the corresponding comonad (for a proof, cf.\ \cite{abramsky2017pebbling,DBLP:conf/csl/AbramskyS18} and also \cite{AS2021}).\footnote{We could equivalently use the homomorphism preorder in the \emph{Kleisli category} of the comonad, which can be identified with a full subcategory of the Eilenberg--Moore category.} Recall that, for a comonad~$G_{k}$ on a category $\C$, we denote by $F_{k}\colon \C\to \EM(G_{k})$ the right adjoint to the forgetful functor $L_{k}\colon \EM(G_{k})\to\C$.

\begin{theorem}\label{th:ep-Kleisli-arrows}
The following are equivalent for all (pointed) structures $A$ and $B$:
\begin{enumerate}[label=(\roman*)]
\item $A\IMP_{\EPFO_{k}} B$ if, and only if, there exists an arrow $F_{k} A\to F_{k} B$ in $\EM(\Ek)$. 
\item $A\IMP_{\EPLL^{k}} B$ if, and only if, there exists an arrow $F_{k} A\to F_{k} B$ in $\EM(\Pk)$. 
\item $(A,a)\IMP_{\exists^{+}\ML_{k}} (B,b)$ if, and only if, there exists an arrow $F_{k} (A,a)\to F_{k} (B,b)$ in $\EM(\Mk)$. 
\end{enumerate}
\end{theorem}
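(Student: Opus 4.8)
The plan is to establish all three equivalences by a common template, carrying out case~(i) in detail and indicating the modifications needed for (ii) and (iii). The starting point is the Eilenberg--Moore adjunction $L_{k}\dashv F_{k}$. By the hom-set bijection it induces, arrows $F_{k}A\to F_{k}B$ in $\EM(\Ek)$ correspond naturally to homomorphisms $\Ek A\to B$ in $\CS$, since $L_{k}F_{k}A=\Ek A$. Thus it suffices to show that $A\IMP_{\EPFO_{k}}B$ holds if, and only if, there is a homomorphism $\Ek A\to B$.

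Next I would reinterpret such a homomorphism as a history-dependent winning strategy for Duplicator. A homomorphism $f\colon\Ek A\to B$ assigns to every non-empty sequence $s=[a_{1},\dots,a_{i}]$ with $i\leq k$ a response $f(s)\in B$; reading $s$ as a sequence of Spoiler moves in $A$, this is exactly a strategy in the forth-only $k$-round game. Spelled out via condition~\ref{i:E} in the definition of $R^{\Ek A}$, the homomorphism condition asserts precisely that whenever $s_{1},\dots,s_{n}$ lie on a common $\pref$-branch and their last elements satisfy $R^{A}$, the responses satisfy $R^{B}$; equivalently, along every play the relation $\{(\epsilon_A(s_j),f(s_j))\}$ is a partial homomorphism from $A$ to $B$. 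This is the one-way winning condition corresponding to the existential positive fragment.

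It then remains to match this game-theoretic condition with logical preservation. For the backward direction I would argue by induction on the structure of an existential positive formula $\phi(\vec{x})$ of quantifier rank $\leq k-i$: if $A\models\phi(\vec{a})$ for a play $\vec{a}=(a_{1},\dots,a_{i})$, then $B\models\phi(f[a_{1}],\dots,f[a_{1},\dots,a_{i}])$, using that $f$ preserves atoms (base case), commutes with $\wedge$ and $\vee$, and responds to each existential witness by playing one more round. For the forward direction I would use positive Hintikka formulas: for each tuple $\vec{a}\in A^{i}$ and depth $d=k-i$, define $\theta^{\vec{a}}_{d}$ collecting the positive atomic diagram of $\vec{a}$ together with $\bigwedge_{a'\in A}\exists x_{i+1}\,\theta^{\vec{a}a'}_{d-1}$; since $\sigma$ is finite and $d\leq k$, these are equivalent to genuine existential positive formulas. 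One has $A\models\theta^{\vec{a}}_{d}(\vec{a})$ by construction, so applying $A\IMP_{\EPFO_{k}}B$ to the sentence $\exists x_{1}\,\theta^{a_{1}}_{k-1}$ and, inductively, to the conjuncts of $\theta^{\vec{a}}_{d}$, one extracts witnesses $b_{i+1}\in B$ and sets $f[a_{1},\dots,a_{i+1}]\coloneqq b_{i+1}$; the depth-$0$ part guarantees the partial homomorphism condition, so that $f$ is indeed a homomorphism $\Ek A\to B$.

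Finally, cases (ii) and (iii) follow the same three moves with the evident changes. For the pebbling comonad one replaces sequences by pebbled sequences and uses $k$-variable positive Hintikka formulas, condition~\ref{i:P} matching the requirement that a reused pebble ``forgets'' the element it previously held; for the modal comonad one restricts Spoiler to $\Diamond$-moves and uses characteristic existential positive modal formulas of modal depth $\leq k$, the synchronization-tree condition encoding the path structure. I expect the forward direction to be the main obstacle: assembling the branch-wise witnesses into a single homomorphism satisfying the relevant coalgebra condition requires the positive Hintikka formulas and some care, with the bookkeeping of pebble indices being the most delicate point in the pebble case. Throughout, equality must be handled correctly, namely kept as a positive atom preserved by homomorphisms rather than used to enforce distinctness.
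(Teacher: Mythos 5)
Your proposal is correct and follows essentially the same route as the proof the paper relies on: the theorem is recalled from the cited references rather than reproved, but both those sources and the paper's own appendix arguments for the analogous existential/positive fragments (e.g.\ Proposition~\ref{p:ex-EF-logic}) use exactly your three moves --- the adjunction identifying arrows $F_{k}A\to F_{k}B$ with Kleisli homomorphisms $\Ek A\to B$, the reading of such a homomorphism as a forth-only Duplicator strategy via condition~\ref{i:E}, and the induction on formulas in one direction with type/characteristic formulas in the other (your positive Hintikka formula $\theta^{\vec a}_{d}$ is just a single-formula packaging of the type $\tp_{A}(a_{1},\dots,a_{i})=\{\phi\in\exists^{+}\FO_{k-i}\mid A\models\phi(\vec a)\}$ used there). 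Your closing remarks on equality elimination and on the finiteness of $\EPFO_{k}$ up to equivalence correctly address the only points where the argument could otherwise go wrong.
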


\subsubsection*{The equality symbol}
In their basic form, game comonads capture fragments of logics without equality. This is sufficient for modal logic, since the image of the standard translation is contained in the equality-free fragment of $\FO$, and was also sufficient for Theorem~\ref{th:ep-Kleisli-arrows}, since the fragments $\EPFO_{k}$ and $\EPLL^{k}$ admit equality elimination. To model logics \emph{with} equality, such as the fragments $\FO_{k}$ and $\LL^{k}$, we proceed as follows.

Given a relational vocabulary $\sigma$, consider a fresh binary relation symbol $I$ and the expanded vocabulary
\[
\sigma^{I}\coloneqq \sigma \cup \{I\}.
\]
There is a fully faithful functor $J\colon \CS \to \CSplus$ that sends a $\sg$-structure to the $\sg^{I}$-structure obtained by interpreting $I$ as the diagonal relation. This functor has a left adjoint $H\colon \CSplus\to\CS$ that sends a $\sg^{I}$-structure $A$ to the quotient of (the $\sg$-reduct of) $A$ by the equivalence relation generated by $I^{A}$. Thus, the functor~$J$ introduces (the interpretation of) the equality symbol, while $H$ eliminates it. Since a generic game comonad $G$ on $\CS$ is defined uniformly for any relational vocabulary $\sg$, we can consider its variant $G^{I}$ over $\sg^{I}$-structures. If $L^{I}\dashv F^{I}$ is the Eilenberg--Moore adjunction associated to $G^{I}$, we have the following situation:
\[\begin{tikzcd}[column sep=0.8em]
\CS \arrow[hbend left]{rr}{J} & \text{\footnotesize{$\top$}} & \CSplus \arrow[hbend left]{rr}{F^{I}} \arrow[hbend left]{ll}{H} & \text{\footnotesize{$\top$}} & \EM(G^{I}) \arrow[hbend left]{ll}{L^{I}}
\end{tikzcd}\]
Let $\FI \coloneqq F^{I}J\colon \CS\to \EM(G^{I})$. The composite adjunction
\[\begin{tikzcd}[column sep=0.8em]
\CS \arrow[hbend left]{rr}{\FI} & \text{\footnotesize{$\top$}} & \EM(G^{I}) \arrow[hbend left]{ll}{HL^{I}}
\end{tikzcd}\]
is not comonadic, but is the one that captures logics with equality.

Consider e.g.\ the \EF comonad $\Ek$. Note that, for any $\sg$-structure~$A$, the interpretation of the relation $I$ in the $\sg^{I}$-structure $\Ek J(A)$ consists of those pairs $(s,t)$ of sequences such that:
\begin{enumerate}[label=(\roman*)]
\item $s,t$ are comparable in the prefix order $\sqsubseteq$, and
\item their last elements are equal, \ie $\epsilon_{A}(s)=\epsilon_{A}(t)$.
\end{enumerate}
In other words, the relation $I^{\Ek J(A)}$ detects repeating sequences. Given $\sg$-structures~$A$ and~$B$, morphisms 
\[
\FI(A) \to \FI(B)
\]
in $\EM(\EkI)$ are in bijection with homomorphisms $\EkI J(A)\to J(B)$ in $\CSplus$, which in turn correspond precisely to homomorphisms $f\colon \Ek (A)\to B$ in $\CS$ such that
\[
\text{$\forall s,t\in \Ek(A)$, if $s\sqsubseteq t$ and $\epsilon_{A}(s)=\epsilon_{A}(t)$ then $f(s)=f(t)$},
\]
called \emph{$I$-morphisms} (see~\cite{abramsky2017pebbling} or \cite[\S 4]{AS2021}). In the case of the pebbling comonad, $I$-morphisms are defined in a similar fashion.

\subsubsection*{The full fragments}

To capture equivalence in the full fragments, we introduce the following notions, which rely on the concrete description of the Eilenberg--Moore categories of the comonads $\Ek$, $\Pk$ and $\Mk$ provided in Section~\ref{s:coalgebras}.

For the remainder of this section, $\A$ will denote one of the categories $\EM(\Ek)$, $\EM(\Pk)$ or $\EM(\Mk)$.

\begin{definition}\label{d:pathwise} \phantomsection
\mbox{}
        \begin{enumerate}[label=(\arabic*)]
    \item\label{i:path} A \emph{path} in $\A$ is an object whose forest order is finite and linear, \ie it consists of a single branch. Paths are denoted by $P,Q$ and variations thereof.
    \item\label{i:embedding} An \emph{embedding} in $\A$, denoted by $\emb$, is a morphism whose underlying homomorphism of $\sg$-structures is an embedding (in the sense of Section~\ref{s:structures}).
    
    \item\label{i:path-embedding} A \emph{path embedding} $P\emb X$ in $\A$ is an embedding whose domain is a path. 
    
    \item\label{i:pathwise-embedding} A morphism $f\colon X\to Y$ in $\A$ is a \emph{pathwise embedding} if, for every path embedding $m\colon P\emb X$, the composite $f\circ m \colon P \to Y$ is a path embedding.
    \end{enumerate}
\end{definition}

Note that pathwise embeddings may fail to be injective. Thus, in general, the class of embeddings is properly contained in the class of pathwise embeddings. 

We next introduce open morphisms in $\A$ in terms of an appropriate \emph{path lifting property} which captures the back-and-forth nature of games for full fragments.

\begin{definition}
    \label{d:open}
  A morphism $f \colon X\to Y$ in $\A$ is said to be \emph{open} if every commutative square as displayed below,
      \begin{center}
        \begin{tikzcd}
P \arrow[rightarrowtail]{r}{m} \arrow[rightarrowtail]{d}[swap]{i} & X \arrow{d}{f} \\
Q \arrow[rightarrowtail]{r}{n} \arrow[dashed,rightarrowtail]{ur}[description]{d} & Y
\end{tikzcd}
    \end{center}
where $i,m,n$ are path embeddings, admits a \emph{diagonal filler}, \ie an arrow $d\colon Q\to X$ making the two triangles commute (note that such a $d$, if it exists, is an embedding).
\end{definition}

\begin{example}\label{ex:p-morphisms-open-maps}
Recall that a function $f\colon X\to Y$ between posets is a \emph{p-morphism} if it is monotone and, for all $x\in X$ and $y\in Y$,
\[
f(x) \leq y \ \Longrightarrow \ \exists x'\in X \text{ such that } x\leq x' \text{ and } f(x') = y.
\] 
This is a special case of the notion of p-morphism (also called \emph{bounded} morphism) between Kripke frames, which is central to modal logic; see e.g.\ \cite[p.~140]{blackburn2002modal}.

When $X$ and $Y$ are forests, a forest morphism $h\colon X\to Y$ is a p-morphism if, and only if, for all $x\in X$ and $y\in Y$ it satisfies
\[
f(x) \cvr y \ \Longrightarrow \ \exists x'\in X \text{ such that } x\cvr x' \text{ and } f(x') = y.
\] 
In the same spirit of Definition~\ref{d:open}, we can define a morphism $f\colon X\to Y$ in the category of forests and forest morphisms (cf.~Definition~\ref{d:forests}) to be \emph{open} if every commutative square
      \begin{center}
        \begin{tikzcd}
P \arrow{r} \arrow{d} & X \arrow{d}{f} \\
Q \arrow{r}  & Y
\end{tikzcd}
    \end{center}
with $P$ and $Q$ finite chains admits a diagonal filler (just note that every forest morphism whose domain is a chain is injective). It turns out that a forest morphism is open if and only if, it is a p-morphism. Furthermore, a forest morphism between non-empty trees is open if and only if it is a surjective p-morphism.
\end{example}

\begin{definition}\label{d:bisimulation}
A \emph{bisimulation} between objects $X,Y$ of $\A$ is a span of open pathwise embeddings connecting $X$ and $Y$:
\[
X \leftarrow Z \rightarrow Y.
\]
We say that $X$ and $Y$ are \emph{bisimilar}, and write $X\bisim Y$, if there exists a bisimulation between them.
\end{definition}

The next result, cf.~\cite{AR2021icalp} for a proof, states that the bisimilarity relation captures precisely equivalence in the full fragments:
\begin{theorem}\label{th:full-bisimulations}
The following statements hold for all (pointed) structures~$A$ and~$B$:
\begin{enumerate}[label=(\arabic*)]
\item $A\equiv_{\FO_{k}} B$ if, and only if, $\FI_{k} A\bisim \FI_{k} B$ in $\EM(\EkI)$. 
\item $A\equiv_{\LL^{k}} B$ if, and only if, $\FI_{k} A\bisim \FI_{k} B$ in $\EM(\PkI)$. 
\item $(A,a)\equiv_{\ML_{k}} (B,b)$ if, and only if, $F_{k} (A,a)\bisim F_{k} (B,b)$ in $\EM(\Mk)$. 
\end{enumerate}
\end{theorem}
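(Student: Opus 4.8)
The plan is to route each equivalence through its game-theoretic characterisation and then identify Duplicator winning strategies with bisimulations. By the theorems recalled in Section~\ref{s:games}, the relations $\equiv_{\FO_{k}}$, $\equiv_{\LL^{k}}$ and $\equiv_{\ML_{k}}$ hold precisely when Duplicator has a winning strategy in, respectively, the $k$-round \EF game, the $k$-pebble game, and the $k$-round bisimulation game. It therefore suffices to show, in each of the three settings, that Duplicator has a winning strategy between the given structures if, and only if, there is a span of open pathwise embeddings connecting the corresponding co-free coalgebras. Throughout I regard a Duplicator winning strategy as a non-empty family of plays of length at most $k$, closed under taking prefixes and under legal one-move extensions in \emph{both} structures, every member of which is a partial isomorphism.

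For the implication from strategies to bisimulations, I would construct the apex $Z$ of the span directly from the strategy: its underlying set is the family of plays, ordered by the prefix relation $\pref$, which makes it a forest (a tree, in the modal case) of height $\leq k$. The winning condition allows one to interpret each relation symbol of $\sg$ on $Z$ so that the two coordinate projections $\pi_{A}\colon Z\to\FI_{k}A$ and $\pi_{B}\colon Z\to\FI_{k}B$ (with $F_{k}$ in place of $\FI_{k}$ in the modal case) become coalgebra morphisms. Restricted to any single branch, the correspondence between the $A$- and $B$-components of a play is a partial isomorphism, and this is exactly what forces each projection to be a \emph{pathwise} embedding: along a branch it is injective and reflects the relations. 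Closure of the family under legal extensions of the play in the $A$-structure makes $\pi_{A}$ \emph{open}, since a commuting square as in Definition~\ref{d:open} presents a play in $Z$ together with a one-move extension of its $A$-projection, and Duplicator's prescribed response supplies the required diagonal filler; closure under extensions in $B$ makes $\pi_{B}$ open symmetrically.

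Conversely, given a bisimulation $\FI_{k}A \leftarrow Z \rightarrow \FI_{k}B$, I would read off a winning strategy in which Duplicator maintains a path embedding $P\emb Z$ whose two projections are the current positions in $A$ and $B$. A move by Spoiler in $A$ extends $\pi_{A}$ applied to this branch to a longer path $Q\emb\FI_{k}A$; openness of $\pi_{A}$ produces a diagonal filler $Q\emb Z$, and applying $\pi_{B}$ to its new element yields Duplicator's response. Moves in $B$ are answered symmetrically using openness of $\pi_{B}$. Because both projections are pathwise embeddings, the position recorded along the branch is a partial isomorphism at every stage, so Duplicator never loses.

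The step I expect to be the main obstacle is the correct treatment of equality in the \EF and pebble cases, which is precisely why the statement uses the functor $\FI_{k}$, incorporating the diagonal relation $I$, rather than the bare co-free coalgebra $F_{k}$. One must verify that $\pi_{A}$ and $\pi_{B}$ are $I$-morphisms in the sense recalled above, so that the embedding condition faithfully encodes the \emph{injectivity} built into the notion of partial isomorphism: the relation $I$ records exactly when a play repeats an element, and a pathwise embedding is forced to treat such repetitions consistently. Two further points require routine but genuine checking: that the apex $Z$ satisfies the defining axioms of the relevant Eilenberg--Moore category---condition~\ref{i:E-bis}, and in the pebble case also the constraint~\ref{i:P-bis} together with a well-defined pebbling function; and, in the modal case, that the \emph{one-way} structure of the bisimulation game is respected, so that $Z$ is a synchronization tree whose projections follow the accessibility relations, the biconditional in eq.~\eqref{eq:winning-bisim} providing the reflection of unary relations along branches. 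Here no copy of $I$ is needed and one works directly with $F_{k}$.
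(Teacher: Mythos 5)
Your proposal is correct and follows essentially the same route as the proof the paper relies on: the paper itself only cites \cite{AR2021icalp} for Theorem~\ref{th:full-bisimulations}, but the construction it indicates there and in the surrounding discussion (see the proof sketch of Proposition~\ref{p:pos-EF-struct} and the remarks before Proposition~\ref{p:backforth-bisim}) is exactly yours --- the apex of the span is the set of winning positions $W(A,B)$ under a Duplicator winning strategy, equipped with the relational structure induced by the product $\FI_k(A)\times\FI_k(B)$, with the partial-isomorphism winning condition making both projections pathwise embeddings and the back-and-forth closure making them open. The points you flag as requiring routine verification (the $I$-relation encoding injectivity, conditions~\ref{i:E-bis} and~\ref{i:P-bis} on the apex, and the synchronization-tree structure in the modal case) are indeed the only checks needed, and none of them hides an obstruction.
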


\begin{remark}\label{rem:ope-bisim-modal}
Open pathwise embeddings in $\EM(\Mk)$ are (surjective) p-morphisms in the sense of modal logic. Therefore, the last item of Theorem~\ref{th:full-bisimulations} corresponds to the well-known fact that two Kripke models are $\ML_{k}$-equivalent precisely when their $k$-unravellings are bisimilar (in the sense of modal logic).
\end{remark}

\section{Existential fragments and pathwise embeddings}
\label{s:concrete-exs}

Given a logic fragment $\mathbb{L}\subseteq \LL_{\infty,\omega}$, we consider the \emph{existential} fragment
\[
\exists \mathbb{L}
\]  
of $\mathbb{L}$, consisting of formulas with no universal quantifiers (or no modalities $\Box_R$, in the case of modal logic), and negations only applied to atomic formulas. 

Our first main result consists in a comonadic characterisation of preservation of the existential fragments $\EFO_{k}$, $\ELL^{k}$ and $\EML_{k}$ in terms of the existence of a pathwise embedding between the relevant coalgebras. More precisely, we show that:

\begin{theorem}
The following statements hold for all (pointed) structures $A$ and $B$:
\begin{enumerate}[label=(\arabic*)]
\item $A\IMP_{\EFO_{k}} B$ if, and only if, there exists a pathwise embedding $\FI_{k}(A)\to \FI_{k}(B)$ in $\EM(\EkI)$. 
\item $A\IMP_{\ELL^{k}} B$ if, and only if, there exists a pathwise embedding $\FI_{k}(A)\to \FI_{k}(B)$ in $\EM(\PkI)$. 
\item $(A,a)\IMP_{\EML_{k}} (B,b)$ if, and only if, there exists a pathwise embedding $F_{k}(A,a)\to F_{k}(B,b)$ in $\EM(\Mk)$. 
\end{enumerate}
\end{theorem}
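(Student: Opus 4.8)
The plan is to prove each of the three statements by passing through an intermediate game-theoretic characterisation, exactly as the introduction promises: first translate $\IMP_{\exists\mathbb{L}}$ into the existence of a Duplicator winning strategy in a forth-only game with a partial-isomorphism winning condition, and then translate that strategy into a pathwise embedding between the relevant co-free coalgebras. I would handle the three cases in parallel, since the comonadic bookkeeping differs only in which resource (quantifier rank, pebbles, modal depth) indexes the positions, while the structural heart is the same.

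\medskip
\noindent\textbf{Step 1 (logic to games).} First I would prove a variant of the classical \EF/pebble/bisimulation theorems: $A\IMP_{\exists\mathbb{L}}B$ holds if, and only if, Duplicator has a winning strategy in the \emph{forth-only} game in which Spoiler always plays in $A$ (never in $B$), Duplicator responds in $B$, and the winning condition after each round is that the chosen tuple is a partial \emph{isomorphism} (not merely a partial homomorphism). The forward direction is the usual induction on formula structure: the absence of $\forall$ (resp.\ $\Box_R$) corresponds to Spoiler moving only in $A$, while negations restricted to atoms force Duplicator to preserve \emph{and} reflect the atomic relations, which is exactly the partial-isomorphism condition. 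The converse is the standard Hintikka-formula argument, using that each $\exists\mathbb{L}$-fragment has only finitely many formulas up to equivalence at each resource level (as recorded in Section~\ref{s:logic-fragments} for $\FO_k$ and $\ML_k$, and via the finite-variable analysis for $\LL^k$).

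\medskip
\noindent\textbf{Step 2 (games to pathwise embeddings).} Next I would show that a Duplicator winning strategy in this forth-only game is precisely the data of a pathwise embedding $\FI_k(A)\to\FI_k(B)$ (resp.\ $F_k(A,a)\to F_k(B,b)$ in the modal case). A Spoiler play in $A$ is a sequence of elements, \ie an element of the co-free coalgebra $\Ek A$ (or $\Pk A$, or a path in $\Mk(A,a)$); the path embeddings $P\emb \FI_k(A)$ of Definition~\ref{d:pathwise} are exactly the injective, relation-reflecting such plays. A winning strategy assigns to each Spoiler play a response in $B$ that extends coherently (a homomorphism out of the coalgebra) and, because the winning condition is partial isomorphism, sends each such injective relation-reflecting play to an injective relation-reflecting play in $\FI_k(B)$ — which is the assertion that the induced coalgebra morphism is a pathwise embedding. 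Conversely, given a pathwise embedding, reading off its behaviour on path embeddings yields a winning Duplicator strategy. Here the $I$-morphism formalism of the ``equality symbol'' subsection is what makes the partial-isomorphism condition (in particular, respecting equality: distinct pebbled positions must map to distinct images) match the embedding condition in $\EM(\EkI)$ and $\EM(\PkI)$; the modal case needs no $I$ since modal logic is equality-free.

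\medskip
\noindent\textbf{Main obstacle.} I expect the subtle point to be Step 2, and specifically the claim that the \emph{local} condition defining a pathwise embedding — reflecting relations along each single branch — is equivalent to the \emph{global} partial-isomorphism winning condition of the game. The winning condition constrains all pairs of chosen elements simultaneously, whereas a pathwise embedding only constrains tuples lying on a common branch. The reconciliation is condition~\ref{i:E-bis} (and its pebble/modal analogues): in a coalgebra, any tuple of elements related by some $R$ must be pairwise comparable in the forest order, hence lies on a single branch. Thus reflecting relations branch-by-branch already reflects \emph{all} relations, and the local notion captures the global one. I would isolate this as a lemma. A secondary technical nuisance is keeping the pebble-index and equality ($I$-relation) side conditions straight across the $\FI_k$ adjunction, but these are routine once the branch-locality lemma is in place.
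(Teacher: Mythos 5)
Your proposal follows the paper's own two\-/step route exactly: the logic\-/to\-/game equivalences are Propositions~\ref{p:ex-EF-logic}, \ref{p:ex-peb-logic} and~\ref{p:ex-mod-logic}, and the game\-/to\-/pathwise\-/embedding equivalences are Propositions~\ref{p:ex-EF-struct}, \ref{p:ex-peb-struct} and~\ref{p:ex-mod-struct} (instances of the axiomatic Theorem~\ref{th:ex-arbor}), with condition~\ref{i:E-bis} supplying precisely the branch\-/locality observation you isolate as the main obstacle. The only slips are cosmetic: in Step~1 the type/Hintikka argument is what proves that preservation yields a winning strategy while the induction on formula structure proves the converse (you attribute them the other way round), and for $\ELL^{k}$ one relies on infinitary conjunctions of types rather than on finiteness of the fragment, since $\LL^{k}$ has infinitely many formulas up to equivalence.
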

We establish this in two steps:
\begin{enumerate}[label=(\roman*)]
\item\label{i:first-step} first, we show that the existential fragments are captured by \emph{existential} (forth-only) variants of the corresponding games;
\item\label{i:second-step} then we prove that the existence of a Duplicator winning strategy in the existential game is equivalent to the existence of a pathwise embedding between the corresponding co-free coalgebras.
\end{enumerate}

For each of the three logic fragments, item~\ref{i:first-step} relies on a routine argument, so we will only sketch the main ideas and provide detailed proofs in the appendix. Item~\ref{i:second-step} is a special case of a more general result which holds at an axiomatic level, see Section~\ref{s:ex-arbor}, so in this section we will only provide the reader with the concrete intuitions behind the axiomatic arguments, deferring the full proofs to that section.

\subsection{Existential \EF game}
\label{ss:ex-EF}
Let us consider the fragment $\EFO_k$ consisting of existential first-order formulas with quantifier rank at most $k$.
The intuition is that preserving negated atoms is, by contraposition, equivalent to reflecting relations. Thus we define the \emph{existential \EF game} from $A$ to $B$ to be the variant of the \EF game in which Spoiler plays only in $A$ and Duplicator in $B$. The winning condition for Duplicator is the same, namely that the resulting relation, see eq.~\eqref{eq:rel-EF}, is a partial isomorphism between~$A$ and~$B$.

This definition is justified by the following result:

\begin{restatable}{proposition}{propexEFlogic}
\label{p:ex-EF-logic}%
The following statements are equivalent for all structures $A, B$:
\begin{enumerate}[label=(\roman*)]
\item\label{i:exist-pres-EF} $A\IMP_{\EFO_{k}} B$, \ie every sentence of $\EFO_{k}$ that is true in $A$ is true in $B$.
\item\label{i:win-exis-EF} Duplicator has a winning strategy in the $k$-round existential \EF game from $A$ to $B$.
\end{enumerate}
\end{restatable}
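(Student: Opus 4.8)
The plan is to run the classical \Fraisse\ argument in its ``forth-only'' form, establishing the two implications separately; the guiding principle is that a partial isomorphism preserves \emph{and} reflects every literal---atoms, negated atoms, and, crucially, equalities and inequalities---so that its existence captures exactly the atomic behaviour visible to the existential fragment.

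For the implication \ref{i:win-exis-EF} $\Rightarrow$ \ref{i:exist-pres-EF}, I would fix a Duplicator winning strategy $\tau$ and prove, by induction on the structure of $\phi$, the following claim: for every position $(\vec a,\vec b)$ reachable while Duplicator follows $\tau$, and every existential formula $\phi(\vec x)$ whose quantifier rank does not exceed the number $k-|\vec a|$ of remaining rounds, $A\models\phi(\vec a)$ implies $B\models\phi(\vec b)$. When $\phi$ is a literal, one extends $(\vec a,\vec b)$ along $\tau$ to a full $k$-round play, which is a partial isomorphism by hypothesis; since a restriction of a partial isomorphism is again one, the relation $\{(a_i,b_i)\}$ is itself a partial isomorphism and hence preserves $\phi$. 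Conjunctions and disjunctions are immediate from the induction hypothesis. The case $\phi=\exists x\,\psi$ is where the forth-only shape of the game is used: a witness $a\in A$ for $A\models\phi(\vec a)$ is a legal Spoiler move in $A$, and Duplicator's reply $b$ prescribed by $\tau$ yields a reachable position with one fewer remaining round, to which the induction hypothesis applies to give $B\models\psi(\vec b,b)$, hence $B\models\phi(\vec b)$. Instantiating the claim at the empty position yields \ref{i:exist-pres-EF}.

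For the converse, I would build \emph{existential Hintikka formulas}. For a tuple $\vec a=(a_1,\dots,a_n)$ in $A$ set
\[
\theta^{0}_{\vec a}(\vec x)\coloneqq\bigwedge\{\lambda(\vec x)\mid \lambda\text{ a literal and }A\models\lambda(\vec a)\}
\]
and, recursively,
\[
\theta^{m+1}_{\vec a}(\vec x)\coloneqq\theta^{0}_{\vec a}(\vec x)\wedge\bigwedge_{a\in A}\exists x_{n+1}\,\theta^{m}_{(\vec a,a)}(\vec x,x_{n+1}),
\]
keeping only the existential (``forth'') conjunct of the usual Hintikka formula and discarding its universal (``back'') one. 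A routine induction gives $A\models\theta^{m}_{\vec a}(\vec a)$ for all $\vec a,m$, so in particular $A\models\theta^{k}_{\varnothing}$; since $\theta^{k}_{\varnothing}\in\EFO_{k}$ (see below), hypothesis \ref{i:exist-pres-EF} yields $B\models\theta^{k}_{\varnothing}$. The strategy is then extracted from the lemma, proved by induction on $m$: whenever $B\models\theta^{m}_{\vec a}(\vec b)$, Duplicator can answer Spoiler for $m$ further rounds keeping each position a partial isomorphism. Indeed $B\models\theta^{0}_{\vec a}(\vec b)$ says exactly that $\vec a\mapsto\vec b$ is a partial isomorphism, while each conjunct $\exists x_{n+1}\,\theta^{m}_{(\vec a,a)}$ furnishes, for an arbitrary Spoiler move $a\in A$, a response $b\in B$ with $B\models\theta^{m}_{(\vec a,a)}(\vec b,b)$, allowing the induction to continue. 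Applying this at $\vec a=\vec b=\varnothing$ produces a Duplicator winning strategy for the $k$-round existential game.

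I expect the main obstacle to lie not in the games but in the bookkeeping around the fragment $\EFO_{k}$, on two fronts. First, the formulas $\theta^{m}_{\vec a}$ are a priori infinitary, so I would invoke the finiteness of $\sigma$---whence $\LL_k$, and hence its existential fragment $\ELL_{k}$, has only finitely many members up to logical equivalence---to replace each infinite conjunction by a finite one and conclude that $\theta^{k}_{\varnothing}$ is equivalent to a genuine formula of $\EFO_{k}$. Second, and more conceptually, the winning condition is \emph{partial isomorphism} rather than partial homomorphism, which is why $\theta^{0}_{\vec a}$ must record the full literal type---including the inequalities $x_i\neq x_j$ forcing injectivity and the negated relational atoms forcing reflection of relations. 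This is precisely the point at which the \emph{existential} fragment, rather than the existential positive one, is matched by this two-way winning condition.
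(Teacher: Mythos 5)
Your proof is correct and takes essentially the same route as the paper's: the implication from the game to logic is the same structural induction on subformulas, and your existential Hintikka formulas $\theta^{m}_{\vec a}$ are just a packaged form of the paper's $\EFO_{k-i}$-types, your lemma that $B\models\theta^{m+1}_{\vec a}(\vec b)$ lets Duplicator survive another round being exactly the paper's key step that type inclusion propagates along Spoiler's forth moves. Both arguments turn on the same two points you flag at the end: finiteness of $\sigma$ to keep the conjunctions finitary, and the presence of negated literals in the type to force the partial-isomorphism winning condition.
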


\begin{proof}
This is a routine argument. We only sketch a proof, for more details see the appendix.

\ref{i:exist-pres-EF} $\Rightarrow$ \ref{i:win-exis-EF}
For each $\sigma$-structure $C$ and tuple of elements $(c_1,\dots,c_i)$ in $C$ such that $0\leq i \leq k$, consider its \emph{$\EFO_{k}$-type}
\[
\tp_C(c_1,\dots,c_i)\coloneqq \{\phi(x_1,\dots,x_i) \in \EFO_{k-i} \mid C\models \phi (c_1,\dots,c_i)\}.
\] 
One can prove that, for all $i\geq 1$ and all tuples of elements $a_1,\dots,a_{i-1}$ and $b_1,\dots,b_{i-1}$ in $A$ and $B$, respectively, if $\tp_A(a_1,\dots,a_{i-1})\subseteq \tp_B(b_1,\dots,b_{i-1})$ then for each $a_i\in A$ there exists $b_i\in B$ such that $\tp_A(a_1,\dots,a_i)\subseteq \tp_B(b_1,\dots,b_i)$.
In turn, this provides a winning strategy for Duplicator in the $k$-round existential \EF game from $A$ to $B$.

\ref{i:win-exis-EF} $\Rightarrow$ \ref{i:exist-pres-EF}
Let $\phi(\varnothing) \in \exists \FO_k$. Since a formula of quantifier rank $k$ can fruitfully use at most $k$ variables, we can $\alpha$-rename the (bound) variables of $\phi$ to $x_1,\dots,x_k$, and we can do so in such a way that the quantifiers $\exists x_i$ appear in $i$-decreasing order within a quantifier branch (where we identify a formula with its syntax tree). 

One can then show by structural induction that for each subformula $\psi(x_1,\dots,x_i)$ of~$\phi$, if $A\models \psi(a_1,\dots,a_i)$ then $B\models \psi(b_1,\dots,b_i)$ where $[b_1,\dots,b_i]$ is Duplicator's answer to Spoiler's play $[a_1,\dots,a_i]$.
For $\psi=\phi$, we get that $A\models \phi$ implies $B\models \phi$.
\end{proof}

Now let us find a property in $\EM(\EkI)$ corresponding to the existence of a Duplicator winning strategy in the $k$-round existential \EF game from $A$ to $B$. The meaning of having such a strategy is that for any tuple of elements $a_1,\dots,a_i\in A$, one can find a corresponding tuple $b_1,\dots,b_i\in B$ such that the substructures of $A$ and $B$ induced by the $a_j$'s and the $b_j$'s, respectively, are isomorphic. This means that any path embedding $P\emb \FI_k(A)$, with image $[a_1]\cvr \cdots\cvr [a_1,\dots,a_i]$, induces a path embedding $P\emb \FI_k(B)$ with image $[b_1]\cvr \cdots \cvr [b_1,\dots,b_i]$. This defines a cocone of embeddings with vertex $\FI_k(B)$ over the diagram of paths that embed into $\FI_k(A)$. Since the colimit of the latter diagram is $\FI_k(A)$ and the mediating morphisms $\FI_k(A)\to \FI_k(B)$ induced by such cocones of embeddings are precisely the pathwise embeddings, this leads to the following result:

\begin{proposition}
\label{p:ex-EF-struct}
The following statements are equivalent for all structures $A, B$:
\begin{enumerate}[label=(\roman*)]
\item \label{i:ex-EF-path}There exists a pathwise embedding $\FI_k(A)\rightarrow \FI_k(B)$ in $\EM(\EkI)$.
\item \label{i:ex-EF-win} Duplicator has a winning strategy in the $k$-round existential \EF game from $A$ to $B$.
\end{enumerate}
\end{proposition}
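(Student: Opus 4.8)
The plan is to turn both conditions into the same combinatorial datum: a rule assigning to each Spoiler play $[a_1,\dots,a_i]\in\Ek(A)$ a response in $B$, recorded compatibly along the prefix order. First I would spell out, using the concrete description of $\EM(\EkI)$ from Section~\ref{s:coalgebras} together with the treatment of the equality symbol, what a coalgebra morphism $g\colon\FI_k(A)\to\FI_k(B)$ amounts to: by co\-/freeness it corresponds to an $I$-morphism $f\colon\Ek(A)\to B$, and as a forest morphism for the prefix orders it is given explicitly by $g([a_1,\dots,a_i])=[b_1,\dots,b_i]$ with $b_\ell\coloneqq f([a_1,\dots,a_\ell])$. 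Reading $b_\ell$ as Duplicator's response to the play $[a_1,\dots,a_\ell]$, such a $g$ is exactly a Duplicator strategy in the existential \EF game from $A$ to $B$, the prefix\-/compatibility of the responses being built into the forest\-/morphism structure.

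For \ref{i:ex-EF-win}$\Rightarrow$\ref{i:ex-EF-path} I would start from a winning strategy, read off the family $b_\ell$ as above, and define $g$ by the displayed formula. The two things to verify are that $g$ is a coalgebra morphism and that it is a pathwise embedding. Forest\-/preservation is immediate; for the $\sigma^{I}$\-/homomorphism property I would use the crucial feature of the interpretations in $\FI_k$, namely that a tuple lies in $R^{\FI_k(A)}$ (resp.\ in $I^{\FI_k(A)}$) only if its entries are pairwise $\pref$\-/comparable, hence lie on a single branch, so that the relation reduces to $R^A$ (resp.\ to equality) of the last elements. The winning condition --- that $\{(a_\ell,b_\ell)\}$ is a partial isomorphism after each round --- then yields preservation of every $R$ and of $I$ at once.

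For the identification of pathwise embeddings, and hence the converse direction, the key observation is that a path embedding $m\colon P\emb\FI_k(A)$ is precisely the inclusion of a branch $[a_1]\cvr\cdots\cvr[a_1,\dots,a_i]$ with its induced $\sigma^{I}$\-/structure: a forest morphism out of a path sends the root to a root and preserves covers, so its image is such a branch, and the inclusion of a branch with its induced structure always reflects relations and is therefore an embedding. Consequently $g$ is a pathwise embedding if and only if, for every sequence $[a_1,\dots,a_i]$, the restriction of $g$ to the corresponding branch is an embedding. Unwinding the embedding condition on this branch --- injectivity together with reflection of each $R$ and of $I$ --- shows it says exactly that $a_\ell\mapsto b_\ell$ is an isomorphism of induced substructures, i.e.\ that $\{(a_\ell,b_\ell)\}$ is a partial isomorphism. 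This reads in both directions, so from a pathwise embedding one recovers a winning strategy and from a winning strategy the associated $g$ is a pathwise embedding, closing the loop; note that $g$ may well identify elements of distinct branches, which is both consistent with the fact that pathwise embeddings need not be injective and with the absence of any cross\-/play constraint on Duplicator's responses.

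The main obstacle I anticipate is the bookkeeping around the equality relation $I$ rather than any genuine difficulty. One must check carefully that preservation of $I$ corresponds to well\-/definedness of the induced map on last elements (the $I$\-/morphism condition) while \emph{reflection} of $I$ corresponds to its injectivity, so that the two together deliver the full ``partial isomorphism'' clause of the game rather than a mere partial homomorphism; this is exactly the point at which the two\-/way (isomorphism) winning condition of the existential game enters. Conceptually the argument can also be organised through the colimit presentation sketched before the statement --- $\FI_k(A)$ is the colimit of its path embeddings, and pathwise embeddings are precisely the maps induced by cocones of embeddings --- and indeed the whole proposition is an instance of the axiomatic result of Section~\ref{s:ex-arbor}; I would present the concrete branch\-/by\-/branch verification here and defer the general colimit argument to that section.
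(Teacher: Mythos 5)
Your proposal is correct and follows essentially the same route as the paper: the paper's sketch likewise reads off Duplicator's responses as $\epsilon_{J(B)}(f([a_1,\dots,a_n]))$ in one direction and, in the other, builds $f$ from the strategy via $b_\ell=h([a_1,\dots,a_\ell])$, with the full details delegated to the identification of path embeddings with branches and of the embedding condition with the partial-isomorphism winning condition (Theorem~\ref{th:ex-arbor} together with Proposition~\ref{p:ex-arbor-EF}). Your only deviation is cosmetic: you verify directly that the induced map is a $\sigma^{I}$-homomorphism, where the paper instead invokes Theorem~\ref{th:ep-Kleisli-arrows}; your careful separation of preservation versus reflection of $I$ (functionality versus injectivity) matches the computation carried out in the paper's proof of Proposition~\ref{p:ex-arbor-EF}.
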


\begin{proof}
This follows from Theorem~\ref{th:ex-arbor} and Proposition~\ref{p:ex-arbor-EF}, so we only sketch the proof.

\ref{i:ex-EF-path} $\Rightarrow$ \ref{i:ex-EF-win}
Let $f\colon \FI_k(A)\rightarrow \FI_k(B)$ be a pathwise embedding in $\EM(\EkI)$. Then Duplicator has a winning strategy in the $k$-round existential \EF game from $A$ to $B$ defined as follows: if Spoiler has played elements $a_1,\dots,a_n\in A$, in the $n$-th round Duplicator responds with $\epsilon_{J(B)}(f([a_1,\dots,a_n]))\in J(B)$. 

\ref{i:ex-EF-win} $\Rightarrow$ \ref{i:ex-EF-path}
A Duplicator winning strategy for the $k$-round existential \EF game from $A$ to $B$ is in particular a winning strategy for the existential positive game, and so there is an arrow $f\colon \FI_k(A)\rightarrow \FI_k(B)$ in $\EM(\EkI)$ (equivalently, an arrow $F_k(A)\rightarrow F_k(B)$ in $\EM(\Ek)$) by Theorem~\ref{th:ep-Kleisli-arrows}. Explicitly, for any $[a_1,\dots,a_n]\in \FI_k(A)$, 
\[
f([a_1,\dots,a_n])=[h([a_1]),h([a_1,a_2]),\dots,h([a_1,\dots,a_n])]
\]
where $h([a_1,\dots,a_j])\in B$ is Duplicator's answer (according to their winning strategy) to Spoiler's play $[a_1,\dots,a_j]$. The winning condition for the existential game ensure that~$f$ is a pathwise embedding.
\end{proof}

Together, the two previous propositions yield the following characterisation of equivalence in the existential fragment of first-order logic with bounded quantifier rank:
\begin{corollary}
\label{cor:ex-EF-equiv}
Let $A,B$ be arbitrary structures.
Then $A\equiv_{\EFO_k}B$ if, and only if, there exist pathwise embeddings $\FI_k(A)\leftrightarrows \FI_k(B)$ in $\EM(\EkI)$.
\end{corollary}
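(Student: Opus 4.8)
The plan is to obtain Corollary~\ref{cor:ex-EF-equiv} as an immediate consequence of the two preceding propositions, reading off each direction of the biconditional $\IMP_{\EFO_k}$ in both orientations. The key observation is that $\equiv_{\EFO_k}$ is by definition the symmetrization of the preorder $\IMP_{\EFO_k}$, so $A\equiv_{\EFO_k}B$ holds precisely when both $A\IMP_{\EFO_k}B$ and $B\IMP_{\EFO_k}A$ hold.

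First I would unwind the definition: $A\equiv_{\EFO_k}B$ means $A\IMP_{\EFO_k}B$ and $B\IMP_{\EFO_k}A$. By Proposition~\ref{p:ex-EF-logic}, the first conjunct is equivalent to Duplicator having a winning strategy in the $k$-round existential \EF game from $A$ to $B$, and the second conjunct is equivalent to Duplicator having a winning strategy in the game from $B$ to $A$ (applying the proposition with the roles of the two structures swapped). Then by Proposition~\ref{p:ex-EF-struct}, the existence of such a winning strategy from $A$ to $B$ is equivalent to the existence of a pathwise embedding $\FI_k(A)\to\FI_k(B)$ in $\EM(\EkI)$, and likewise the strategy from $B$ to $A$ corresponds to a pathwise embedding $\FI_k(B)\to\FI_k(A)$. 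Conjoining these gives exactly the existence of pathwise embeddings $\FI_k(A)\leftrightarrows\FI_k(B)$, which is the right-hand side of the claimed equivalence.

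The only subtlety to check is that both propositions are genuinely symmetric in $A$ and $B$, \ie that nothing in the setup privileges one structure over the other. Inspecting the statements, the existential \EF game and the notion of pathwise embedding are each defined directionally (from $A$ to $B$), but the propositions are stated for \emph{all} structures $A,B$, so I may simply instantiate them with the pair $(B,A)$ to obtain the reverse-direction equivalences. There is no real obstacle here: the corollary is a formal conjunction of two applications of each proposition, and the proof amounts to little more than tracing the definitions.

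\begin{proof}
By definition, $A\equiv_{\EFO_k}B$ holds if, and only if, both $A\IMP_{\EFO_k}B$ and $B\IMP_{\EFO_k}A$ hold. By Propositions~\ref{p:ex-EF-logic} and~\ref{p:ex-EF-struct} (applied first to the pair $(A,B)$ and then to $(B,A)$), the condition $A\IMP_{\EFO_k}B$ is equivalent to the existence of a pathwise embedding $\FI_k(A)\to\FI_k(B)$, while $B\IMP_{\EFO_k}A$ is equivalent to the existence of a pathwise embedding $\FI_k(B)\to\FI_k(A)$, both in $\EM(\EkI)$. Conjoining these two equivalences yields the claim.
\end{proof}
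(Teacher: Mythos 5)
Your proposal is correct and matches the paper's intent exactly: the corollary is stated as an immediate consequence of Propositions~\ref{p:ex-EF-logic} and~\ref{p:ex-EF-struct}, applied to the pairs $(A,B)$ and $(B,A)$ and conjoined via the definition of $\equiv_{\EFO_k}$ as the symmetrization of $\IMP_{\EFO_k}$. Nothing further is needed.
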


\subsection{Existential pebble game}
The \emph{existential pebble game} from $A$ to $B$ is the variant of the pebble game in which Spoiler plays only in $A$ and Duplicator plays only in $B$. The winning condition for Duplicator remains unchanged and amounts to saying that the final positions of the pebbles form a partial isomorphism between $A$ and $B$.

\begin{restatable}{proposition}{propexpeblogic}
\label{p:ex-peb-logic}
The following statements are equivalent for all structures $A, B$:
\begin{enumerate}[label=(\roman*)]
\item\label{i:exist-pres-pebble} $A\IMP_{\ELL^{k}} B$, \ie every sentence of $\exists \LL^k$ that is true in $A$ is true in $B$.
\item\label{i:win-exis-peb} Duplicator has a winning strategy in the existential $k$-pebble game from $A$ to $B$.
\end{enumerate}
\end{restatable}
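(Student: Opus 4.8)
The plan is to mirror the two-step scheme used for Proposition~\ref{p:ex-EF-logic}, replacing bounded quantifier rank by the bound on the number of variables and replacing tuples by pebble positions. Throughout I record a \emph{position} in a structure $C$ as a partial function $\alpha\colon\{1,\dots,k\}\rightharpoonup C$ specifying which element carries each placed pebble; the free variables associated with $\alpha$ are the $x_j$ with $j\in\dom(\alpha)$. With each position I associate its $\exists\LL^{k}$-type
\[
\tp_C(\alpha)\coloneqq\{\phi\in\exists\LL^{k}\mid \text{$\phi$ has free variables among $x_j$ for $j\in\dom(\alpha)$, and } C\models\phi(\alpha)\}.
\]

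For \ref{i:exist-pres-pebble} $\Rightarrow$ \ref{i:win-exis-peb} I would prove the \emph{forth property}: if $\tp_A(\alpha)\subseteq\tp_B(\beta)$, then for every pebble index $j$ and every $a'\in A$ there is $b'\in B$ with $\tp_A(\alpha[j\mapsto a'])\subseteq\tp_B(\beta[j\mapsto b'])$. Arguing by contradiction, if no such $b'$ exists then for each $b'\in B$ I pick $\psi_{b'}\in\tp_A(\alpha[j\mapsto a'])\setminus\tp_B(\beta[j\mapsto b'])$ and form $\exists x_j\bigwedge_{b'\in B}\psi_{b'}$. Since all the $\psi_{b'}$ use only variables among $x_1,\dots,x_k$, so does this formula, so it lies in $\exists\LL^{k}$; it holds at $\alpha$ in $A$ (witnessed by $a'$), hence by type inclusion at $\beta$ in $B$, yielding a witness $b''$ that in particular satisfies $\psi_{b''}$ at $\beta[j\mapsto b'']$, contradicting the choice of $\psi_{b''}$. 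Starting from the empty position (where $A\IMP_{\ELL^{k}} B$ gives $\tp_A(\varnothing)\subseteq\tp_B(\varnothing)$), the forth property lets Duplicator maintain type inclusion after each move. Since atoms, negated atoms, equalities and their negations all lie in $\exists\LL^{k}$, type inclusion at every position forces the current placement to be a partial isomorphism — relations and equalities are preserved by the atoms and reflected by their negations — which is exactly Duplicator's winning condition.

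For \ref{i:win-exis-peb} $\Rightarrow$ \ref{i:exist-pres-pebble} I would proceed by structural induction on the (well-founded) syntax tree of an $\exists\LL^{k}$ sentence $\phi$, showing that for each subformula $\psi$ and each position reached consistently with Duplicator's strategy, $A\models\psi(\alpha)$ implies $B\models\psi(\beta)$. Atoms and negated atoms are handled by the partial-isomorphism winning condition (preservation gives the atoms, reflection gives the negated atoms); arbitrary conjunctions and disjunctions are immediate from the induction hypothesis; and at $\exists x_j\psi$ a witness $a'$ in $A$ prompts Spoiler to place pebble $j$ on $a'$, whose Duplicator response supplies, via the induction hypothesis at the new position, a witness in $B$. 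Because every branch of the syntax tree is finite, only finitely many rounds are used, which matches the requirement that Duplicator's strategy win after every finite number of rounds.

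The step I expect to require the most care is the forth argument, and specifically the passage from the (now infinite) type to a single formula: this is precisely where the infinitary conjunctions of $\LL^{k}$ are essential, in contrast to the finitary $\FO_k$-types used for the \EF game. The crucial point is that $\exists x_j\bigwedge_{b'}\psi_{b'}$ remains within $k$ variables even though its conjunction is set-sized and typically infinite. The remaining bookkeeping — matching re-used pebbles with re-quantified variables so that the free variables of each subformula coincide with the currently placed pebbles — is routine but must be tracked explicitly; the full details are deferred to the appendix.
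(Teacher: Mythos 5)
Your proposal is correct and follows essentially the same route as the paper's proof: type inclusions maintained via a forth property that exploits the infinitary conjunctions of $\LL^{k}$ for \ref{i:exist-pres-pebble}~$\Rightarrow$~\ref{i:win-exis-peb}, and structural induction on subformulas with Spoiler placing the pebble on the existential witness for the converse. The only (inessential) variations are that you phrase the forth step contrapositively with one separating formula $\psi_{b'}$ per element of $B$ where the paper directly conjoins the whole type, and you carry the position as a partial placement rather than the full move history with last-occurrence indices.
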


\begin{proof}
The proof is similar to that of Proposition~\ref{p:ex-EF-logic}; we outline the main ideas and leave the details to the appendix.

Given pebbles $p_1,\dots, p_n$ and elements $c_1,\dots,c_n$ of a structure $C$, we let $r_1,\dots,r_m\in \{1,\ldots,k\}$ be an enumeration of the $p_i$'s, with the $r_j$'s pairwise distinct, and denote by $l(r_j)$ the index of the last occurrence of pebble $r_j$ in $[(p_1,c_1),\dots,(p_n,c_n)]$.

\ref{i:exist-pres-pebble} $\Rightarrow$ \ref{i:win-exis-EF} The proof is the same, mutatis mutandis, as that of \ref{i:exist-pres-EF} $\Rightarrow$ \ref{i:win-exis-EF} in Proposition~\ref{p:ex-EF-logic}. The difference lies in the notion of type: given pebbles $p_1,\dots, p_n$ and elements $c_1,\dots,c_n\in C$, we define the \emph{$\exists \LL^k$-type} of $(p_1,c_1),\dots,(p_n,c_n)$ by 
\[
\tp_C((p_1,c_1),\dots,(p_n,c_n)) \coloneqq \{\psi(x_{r_1},\dots,x_{r_m})\in \exists \LL^k\mid C\models \psi(c_{l(r_1)},\dots,c_{l(r_m)})\}.
\] 

\ref{i:win-exis-EF} $\Rightarrow$ \ref{i:exist-pres-pebble} Suppose that $\phi\in \LL^k$ is such that $A\models \phi$. 
One can prove, by structural induction, that each subformula $\psi(x_{r_1},\dots,x_{r_m})$ of $\phi$ has the following property: Let $[(p_1,a_1),\dots,(p_n,a_n)]\in \FI_k(A)$ and, for all $i\in\{1,\dots,n\}$, let $b_i$ be Duplicator's response to Spoiler's play $[(p_1,a_1),\dots,(p_i,a_i)]$. 
Then 
\[
A\models \psi (a_{l(r_1)},\dots,a_{l(r_m)}) \ \Longrightarrow \ B\models \psi (b_{l(r_1)},\dots,b_{l(r_m)}).
\]
Taking $\psi=\phi$ yields the desired conclusion.
\end{proof}

As in the case of existential \EF games, the existence of a Duplicator's winning strategy is equivalent to the existence of a pathwise embedding between the relevant co-free coalgebras. We only sketch the proof, since this is a consequence of Theorem~\ref{th:ex-arbor} (in the axiomatic setting) and Proposition~\ref{p:ex-arbor-peb}.

\begin{proposition}
\label{p:ex-peb-struct}
The following statements are equivalent for all structures $A, B$:
\begin{enumerate}[label=(\roman*)]
\item\label{i:ex-pebble-path} There exists a pathwise embedding $\FI_k(A)\rightarrow \FI_k(B)$ in $\EM(\PkI)$.
\item\label{i:ex-pebble-win} Duplicator has a winning strategy in the existential $k$-pebble game from $A$ to $B$.
\end{enumerate}
\end{proposition}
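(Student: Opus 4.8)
The plan is to follow the template of Proposition~\ref{p:ex-EF-struct}, replacing plain sequences by sequences of pebbled moves; equivalently, one could simply invoke the axiomatic Theorem~\ref{th:ex-arbor} together with the verification that $\EM(\PkI)$ satisfies its hypotheses (Proposition~\ref{p:ex-arbor-peb}). I sketch the concrete argument. As usual, the two implications split into a direct translation of a pathwise embedding into a strategy, and a translation of a strategy into an arrow that the winning condition then upgrades to a pathwise embedding.

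For \ref{i:ex-pebble-path} $\Rightarrow$ \ref{i:ex-pebble-win}, suppose $f\colon \FI_k(A)\to \FI_k(B)$ is a pathwise embedding. I would define Duplicator's strategy by reading off the last element of the image sequence: after Spoiler has produced the play $s = [(p_1,a_1),\dots,(p_n,a_n)]\in \FI_k(A)$, Duplicator responds with the element selected by the last move of $f(s)$. Because $f$ is a morphism in $\EM(\PkI)$, it is a forest morphism preserving the pebbling function, so along the single branch $[(p_1,a_1)]\cvr\cdots\cvr s$ its image is a branch $[(p_1,b_1)]\cvr\cdots\cvr[(p_1,b_1),\dots,(p_n,b_n)]$ carrying the same pebble indices; this makes Duplicator's response a legal move placing pebble $p_n$. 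The pathwise-embedding hypothesis forces this branch to be \emph{embedded}, and reflecting the relations (and the $I$-relation) of $\PkI$ along the branch is precisely what guarantees that the currently-pebbled elements form a partial isomorphism. Hence the strategy wins after every number of rounds.

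For \ref{i:ex-pebble-win} $\Rightarrow$ \ref{i:ex-pebble-path}, I observe that a winning strategy in the existential $k$-pebble game is in particular winning in the existential \emph{positive} $k$-pebble game, so Theorem~\ref{th:ep-Kleisli-arrows} yields an arrow $f\colon \FI_k(A)\to \FI_k(B)$ in $\EM(\PkI)$ (equivalently an arrow $F_k(A)\to F_k(B)$ in $\EM(\Pk)$), given explicitly by
\[
f([(p_1,a_1),\dots,(p_n,a_n)]) = [(p_1,b_1),\dots,(p_n,b_n)],
\]
where $b_i$ is Duplicator's response to $[(p_1,a_1),\dots,(p_i,a_i)]$. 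It then remains to promote $f$ to a pathwise embedding. Given a path embedding $m\colon P\emb \FI_k(A)$ with branch image ending in $s$, the composite $f\circ m$ has linear domain and is therefore injective; the fact that the existential (rather than merely existential positive) winning condition requires a partial \emph{isomorphism}, i.e.\ relations are matched in both directions, is exactly what upgrades $f\circ m$ to an embedding. Thus $f$ is a pathwise embedding.

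The step I expect to be the main obstacle is pinning down the correspondence between branch embeddings and the pebble-game winning condition in the presence of pebble reuse and equality. Unlike the \EF game, a single branch may repeat pebble indices, so the ``current position'' is indexed by the live pebbles rather than by the length of the branch. I would need to argue that condition~\ref{i:P} of the pebbling comonad, together with the interpretation of $I$ in $\PkI$ (which detects comparable sequences sharing a last element), isolates exactly the elements currently carrying a pebble; only then does the embedding of the branch substructure of $\PkI$ coincide with the partial-isomorphism condition on the live pebbles. Once this dictionary is fixed, both implications reduce to the same bookkeeping as in the \EF case.
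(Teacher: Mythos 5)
Your proposal matches the paper's proof essentially verbatim: the same strategy-from-last-element-of-$f(s)$ construction for \ref{i:ex-pebble-path}~$\Rightarrow$~\ref{i:ex-pebble-win}, the same reduction via Theorem~\ref{th:ep-Kleisli-arrows} with the same explicit formula for $f$ in the converse, and the same fallback to Theorem~\ref{th:ex-arbor} plus Proposition~\ref{p:ex-arbor-peb} that the paper itself officially uses. The ``main obstacle'' you flag --- relating branch embeddings (including the $I$-relation) to the partial-isomorphism condition on the \emph{live} pebbles under pebble reuse --- is precisely the bookkeeping the paper carries out in Proposition~\ref{p:ex-arbor-peb}, so nothing is missing.
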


\begin{proof}
\ref{i:ex-pebble-path} $\Rightarrow$ \ref{i:ex-pebble-win}
Suppose that $f\colon \FI_k(A)\rightarrow \FI_k(B)$ is a pathwise embedding in $\EM(\PkI)$. Duplicator has a winning strategy in the existential $k$-pebble game from $A$ to $B$ defined as follows: in the $n$-th round, if Spoiler's moves are given by $[(p_1,a_1),\dots,(p_n,a_n)]$, Duplicator responds by placing pebble $p_{n}$ on ${b_n\coloneqq \epsilon_{J(B)} (f([(p_1,a_1),\dots,(p_n,a_n)]))\in J(B)}$.

\ref{i:ex-pebble-win} $\Rightarrow$ \ref{i:ex-pebble-path}
Similar to the proof of \ref{i:ex-EF-win} $\Rightarrow$ \ref{i:ex-EF-path} in Proposition~\ref{p:ex-EF-struct}, a Duplicator winning strategy for the existential $k$-pebble game from $A$ to $B$ is also a winning strategy for the existential positive game. Thus, by Theorem~\ref{th:ep-Kleisli-arrows}, there is an arrow ${f\colon \FI_k(A)\rightarrow \FI_k(B)}$ in $\EM(\PkI)$; equivalently, an arrow ${F_k(A)\rightarrow F_k(B)}$ in $\EM(\Pk)$. We have 
\[
f([(p_1,a_1),\dots,(p_n,a_n)])=[(p_1,b_{1}),\dots,(p_n,b_{n})]
\]
where $b_{j}\coloneqq h([(p_1,a_1),\dots,(p_j,a_j)])\in B$ is Duplicator's answer (according to their winning strategy) to Spoiler's play $[(p_1,a_1),\dots,(p_j,a_j)]$. Again, one can show that the winning condition for the existential game ensures that~$f$ is a pathwise embedding.
\end{proof} 

Combining the previous two propositions, we obtain the following characterisation of equivalence in the existential fragment of $k$-variable (infinitary) logic:
\begin{corollary}
\label{cor:ex-peb-equiv}
Let $A,B$ be arbitrary structures.
Then $A\equiv_{\ELL^k}B$ if, and only if, there exist pathwise embeddings $\FI_k(A)\leftrightarrows \FI_k(B)$ in $\EM(\PkI)$.
\end{corollary}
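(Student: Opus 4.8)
The plan is to combine Propositions~\ref{p:ex-peb-logic} and~\ref{p:ex-peb-struct} with the definition of the equivalence $\equiv_{\ELL^k}$ as the symmetrization of the preorder $\IMP_{\ELL^k}$. By definition, $A\equiv_{\ELL^k} B$ holds precisely when both $A\IMP_{\ELL^k} B$ and $B\IMP_{\ELL^k} A$ hold, so the strategy is to characterise each of these two one-directional statements separately and then conjoin them.

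First I would apply Proposition~\ref{p:ex-peb-logic} in each direction: $A\IMP_{\ELL^k} B$ is equivalent to the existence of a Duplicator winning strategy in the existential $k$-pebble game from $A$ to $B$, and symmetrically $B\IMP_{\ELL^k} A$ is equivalent to the existence of such a strategy from $B$ to $A$. Next I would invoke Proposition~\ref{p:ex-peb-struct}, again in each direction, to translate each of these winning strategies into the existence of a pathwise embedding: a winning strategy from $A$ to $B$ corresponds to a pathwise embedding $\FI_k(A)\to \FI_k(B)$ in $\EM(\PkI)$, and a winning strategy from $B$ to $A$ corresponds to a pathwise embedding $\FI_k(B)\to \FI_k(A)$. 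The conjunction of these two existence statements is exactly the assertion that there exist pathwise embeddings $\FI_k(A)\leftrightarrows \FI_k(B)$, which completes the chain of biconditionals.

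Since the argument is a direct, symmetric chaining of the two preceding biconditionals, there is no substantial obstacle; the corollary follows formally once Propositions~\ref{p:ex-peb-logic} and~\ref{p:ex-peb-struct} are in place, in exact parallel with how Corollary~\ref{cor:ex-EF-equiv} is deduced from Propositions~\ref{p:ex-EF-logic} and~\ref{p:ex-EF-struct} in the \EF setting. The one conceptual point worth flagging is that the forth-only (asymmetric) nature of the existential game forces the two directions to be treated genuinely separately: a pathwise embedding $\FI_k(A)\to \FI_k(B)$ need not admit an inverse, so establishing $\leftrightarrows$ requires both one-directional statements rather than a single symmetric one.
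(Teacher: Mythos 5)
Your proposal is correct and is exactly the paper's argument: the corollary is obtained by applying Propositions~\ref{p:ex-peb-logic} and~\ref{p:ex-peb-struct} in each direction and conjoining, using that $\equiv_{\ELL^k}$ is the symmetrization of $\IMP_{\ELL^k}$. Your remark that the two directions must be handled separately because a pathwise embedding need not be invertible is a sensible observation, consistent with the paper's treatment.
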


\subsection{Existential bisimulation game}
\label{ss:ex-modal}

Analogously to before, define the \emph{existential bisimulation game} as the variant of the bisimulation game where Spoiler plays only in~$A$ and Duplicator plays in~$B$; the winning condition for Duplicator remains the same, \ie for all unary predicates $P$ and all pairs $(a,b)$ of selected elements, $P^{A}(a)\Leftrightarrow P^{B}(b)$.

The proof of the following result, which states that the existential bisimulation game captures the existential fragment of modal logic, is the same, mutatis mutandis, as that of Proposition~\ref{p:ex-EF-logic} and can be found in the appendix.

\begin{restatable}{proposition}{propexmodlogic}
\label{p:ex-mod-logic}
    The following statements are equivalent for all pointed Kripke models $(A,a)$ and $(B,b)$:
\begin{enumerate}[label=(\roman*)]
\item\label{i:EMLk-preserved} $(A,a)\IMP_{\EML_{k}} (B,b)$, \ie for all $\phi(x)\in \EML_k$, if $A\models \phi(a)$ then $B\models \phi (b)$.
\item\label{i:Dup-wins-k-ex-bisim} Duplicator has a winning strategy in the $k$-round existential bisimulation game from $(A,a)$ to $(B,b)$.
    \end{enumerate}
\end{restatable}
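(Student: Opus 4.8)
The plan is to follow the blueprint of Proposition~\ref{p:ex-EF-logic}, replacing the $\EFO_k$-type of a tuple by a \emph{localised} $\EML$-type at a single world, tracked together with a remaining modal-depth budget. For a pointed model $(C,c)$ and $j\in\N$, set
\[
\tp^{j}_C(c)\coloneqq\{\phi\in\EML_j\mid C\models\phi(c)\},
\]
where $C\models\phi(c)$ abbreviates satisfaction at the world $c$ (equivalently $C\models\tr{\phi}_x(c)$ under the standard translation). Since $\sigma$ is a finite modal vocabulary, there are only finitely many $\EML_j$-formulas up to logical equivalence, so each such type is, up to equivalence, a finite set. The local character of the game---Spoiler moves along an accessibility edge out of the current world, so the position is determined by the current pair $(a_i,b_i)$ rather than the full history---makes this single-world type the right invariant to propagate.

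For \ref{i:EMLk-preserved}~$\Rightarrow$~\ref{i:Dup-wins-k-ex-bisim}, I would establish the forth lemma: if $\tp^{j+1}_A(u)\subseteq\tp^{j+1}_B(v)$ and $R^A(u,u')$, then there is $v'$ with $R^B(v,v')$ and $\tp^{j}_A(u')\subseteq\tp^{j}_B(v')$. The argument is the modal analogue of the one in Proposition~\ref{p:ex-EF-logic}: take a conjunction $\chi$ of representatives of $\tp^{j}_A(u')$, which is a formula in $\EML_j$ by finiteness; then $\Diamond_R\chi\in\EML_{j+1}$ holds at $u$, hence at $v$ by the type inclusion, furnishing the required $R$-successor $v'$ of $v$ satisfying $\chi$. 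Duplicator's strategy is then to maintain the invariant $\tp^{k-i}_A(a_i)\subseteq\tp^{k-i}_B(b_i)$, starting from the hypothesis $\tp^{k}_A(a)\subseteq\tp^{k}_B(b)$ (which is exactly \ref{i:EMLk-preserved}). At every position the invariant at depth $\geq 0$ contains, for each unary $P\in\sigma$, both the atom $p$ and its negation $\neg p$---both of which lie in the existential fragment---so it forces the biconditional $P^A(a_i)\Leftrightarrow P^B(b_i)$ of eq.~\eqref{eq:winning-bisim}. Thus the invariant witnesses a winning strategy.

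For \ref{i:Dup-wins-k-ex-bisim}~$\Rightarrow$~\ref{i:EMLk-preserved}, fix a Duplicator winning strategy and prove by structural induction on $\psi\in\EML$ (evaluated at positions reachable during play, with the remaining budget controlling the nesting of modalities) that $A\models\psi(a_i)$ implies $B\models\psi(b_i)$. The atomic and negated-atomic cases are immediate from the biconditional winning condition, and conjunctions and disjunctions are routine. The only interesting case is $\psi=\Diamond_R\theta$: a witnessing $R$-successor $a_{i+1}$ of $a_i$ in $A$ is a legal Spoiler move, whose winning response $b_{i+1}$ is an $R$-successor of $b_i$ in $B$; the induction hypothesis (applied after one further round, hence at strictly smaller modal depth) gives $B\models\theta(b_{i+1})$, whence $B\models\Diamond_R\theta(b_i)$. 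Taking $\psi=\phi$ yields \ref{i:EMLk-preserved}.

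I expect the main obstacle to be bookkeeping rather than conceptual: correctly synchronising the modal-depth budget with the round counter, and making explicit that the two-way winning condition eq.~\eqref{eq:winning-bisim} is precisely what is produced by closing the existential fragment under negation at the atomic level. Both points are the modal reflections of features already present in Proposition~\ref{p:ex-EF-logic} (respectively, the decreasing quantifier-rank budget and the partial-isomorphism rather than partial-homomorphism winning condition), so the adaptation is mechanical once these correspondences are spelled out.
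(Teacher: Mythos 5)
Your proposal is correct and follows essentially the same route as the paper's proof: the forward direction propagates the inclusion of $\EML_j$-types along the play using the formula $\Diamond_R\bigwedge\tp^j_A(u')$ (with the two-way atomic winning condition recovered from the presence of negated atoms in the types), and the converse direction is the same structural induction with the $\Diamond_R$ case handled by playing the witness as a Spoiler move. The only differences are cosmetic: you index types by the remaining depth budget rather than the round number, and you make explicit the finiteness of $\EML_j$ up to equivalence, which the paper leaves implicit.
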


Similarly as before, the existence of a Duplicator winning strategy in the existential bisimulation game turns out to be equivalent to the existence of a pathwise embedding between the co-free coalgebras, which in this case are tree unravellings. We omit the proof of this fact as it is a consequence of Theorem~\ref{th:ex-arbor} and Proposition~\ref{p:ex-arbor-mod}.

\begin{proposition}
\label{p:ex-mod-struct}
    The following statements are equivalent for all pointed Kripke models $(A,a)$ and $(B,b)$: 
\begin{enumerate}[label=(\roman*)]
\item There exists a pathwise embedding $F_k(A,a)\rightarrow F_k(B,b)$ in $\EM(\Mk)$.
\item Duplicator has a winning strategy in the $k$-round existential bisimulation game from $(A,a)$ to $(B,b)$.
\end{enumerate}
    
\end{proposition}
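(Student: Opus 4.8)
The plan is to mirror the two-step structure of Propositions~\ref{p:ex-EF-struct} and~\ref{p:ex-peb-struct}, using the concrete description of $\EM(\Mk)$ as synchronization trees from Section~\ref{s:coalgebras}. First I would record what the relevant data mean concretely. The co-free coalgebra $F_k(A,a)$ is the $k$-unravelling $\Mk(A,a)$, whose branches are exactly the paths $a = a_0 \xrightarrow{R_1} \cdots \xrightarrow{R_i} a_i$ in $A$ of length $\leq k$ starting from $a$. By condition~\ref{i:M-bis}, each covering pair in a synchronization tree carries a unique relation symbol, so a path embedding $P \emb F_k(A,a)$ is precisely the choice of such a path together with the requirement that the embedding reflect the unary predicates; equivalently, it records a legal sequence of Spoiler moves in the existential bisimulation game. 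The crucial point to keep in mind throughout is that, because path embeddings have linear (hence injective) domains and synchronization-tree morphisms preserve the unique relation label on each cover, the only substantive content in ``$f \circ m$ is an embedding'' is the reflection of the unary predicates.

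For the implication (i) $\Rightarrow$ (ii), I would let $f \colon F_k(A,a) \to F_k(B,b)$ be a pathwise embedding and define Duplicator's strategy exactly as in the earlier propositions: if Spoiler has built the path $a_0 \xrightarrow{R_1} \cdots \xrightarrow{R_n} a_n$ in $A$, Duplicator responds with $\epsilon_{(B,b)}\bigl(f(a_0 \xrightarrow{R_1} \cdots \xrightarrow{R_n} a_n)\bigr) \in B$. Since $f$ is a homomorphism of Kripke models it preserves the binary relations, so the image is a genuine path $b = b_0 \xrightarrow{R_1} \cdots \xrightarrow{R_n} b_n$ in $B$ along the \emph{same} relation symbols; this already guarantees that each Duplicator move is legal and supplies one direction of the predicate biconditional, $P^A(a_j) \Rightarrow P^B(b_j)$. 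The pathwise hypothesis contributes exactly the missing half: applying $f$ to the path embedding determined by Spoiler's play yields a path embedding into $F_k(B,b)$, and since an embedding reflects relations we also get $P^B(b_j) \Rightarrow P^A(a_j)$. Taking $j$ over all rounds (including $j=0$, where the trivial path forces $f((a)) = (b)$) establishes Duplicator's winning condition $P^A(a_j) \Leftrightarrow P^B(b_j)$.

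For (ii) $\Rightarrow$ (i), I would argue as in the pebble case. A Duplicator winning strategy for the existential bisimulation game is in particular a winning strategy for the existential positive bisimulation game, whose one-directional winning condition is weaker, so the third clause of Theorem~\ref{th:ep-Kleisli-arrows} supplies an arrow $f \colon F_k(A,a) \to F_k(B,b)$ in $\EM(\Mk)$. Concretely, $f$ sends the path $a_0 \xrightarrow{R_1} \cdots \xrightarrow{R_n} a_n$ to $b_0 \xrightarrow{R_1} \cdots \xrightarrow{R_n} b_n$, where each $b_j$ is Duplicator's response. It then remains to check that $f$ is a pathwise embedding, i.e.\ that its restriction to each branch is an embedding: the relation labels agree by construction, since Duplicator answers along the relation Spoiler chose, while the two-way winning condition $P^A(a_j) \Leftrightarrow P^B(b_j)$ provides exactly the reflection of the unary predicates needed for the composite to reflect relations.

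I expect the only real obstacle to be bookkeeping rather than conceptual: one must verify carefully that a path embedding into a synchronization tree is the same datum as a legal Spoiler play, and that reflecting the interpretations of the relation symbols of the modal vocabulary $\sigma$ decomposes into (a) matching the binary relation labels, which is automatic from condition~\ref{i:M-bis} together with the game rule, and (b) the biconditional on unary predicates, which is the genuinely substantive half of the winning condition. Because the modal comonad $\Mk$ requires no equality relation, the argument works with $F_k$ directly and avoids the $I$-morphism subtleties present in the $\EF$ and pebble cases, so it is in fact slightly cleaner than Propositions~\ref{p:ex-EF-struct} and~\ref{p:ex-peb-struct}.
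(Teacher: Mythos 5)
Your proof is correct. Note, however, that the paper does not prove Proposition~\ref{p:ex-mod-struct} directly at all: it is obtained as a corollary of the axiomatic Theorem~\ref{th:ex-arbor} (pathwise embeddings in an arboreal category correspond to Duplicator winning strategies in the existential arboreal game) together with Proposition~\ref{p:ex-arbor-mod} (the existential arboreal game in $\EM(\Mk)$ coincides with the concrete existential bisimulation game). What you give instead is a self-contained concrete argument, which in structure exactly mirrors the sketches the paper provides for the \EF and pebble analogues (Propositions~\ref{p:ex-EF-struct} and~\ref{p:ex-peb-struct}): forward direction by composing $f$ with the counit and using that embeddings reflect relations; backward direction by first invoking Theorem~\ref{th:ep-Kleisli-arrows}(iii) to obtain the underlying coalgebra morphism from the (weaker) existential positive strategy, and then checking on each branch that the two-way predicate condition forces reflection. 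The two routes buy different things: the paper's derivation reuses the general machinery and localises all the case-specific work in the game translation of Proposition~\ref{p:ex-arbor-mod}, whereas your argument is elementary and makes explicit the key structural point that in a synchronization tree condition~\ref{i:M-bis} renders the binary relations along a branch automatic (labels and injectivity come for free from cover-preservation), so that the embedding condition reduces entirely to the biconditional on unary predicates. Your observation that the modal case avoids the $I$-morphism bookkeeping is also accurate. The one step worth flagging is the passage from ``Duplicator wins the existential positive game'' to ``there is an arrow $F_k(A,a)\to F_k(B,b)$'': Theorem~\ref{th:ep-Kleisli-arrows} is stated in terms of the logic fragment rather than the game, so strictly you are also using the (standard, and implicitly assumed by the paper in its sketch of Proposition~\ref{p:ex-EF-struct}) equivalence between the existential positive game and existential positive preservation; this is not a gap so much as an inherited implicit dependency.
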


Combining these two propositions yields a characterisation of equivalence in the existential fragment of modal logic with bounded depth:
\begin{corollary}
\label{cor:ex-modal-equiv}
Let $(A,a),(B,b)$ be arbitrary pointed Kripke models.
Then $A\equiv_{\EML_k}B$ if, and only if, there exist pathwise embeddings $F_k(A,a)\leftrightarrows F_k(B,b)$ in $\EM(\Mk)$.
\end{corollary}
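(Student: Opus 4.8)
The plan is to derive this corollary by simply combining Propositions~\ref{p:ex-mod-logic} and~\ref{p:ex-mod-struct} and unfolding the definition of $\equiv_{\EML_k}$ as the symmetrization of the preorder $\IMP_{\EML_k}$. All of the substantive work has already been done in those two propositions, so what remains is purely a matter of tracking directions.

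First I would recall that, by definition of symmetrization, $(A,a)\equiv_{\EML_k}(B,b)$ holds if, and only if, both $(A,a)\IMP_{\EML_k}(B,b)$ and $(B,b)\IMP_{\EML_k}(A,a)$ hold. Next I would apply Proposition~\ref{p:ex-mod-logic} to each of these two conditions separately: $(A,a)\IMP_{\EML_k}(B,b)$ is equivalent to Duplicator having a winning strategy in the $k$-round existential bisimulation game from $(A,a)$ to $(B,b)$, and symmetrically $(B,b)\IMP_{\EML_k}(A,a)$ is equivalent to a Duplicator winning strategy in the game from $(B,b)$ to $(A,a)$.

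Then I would feed each of these two game-theoretic conditions into Proposition~\ref{p:ex-mod-struct}: the existence of a Duplicator winning strategy from $(A,a)$ to $(B,b)$ is equivalent to the existence of a pathwise embedding $F_{k}(A,a)\to F_{k}(B,b)$ in $\EM(\Mk)$, and the existence of a winning strategy from $(B,b)$ to $(A,a)$ is equivalent to a pathwise embedding $F_{k}(B,b)\to F_{k}(A,a)$. Since the notation $F_{k}(A,a)\leftrightarrows F_{k}(B,b)$ abbreviates precisely the simultaneous existence of pathwise embeddings in both directions, chaining these equivalences yields the claim.

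The only point requiring any care — and it is the closest thing to an obstacle, though a mild one — is to confirm that the two applications of each proposition, taken with the roles of the two structures swapped, are genuinely independent and that the symbol $\leftrightarrows$ is read as the conjunction of the two one-directional statements rather than as a single two-way morphism; once this is made explicit, the argument is a direct transcription of the proof of Corollary~\ref{cor:ex-EF-equiv}.
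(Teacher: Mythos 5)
Your proposal is correct and matches the paper's own derivation exactly: the corollary is obtained by combining Propositions~\ref{p:ex-mod-logic} and~\ref{p:ex-mod-struct} applied in both directions, with $\equiv_{\EML_k}$ unfolded as the symmetrization of $\IMP_{\EML_k}$ and $\leftrightarrows$ read as the conjunction of the two one-directional existence claims. No further comment is needed.
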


\section{Positive fragments and positive bisimulations}
\label{s:concrete-pos}

For any fragment $\mathbb{L}\subseteq \LL_{\infty,\omega}$, the \emph{positive} fragment
\[
\prescript{+}{}{\mathbb{L}}
\] 
of $\mathbb{L}$ consists of the formulas that do not contain the negation symbol.

The purpose of this section, in a similar spirit to the previous one, is to give a comonadic characterisation of the preservation of the positive fragments $\PFO_k $, $\PLLk$ and $\PML_k$. This requires a notion of \emph{positive bisimulation}, the intuition of which is described in Section~\ref{ss:pos-EF} in the case of \EF games.

\begin{definition}\label{d:pos-bisim-concrete}
Let $\A$ be one of the categories $\EM(\Ek)$, $\EM(\Pk)$ or $\EM(\Mk)$. If $X,Y$ are objects of $\A$, a \emph{positive bisimulation} from $X$ to $Y$ is a diagram
\[
    \begin{tikzcd}
    Z_1\arrow{d}[swap]{p} \arrow{r}{h} & Z_2\arrow{d}{q} \\
    X & Y
    \end{tikzcd}
\]
in $\A$ such that $p,q$ are open pathwise embeddings and $h$ is a bijection. We write $X\pbisim Y$ to denote the existence of a positive bisimulation from $X$ to $Y$.
\end{definition}

Positive bisimulations can be regarded as spans by considering the morphisms $p$ and $q\circ h$. For a characterisation of these spans in the axiomatic setting, see Remark~\ref{rem:positive-bisim-as-spans}.

\begin{remark}
The difference between the notions of bisimulation (Definition~\ref{d:bisimulation}) and positive bisimulation lies in the fact that the arrow $h$ in Definition~\ref{d:pos-bisim-concrete} need not be an isomorphism, as it may not be an embedding of structures. Intuitively, we can think of~$Z_{1}$ and~$Z_{2}$ as having the same universe, with~$Z_{2}$ having ``more relations'' than~$Z_{1}$.
\end{remark}

Our second main result can be stated as follows:

\begin{theorem}\label{t:positive-concrete-charact}
The following statements hold for all (pointed) structures $A$ and $B$:
    \begin{enumerate}[label=(\arabic*)]
        \item $A\IMP_{\PFO_k}B $ if, and only if, $\FI_k(A) \pbisim \FI_k(B)$ in $\EM(\EkI)$.
        \item $A\IMP_{\PLLk}B $ if, and only if, $\FI_k(A) \pbisim \FI_k(B)$ in $\EM(\PkI)$.
        \item\label{i:positive-modal-char} $(A,a)\IMP_{\PML_k}(B,b) $ if, and only if, $F_k(A,a) \pbisim F_k(B,b)$ in $\EM(\Mk)$.
    \end{enumerate}
\end{theorem}

As in Section~\ref{s:concrete-exs}, for each of the three logic fragments, we establish this in two steps: 

\begin{enumerate}[label=(\roman*)]
\item first, we show that the positive fragments are captured by \emph{positive} variants of the corresponding games;
\item second, we prove that the existence of a Duplicator winning strategy in the positive game is equivalent to the existence of a positive bisimulation between the corresponding co-free coalgebras.
\end{enumerate}

Again, item~\ref{i:first-step} relies on a routine argument which is spelled out, for each of the three logic fragments, in the appendix. Item~\ref{i:second-step} is a special case of a general axiomatic result which will be established in Section~\ref{s:pos-arbor}. In the remainder of this section, we will provide a concrete intuition for item~\ref{i:second-step} in the case of \EF games, while in the other two cases we will state the main observations without proof and indicate how they can be deduced from later results.

\subsection{Positive \EF game}
\label{ss:pos-EF}
We consider the positive fragment $\PFO_k$ of $\FO_k$, consisting of formulas with quantifier rank at most $k$ that do not contain the negation symbol (but may contain both existential and universal quantifiers). 

The intuition is to define a game dual to the one in Section~\ref{s:concrete-exs}: the \emph{positive \EF game} from $A$ to $B$ is the variant of the \EF game where Spoiler can play in either $A$ or $B$, Duplicator answers in the other structure, and the winning condition for Duplicator, after $k$ rounds, is that the ensuing relation given by the $k$ pairs of chosen elements is a partial homomorphism from $A$ to $B$.
Indeed, we have

\begin{restatable}{proposition}{propposEFlogic}
\label{p:pos-EF-logic}
The following conditions are equivalent: 
\begin{enumerate}[label=(\roman*)]
\item\label{i:PFOk-preserv} $A\IMP_{\PFO_k}B $, \ie every sentence of $\PFO_k$ that is true in $A$ is true in $B$.
\item\label{i:Dupl-wins-positive-k-game} Duplicator has a winning strategy in the positive $k$-round \EF game from $A$ to $B$.
\end{enumerate}
\end{restatable}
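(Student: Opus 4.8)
The plan is to prove the two implications separately, in close analogy with Proposition~\ref{p:ex-EF-logic}, the one genuinely new feature being that the positive game is back-and-forth (Spoiler may move in either structure) while its winning condition is one-sided. Accordingly, forth moves (Spoiler in $A$, Duplicator in $B$) will encode existential quantifiers, back moves (Spoiler in $B$, Duplicator in $A$) will encode universal quantifiers, and the absence of negation means that the winning relation need only \emph{preserve} atoms from $A$ to $B$ rather than reflect them, which is exactly the partial homomorphism condition.

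For \ref{i:PFOk-preserv} $\Rightarrow$ \ref{i:Dupl-wins-positive-k-game}, I would introduce, for a structure $C$ and a tuple $(c_1,\dots,c_i)$ with $0\leq i\leq k$, the \emph{positive type}
\[
\tp^+_C(c_1,\dots,c_i)\coloneqq\{\psi(x_1,\dots,x_i)\in \PFO_{k-i}\mid C\models\psi(c_1,\dots,c_i)\},
\]
and let Duplicator play so as to maintain the invariant $\tp^+_A(a_1,\dots,a_i)\subseteq\tp^+_B(b_1,\dots,b_i)$. The base case $i=0$ is precisely hypothesis~\ref{i:PFOk-preserv}. For the inductive step I use that, since $\sigma$ is finite and the quantifier rank is bounded, there are only finitely many positive formulas up to logical equivalence: on a forth move I collect the witnessing positive formulas and take their finite \emph{conjunction}, fed through an existential quantifier; on a back move I take instead a finite \emph{disjunction} under a universal quantifier. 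In each case the invariant forces the required response to exist. After $k$ rounds the invariant restricts to inclusion of quantifier-free positive types, which (counting the equality atoms $x_i = x_j$ together with the relational atoms) is exactly the statement that $\{(a_i,b_i)\}$ is a partial homomorphism from $A$ to $B$.

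For \ref{i:Dupl-wins-positive-k-game} $\Rightarrow$ \ref{i:PFOk-preserv}, I would fix a Duplicator winning strategy and, exactly as in Proposition~\ref{p:ex-EF-logic}, $\alpha$-rename a given sentence $\phi\in\PFO_k$ so that a subformula occurring under $i$ quantifiers has free variables among $x_1,\dots,x_i$, matched to the first $i$ elements played. I then prove by structural induction that for every such subformula $\psi(x_1,\dots,x_i)$ and every position $(a_1,b_1),\dots,(a_i,b_i)$ reachable by a play consistent with the strategy, $A\models\psi(a_1,\dots,a_i)$ implies $B\models\psi(b_1,\dots,b_i)$. Atoms are handled by the partial homomorphism winning condition (using that the length-$i$ relation is a restriction of the winning length-$k$ relation, hence again a partial homomorphism); conjunctions and disjunctions are immediate; an existential subformula is discharged by simulating a forth move, and a universal one by simulating a back move and appealing to the inductive hypothesis for \emph{every} Spoiler choice in $B$. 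Taking $\psi=\phi$ yields $A\models\phi\Rightarrow B\models\phi$.

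The main obstacle I anticipate is the universal / back-move case, where the asymmetry of the set-up must be kept straight: the type invariant and the target implication both point from $A$ to $B$, yet the witnessing move is made in $B$, so the finite-disjunction argument in the back step of the first implication is the delicate point to get right. A second point requiring care is the treatment of equality: one must check that preservation of the equality atoms is precisely what makes the winning relation a well-defined partial function (hence a genuine partial homomorphism), rather than merely a relation preserving the non-equality atoms.
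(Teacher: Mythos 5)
Your proposal is correct and follows essentially the same route as the paper's proof: the same positive-type inclusion invariant $\tp^+_A(a_1,\dots,a_i)\subseteq\tp^+_B(b_1,\dots,b_i)$, the forth step discharged by an existentially quantified conjunction of the witnessing positive formulas, the back step by a universally quantified disjunction over the complement of Duplicator's type (the point you rightly flag as delicate), and the converse by the same structural induction matching $\exists$ to forth moves and $\forall$ to back moves. Your explicit remarks on finiteness of conjunctions/disjunctions and on the role of the equality atoms in making the winning relation functional (not injective) match the paper's treatment.
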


The proof of Proposition~\ref{p:pos-EF-logic} uses the same strategy as the proof of Proposition~\ref{p:ex-EF-logic}, and shows that occurrences of the quantifier $\forall$ correspond to Spoiler's plays in~$B$. A detailed proof of this result can be found in the appendix.

The step into the structural world is more delicate than in the existential case. To illustrate the intuition, we need to recall the case of the full fragment $\FO_k$. In this case, the existence of a Duplicator's winning strategy amounts to saying that, given two isomorphic paths in $\FI_k(A)$ and $\FI_k(B)$, any extension of one of these paths in $A$ or in $B$ can be extended in the other structure while preserving the isomorphism.
Since the two paths are isomorphic, they can be regarded as one and the same path $P$. Thus, the situation can be represented by a span of embeddings
\begin{center}
    \begin{tikzcd}
      \FI_k(A)  &  P\arrow[l,tail] \arrow[r,tail] & \FI_k(B)
    \end{tikzcd}
\end{center}
where any extension of the image of $P$ in either $\FI_k(A)$ or $\FI_k(B)$ can be extended in the other structure to yield a similar span. Taking the ``union'' of such paths, we get a span
\begin{center}
    \begin{tikzcd}
     \FI_k(A) &  Z\arrow[l] \arrow[r] & \FI_k(B)
         \end{tikzcd}
\end{center}
where the two arrows are open pathwise embeddings \cite{AS2021}. Such a span is called a \emph{bisimulation} between $\FI_k(A)$ and $\FI_k(B)$.

In the positive case, by contrast, the two paths are no longer isomorphic, but there is a bijective homomorphism $\o{h}$ from the first to the second:

\begin{center}
    \begin{tikzcd}
       P_1\arrow[d,tail]\arrow[r,"\o{h}"]& P_2\arrow[d,tail]& \\
        \FI_k(A) &  \FI_k(B)
    \end{tikzcd}
\end{center}
Taking the union of these paths this time leads to a positive bisimulation from $\FI_k(A)$ to $\FI_k(B)$ in the sense of Definition~\ref{d:pos-bisim-concrete}. Thus, 

\begin{proposition}\label{p:pos-EF-struct}
There exists a positive bisimulation from $\FI_k(A)$ to $\FI_k(B)$ in $\EM(\EkI)$ if, and only if, Duplicator has a winning strategy in the positive $k$-round \EF game from $A$ to $B$.
\end{proposition}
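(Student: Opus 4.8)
The plan is to prove the equivalence in two directions, exploiting the characterisation of Duplicator strategies in the full $\FO_k$ case (where a bisimulation is a span of open pathwise embeddings, Definition~\ref{d:bisimulation}) and adapting it to the partial-homomorphism winning condition of the positive game. The crucial structural difference, as the intuition in Section~\ref{ss:pos-EF} indicates, is that in the positive game the two matched paths need no longer be isomorphic, but only linked by a bijective homomorphism. First I would set up the objects carefully: a play of the positive $k$-round \EF game is a pair of sequences $[a_1,\dots,a_i]$ in $A$ and $[b_1,\dots,b_i]$ in $B$ of equal length, and the winning condition says that the map $a_j\mapsto b_j$ is a partial homomorphism $A\to B$, i.e.\ it preserves relations (but need not reflect them). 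The point is that such a matched play determines a single abstract path $P$ together with two coalgebra maps $P\to \FI_k(A)$ and $P\to\FI_k(B)$; because the winning condition is only one-way, the path embedding into $\FI_k(A)$ and the one into $\FI_k(B)$ carry \emph{different} relational structure, which is exactly what forces the bijection $h$ rather than an isomorphism in Definition~\ref{d:pos-bisim-concrete}.

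For the direction from a winning strategy to a positive bisimulation, I would build the diagram $Z_1\xrightarrow{h}Z_2$ with $p\colon Z_1\emb\FI_k(A)$ and $q\colon Z_2\emb\FI_k(B)$ by taking the ``union'' of all matched plays consistent with the strategy, as sketched in the excerpt. Concretely, one lets $Z_1$ be the sub-coalgebra of $\FI_k(A)$ consisting of all Spoiler-in-$A$ plays reachable under the strategy (together with the Spoiler-in-$B$ responses pulled back to $A$), and dually $Z_2$ the corresponding sub-coalgebra of $\FI_k(B)$; the strategy, being a consistent total matching of plays of equal length, induces a length-preserving bijection $h\colon Z_1\to Z_2$ on underlying elements. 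The back-and-forth nature of the positive game — Spoiler may play in either structure — is precisely what makes both $p$ and $q$ \emph{open}: an extension of a path on either side can be matched on the other, which is the path-lifting property of Definition~\ref{d:open}. That the two maps are pathwise embeddings follows because each individual matched play induces a path embedding into $\FI_k(A)$ (resp.\ $\FI_k(B)$) by the argument already used in the proof of Proposition~\ref{p:ex-EF-struct}.

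For the converse, given a positive bisimulation $Z_1\xrightarrow{h}Z_2$ with $p,q$ open pathwise embeddings, I would read off a Duplicator strategy directly. When Spoiler plays in $A$, extending a current matched position, openness of $p$ lets Duplicator lift the move to $Z_1$, transport it across the bijection $h$ to $Z_2$, and push it down along $q$ into $B$; when Spoiler plays in $B$, the symmetric argument using openness of $q$ and the inverse bijection $h^{-1}$ applies. The fact that $p$ and $q$ are pathwise embeddings guarantees that at each stage the selected tuples induce genuine substructures in $A$ and $B$ matching the relational content of the common path, and the direction of $h$ (a homomorphism, not necessarily an embedding) is exactly what yields a partial \emph{homomorphism} $A\to B$ as the winning condition rather than a partial isomorphism.

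The main obstacle I anticipate is making the ``union of matched plays'' construction precise so that $Z_1$ and $Z_2$ are honestly objects of $\EM(\EkI)$ — that is, forest-ordered structures of height $\leq k$ satisfying the Gaifman-compatibility condition~\ref{i:E-bis} — and that the bijection $h$ is well defined and a homomorphism rather than merely a family of matchings on individual branches. One must check that different plays sharing a common prefix are glued consistently (so that the result is a forest, with the strategy being prefix-closed and deterministic) and that the relational interpretations on $Z_1$ and $Z_2$ inherited from $\FI_k(A)$ and $\FI_k(B)$ are related by $h$ as a homomorphism. Since this gluing argument and the verification that openness corresponds exactly to the back-and-forth condition are carried out abstractly in Section~\ref{s:pos-arbor}, I would, as the authors do, defer the detailed verification to the axiomatic treatment and here only record the concrete correspondence, treating the present proposition as an instance of the forthcoming general result.
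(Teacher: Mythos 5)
Your second direction (from a positive bisimulation to a winning strategy) is essentially the paper's argument: lift Spoiler's move along the open leg on the relevant side, transport it across $h$ (or its inverse, which is again a forest morphism because $h$ is a bijective forest morphism), and push it down the other leg; the winning condition then follows because $p$ reflects relations along paths while $q\circ h$ preserves them. The genuine gap is in the other direction, in your construction of $Z_1$ and $Z_2$. You propose to take $Z_1$ to be a sub\-coalgebra of $\FI_k(A)$ (the reachable plays) and $Z_2$ a sub\-coalgebra of $\FI_k(B)$, with $h$ the ``length-preserving bijection on underlying elements'' induced by the strategy. But the strategy does not induce a bijection between reachable $A$-sequences and reachable $B$-sequences: since Spoiler may choose at each round which structure to play in, the same sequence $[a_1,\dots,a_i]$ can occur in matched plays $([a_1,\dots,a_i],t)$ and $([a_1,\dots,a_i],t')$ with $t\neq t'$ (for instance, $a_1$ played by Spoiler versus $a_1$ arising as Duplicator's response to some $b_1$), so the matching is a relation rather than a function, and the sets of reachable $A$- and $B$-sequences need not even have the same cardinality. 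The paper avoids this by taking the \emph{common} universe $W(A,B)$ of matched pairs $(s,t)$, defining two different $\sg^{I}$-structures on it, letting $h$ be the identity of $W(A,B)$, and letting $p,q$ be the two projections --- which are then typically non-injective, which is precisely why Definition~\ref{d:pos-bisim-concrete} only asks $p,q$ to be open \emph{pathwise} embeddings rather than embeddings.

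A second, smaller point: the relational structure on $Z_1$ cannot simply be ``inherited from $\FI_k(A)$''. The paper sets $R^{Z_1}((s_1,t_1),\dots,(s_n,t_n))$ to hold iff the $t_i$ are pairwise comparable \emph{and} $R^{\FI_k(A)}(s_1,\dots,s_n)$ (and dually for $Z_2$); the extra comparability clause is what makes $Z_1$ satisfy condition~\ref{i:E-bis} with respect to the product order and makes the identity a homomorphism $Z_1\to Z_2$. Your closing move of deferring the verifications to the axiomatic Theorem~\ref{th:game-bisim} is legitimate --- the paper does the same --- but note that the axiomatic construction likewise does not produce subobjects of $X$ and $Y$: it builds $Z_1,Z_2$ as subobjects of the objects $X^0,Y^0$ obtained by factoring the projections of $X\times Y$ (Proposition~\ref{p:fact-quo-pw}), which is exactly the abstract counterpart of working over the set of matched pairs rather than inside $\FI_k(A)$.
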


\begin{proof}
The statement is a consequence of Theorem~\ref{th:game-bisim}, in the axiomatic setting, and Proposition~\ref{p:pos-arbor-EF}. Nevertheless, we sketch a proof in this concrete setting because it provides useful insights for understanding the proofs of the axiomatic results.

Suppose there exists a positive bisimulation from $\FI_k(A)$ to $\FI_k(B)$ as displayed below.
\[
    \begin{tikzcd}
    Z_1\arrow{d}[swap]{p} \arrow{r}{h} & Z_2\arrow{d}{q} \\
    \FI_k(A) & \FI_k(B)
    \end{tikzcd}
\]
To construct a Duplicator's winning strategy we use the following easy observation, the proof of which is left to the reader. 
\begin{fact}\label{fact:chains}
Let $z_1\cvr \cdots\cvr z_i$ and $z'_1\cvr \cdots\cvr z'_i$, with $i\geq 0$, be chains in $Z_1$ and $Z_2$, respectively, such that $h (z_j)=z'_j$. For any $s_{i+1}$ (resp.\ $t_{i+1}$) such that $p(z_i)\cvr s_{i+1}$ (resp.\ $q(z_i)\cvr  t_{i+1}$), there exist $z_{i+1}$ and $z'_{i+1}$ such that $z_i\cvr z_{i+1}$ and $z'_{i}\cvr z'_{i+1}$ (if $i=0$, $z_1$ and $z'_{1}$ are roots), with $z'_{i+1}=h(z_{i+1})$ and $p(z_{i+1})=s_{i+1}$ (resp.\ $q(z'_{i+1})=t_{i+1}$).
\end{fact}

Assume that after $i$ rounds the chosen elements are $[a_1,\dots,a_i],[b_1,\dots,b_i]$ inducing, inductively, chains $z_1\cvr \cdots\cvr z_i$ and $z'_1\cvr \cdots\cvr z'_i$ in $Z_1$ and $Z_2$, respectively, such that for all $j$ we have $h (z_j)=z'_j$,  $p(z_j)=[a_1,\dots,a_j]$ and $q(z'_j)=[b_1,\dots,b_j]$. We define Duplicator's strategy as follows:
\begin{itemize}
    \item If Spoiler plays $a_{i+1}$, consider the sequences $s_j=[a_1,\dots,a_j]$ and use Fact~\ref{fact:chains} to build $z_{i+1},z'_{i+1}$. Duplicator's answer is $b_{i+1}\coloneqq \epsilon_Bq(z'_{i+1})$.
    
    \item If Spoiler plays $b_{i+1}$, consider the sequences $t_j=[b_1,\dots,b_j]$ and use Fact~\ref{fact:chains} to build $z_{i+1},z'_{i+1}$. Duplicator's answer is $a_{i+1}\coloneqq \epsilon_A p(z_{i+1})$.
\end{itemize}

This strategy is well defined; let us prove that it is a winning strategy for Duplicator. We define $z_j,z'_j$ and $s_j,t_j$ as above for $j\leq k$.

If $a_i=a_j$ with $i<j$, then $I^{\FI_k(A)}(s_i,s_j)$. Since $p$ is a pathwise embedding it reflects relations on the path $s_1\cvr \cdots \cvr s_k$. Thus $I^{Z_1}(z_i,z_j)$. But $q\circ h$ is a homomorphism, so $I^{\FI_k(B)} (t_i,t_j)$ and thus $b_i=b_j$. This shows that relation defined by the chosen elements is functional.
A similar argument shows that $R^A(a_{i_1},\dots,a_{i_n})$ implies $R^B(b_{i_1},\dots,b_{i_n})$ for any relation symbol $R$, and so Duplicator's winning condition is satisfied.

Conversely, suppose that Duplicator has a winning strategy in the positive $k$-round \EF game from $A$ to $B$. Let $W(A,B)\subseteq A^{\leq k}\times B^{\leq k}$ consist of the pairs $(s,t)$ that are valid plays in the \EF game where Duplicator plays according to their winning strategy.
We define $\sg^{I}$-structures $Z_1$ and $Z_2$ on the universe $W(A,B)$ as follows: for every relation symbol $R$,
\[
R^{Z_1}((s_1,t_1),\dots,(s_n,t_n))\Leftrightarrow (\text{the $t_i$'s are pairwise comparable and } R^{\FI_k(A)}(s_1,\dots,s_n))
\]
and 
\[
R^{Z_2}((s_1,t_1),\dots,(s_n,t_n))\Leftrightarrow (\text{the $s_i$'s are pairwise comparable and } R^{\FI_k(B)}(t_1,\dots,t_n)).
\]
Upon equipping $Z_1$ and $Z_2$ with the product order, they can be regarded as objects of $\EM(\EkI)$, \ie they satisfy condition~\ref{i:E-bis}.

Now, define $h\colon Z_1\rightarrow Z_2$ to be the identity function of $W(A,B)$, and let $p\colon Z_1\rightarrow \FI_k(A)$ and $q\colon Z_2\rightarrow \FI_k(B)$ be the two projections from $W(A,B)$. 
It can be shown that $h$ is a bijective $\EM(\EkI)$-morphism, and that $p$ and $q$ are open pathwise embeddings.
\end{proof}

Combining Propositions~\ref{p:pos-EF-logic} and~\ref{p:pos-EF-struct} we get a comonadic characterisation of equivalence in the positive fragment of first-order logic with bounded quantifier rank:
\begin{corollary}
\label{cor:pos-EF-equiv}
The following conditions are equivalent for all structures $A,B$:
\begin{enumerate}[label=(\roman*)]
\item $A\equiv_{\PFO_k}B$.

\item There exist morphisms

\begin{center}
    \begin{tikzcd}
       &  Z\arrow[ld,"p",swap]& \\
        \FI_k(A) \arrow[rd,leftarrow,"q'"] & & \FI_k(B) \arrow[ul,leftarrow,"q",swap]\\
        & Z' \arrow[ru,"p'"]
    \end{tikzcd}
\end{center}
in $\EM(\EkI)$ such that $p,p'$ are open pathwise embeddings and $q,q'$ are compositions of a bijection followed by an open pathwise embedding.\footnote{The morphisms that are compositions of a bijection followed by an open pathwise embedding are exactly the tree-open morphisms in the axiomatic setting; cf.\ Definition~\ref{def:tree-open} and Remark~\ref{rem:positive-bisim-as-spans}. A similar remark applies to Corollaries~\ref{cor:pos-peb-equiv} and~\ref{cor:pos-mod-equiv}.}
\end{enumerate}
\end{corollary}

\begin{remark}
We will see in Section~\ref{s:arboreal} that the morphisms that can be obtained by composing a bijection with an open pathwise embedding admit a simple characterisation: they are precisely those whose underlying forest morphism is a surjective p-morphism (cf.\ Lemma~\ref{l:tree-open-charact}).
\end{remark}

\subsection{Positive pebble game}
The \emph{positive $k$-pebble game} from $A$ to $B$ is the variant of the $k$-pebble game in which Spoiler can play in either $A$ or $B$, Duplicator answers in the other structure, and the winning condition for Duplicator is that, at each round, the current positions of the pebbles define a partial homomorphism from $A$ to $B$.

\begin{restatable}{proposition}{proppospeblogic}
\label{p:pos-peb-logic}
The following conditions are equivalent:
\begin{enumerate}[label=(\roman*)]
\item\label{i:PLLK-preserved} $A\IMP_{\PLLk}B $, \ie every sentence of $\PLLk$ that is true in $A$ is true in $B$.
\item\label{i:Duplicator-wins-positive-k-peb} Duplicator has a winning strategy in the positive $k$-pebble game from $A$ to $B$.
\end{enumerate}
\end{restatable}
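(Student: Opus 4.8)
The plan is to prove both implications by combining the two arguments already available in the excerpt: the pebble bookkeeping of Proposition~\ref{p:ex-peb-logic} together with the back-and-forth structure of Proposition~\ref{p:pos-EF-logic}, which is forced here by the presence of universal quantifiers.

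For \ref{i:PLLK-preserved}$\Rightarrow$\ref{i:Duplicator-wins-positive-k-peb}, I would reuse the pebble conventions of Proposition~\ref{p:ex-peb-logic}: given pebbles $p_1,\dots,p_n$ and elements $c_1,\dots,c_n\in C$, let $r_1,\dots,r_m$ enumerate the distinct pebble values and let $l(r_j)$ be the index of the last occurrence of pebble $r_j$, and set
\[
\tp_C((p_1,c_1),\dots,(p_n,c_n)) \coloneqq \{\psi(x_{r_1},\dots,x_{r_m})\in \PLLk \mid C\models \psi(c_{l(r_1)},\dots,c_{l(r_m)})\}.
\]
The core of the argument is a single back-and-forth lemma: if the current positions $\vec a$ in $A$ and $\vec b$ in $B$ satisfy $\tp_A(\vec a)\subseteq\tp_B(\vec b)$, then for every pebble $p$ one has (forth) for each $a'\in A$ some $b'\in B$ with $\tp_A(\vec a[p\mapsto a'])\subseteq\tp_B(\vec b[p\mapsto b'])$, and (back) for each $b'\in B$ some $a'\in A$ with the same inclusion, where $\vec a[p\mapsto a']$ denotes the position obtained by (re)placing pebble $p$ on $a'$. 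The forth clause is established exactly as in the existential case: assuming no suitable $b'$ exists, form the infinitary conjunction $\bigwedge_{b'\in B}\psi_{b'}$ of the witnessing formulas and apply $\exists x_p$. The back clause is the genuinely new ingredient and is proved dually, by forming the disjunction $\bigvee_{a'\in A}\psi_{a'}$ and applying $\forall x_p$. In both cases the resulting formula remains in $\PLLk$ because every disjunct and conjunct involves only the at most $k$ variables $x_{r_1},\dots,x_{r_m},x_p$. Maintaining the invariant that the $\PLLk$-type of the current $A$-position is contained in that of the current $B$-position then yields Duplicator's winning strategy, and the partial-homomorphism winning condition follows because all positive atoms — including the equalities $x_{r_i}=x_{r_j}$, which force functionality of the induced relation — belong to the types.

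For \ref{i:Duplicator-wins-positive-k-peb}$\Rightarrow$\ref{i:PLLK-preserved}, I would fix a positive sentence $\phi\in\PLLk$ with $A\models\phi$ and argue by structural induction along plays consistent with Duplicator's winning strategy, proving that for every reachable position $[(p_1,a_1),\dots,(p_n,a_n)]$ with Duplicator's responses $b_1,\dots,b_n$, and every subformula $\psi(x_{r_1},\dots,x_{r_m})$ of $\phi$, the implication $A\models\psi(a_{l(r_1)},\dots,a_{l(r_m)})\Rightarrow B\models\psi(b_{l(r_1)},\dots,b_{l(r_m)})$ holds. Atoms are handled by the winning condition and the propositional connectives are immediate; the two quantifier cases are where the back-and-forth nature of the game intervenes, precisely as in Proposition~\ref{p:pos-EF-logic}: an occurrence of $\exists x_p$ corresponds to Spoiler placing pebble $p$ in $A$ with Duplicator responding in $B$, while an occurrence of $\forall x_p$ corresponds to Spoiler placing pebble $p$ in $B$ with Duplicator responding in $A$. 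Taking $\psi=\phi$ at the empty position gives $B\models\phi$.

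The structural induction and the propositional cases are routine. The one step demanding real care is the back clause of the back-and-forth lemma: one must verify that the universal, disjunctively built witness stays inside the $k$-variable fragment, and — conceptually — that it is the same one-directional inclusion $\tp_A\subseteq\tp_B$ that is preserved even when Spoiler moves in $B$. This combination of back-and-forth play with a one-way type inclusion is exactly what distinguishes the positive fragment from the full fragment, and is the only substantive difference from the \EF argument of Proposition~\ref{p:pos-EF-logic}.
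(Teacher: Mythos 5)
Your proposal is correct and follows essentially the same route as the paper: the same pebble-indexed positive types, a one-way type inclusion maintained through a forth clause via an infinitary conjunction under $\exists x_p$ and a back clause via an infinitary disjunction under $\forall x_p$, and the same structural induction along plays for the converse. The only cosmetic difference is that you phrase the forth/back steps contrapositively with witnessing formulas indexed over elements of $B$ (resp.\ $A$), whereas the paper conjoins the whole type (resp.\ disjoins its complement); the two formulations are interchangeable.
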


The proof of the previous result is similar to that of Proposition~\ref{p:pos-EF-logic} and can be found in the appendix.
We record here the following facts, akin to Proposition~\ref{p:pos-EF-struct} and Corollary~\ref{cor:pos-EF-equiv}, respectively, which will follow from Theorem~\ref{th:game-bisim} (in the axiomatic setting) combined with Proposition~\ref{p:pos-abor-peb}.

\begin{proposition}\label{p:pos-peb-struct}
There exists a positive bisimulation from $\FI_k(A)$ to $\FI_k(B)$ in $\EM(\PkI)$ if, and only if, Duplicator has a winning strategy in the positive $k$-pebble game from $A$ to $B$.
\end{proposition}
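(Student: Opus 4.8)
The plan is to adapt the proof of Proposition~\ref{p:pos-EF-struct} to the pebble setting; the only genuinely new ingredient is the bookkeeping of pebble indices and the verification of condition~\ref{i:P-bis}. As the statement is in any case a special case of Theorem~\ref{th:game-bisim} together with Proposition~\ref{p:pos-abor-peb}, the concrete argument I sketch serves mainly to expose the underlying intuition.

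For the direction from a positive bisimulation to a winning strategy, I would first record the pebble analogue of Fact~\ref{fact:chains}: given chains $z_1 \cvr \cdots \cvr z_i$ in $Z_1$ and $z'_1 \cvr \cdots \cvr z'_i$ in $Z_2$ with $h(z_j)=z'_j$, any covering extension $p(z_i)\cvr s_{i+1}$ of the image chain in $\FI_k(A)$ (respectively $q(z'_i)\cvr t_{i+1}$ in $\FI_k(B)$) lifts, using openness of $p$ (resp.\ $q$) and bijectivity of $h$, to a simultaneous extension $z_i\cvr z_{i+1}$, $z'_i\cvr z'_{i+1}$ with $h(z_{i+1})=z'_{i+1}$. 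Since $p$ and $q$ are forest morphisms preserving the pebbling function, the pebble index attached to $s_{i+1}$ (equivalently $t_{i+1}$) is transported correctly through the lift. Duplicator's strategy is then defined exactly as in the \EF case: if Spoiler places pebble $p_{i+1}$ on $a_{i+1}$ in $A$, Duplicator lifts the move through $p$ and $h$ and responds with $b_{i+1}\coloneqq \epsilon_B q(z'_{i+1})$; symmetrically, if Spoiler plays in $B$, Duplicator lifts through $q$, applies $h^{-1}$, and responds with $a_{i+1}\coloneqq \epsilon_A p(z_{i+1})$. To check that the resulting relation between the pebbled elements is a partial homomorphism, I would argue as in the \EF case: functionality (preservation of $I$) holds because $p$ reflects $I$ along the chain, being a pathwise embedding, while $q\circ h$ preserves it, and $R^A(\ldots)\Rightarrow R^B(\ldots)$ for $R\in\sigma$ follows the same way. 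The extra point is that here it is the \emph{current} positions of the pebbles, rather than the whole sequence, that determine the relation, so the relevant tuples of moves must satisfy condition~\ref{i:P}; this is precisely what~\ref{i:P-bis} on $Z_1,Z_2$ guarantees.

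For the converse, I would let $W(A,B)\subseteq \Pk(A)\times \Pk(B)$ be the set of pairs $(s,t)$ of plays, carrying identical pebble indices in identical positions, that are consistent with Duplicator's winning strategy, equipped with the product order. I would define two $\sigma^I$-structures $Z_1,Z_2$ on this carrier exactly as in the \EF proof, so that $Z_1$ sees the $A$-relations (with the $t$-components required to be comparable) and $Z_2$ sees the $B$-relations, and equip both with the common pebbling function $(s,t)\mapsto$ (pebble index of the last move), well defined since the indices of $s$ and $t$ agree. Taking $h$ to be the identity of $W(A,B)$ and $p,q$ the two projections, the tasks are: (a) verify that $Z_1,Z_2$ are genuine $\PkI$-coalgebras, i.e.\ satisfy~\ref{i:E-bis} and~\ref{i:P-bis}; (b) verify that $h$ is a bijective $\EM(\PkI)$-morphism; (c) verify that $p,q$ are open pathwise embeddings. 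Point (c) is where the winning strategy is used: openness of $p$ encodes that any legal extension of an $A$-play is answerable by Duplicator in $B$, and dually for $q$, which is exactly the back-and-forth content of the positive pebble game.

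The main obstacle will be the pebble-index bookkeeping, concentrated in condition~\ref{i:P}/\ref{i:P-bis}. Concretely, I expect the delicate verification to be that $Z_1,Z_2$ satisfy~\ref{i:P-bis}, equivalently that along any branch no pebble index indexing a related pair is reused in between, and that openness of the projections is compatible with this constraint, since a legal extension in the pebble game is one in which the reused pebble does not clash with an existing relation. Once the construction is arranged so that these conditions are manifestly inherited from the validity of the plays in $W(A,B)$, the remaining verifications are routine analogues of the \EF computations in Proposition~\ref{p:pos-EF-struct}.
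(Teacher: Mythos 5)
Your proposal matches the paper's treatment: the paper gives no standalone concrete proof of this proposition but derives it from Theorem~\ref{th:game-bisim} combined with Proposition~\ref{p:pos-abor-peb}, exactly the route you cite, and your concrete sketch is the expected adaptation of the paper's own argument for Proposition~\ref{p:pos-EF-struct}. The only small inaccuracy is your framing of the pebble-index ``clash'' worry --- every move is legal in the pebble game, and the role of condition~\ref{i:P} is simply that relatedness in $\FI_k(A)$ of the tuples at the \emph{last} occurrences of each pebble index is automatic, so the verification you flag as delicate goes through routinely.
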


\begin{corollary}
\label{cor:pos-peb-equiv}
The following conditions are equivalent:
\begin{enumerate}[label=(\roman*)]
\item $A\equiv_{\PLLk}B$.
\item There exist morphisms
\begin{center}
    \begin{tikzcd}
       &  Z\arrow[ld,"p",swap]& \\
        \FI_k(A) \arrow[rd,leftarrow,"q'"] & & \FI_k(B) \arrow[ul,leftarrow,"q",swap]\\
        & Z' \arrow[ru,"p'"]
    \end{tikzcd}
\end{center}
in $\EM(\PkI)$ such that $p,p'$ are open pathwise embeddings and $q,q'$ are compositions of a bijection followed by an open pathwise embedding. 
\end{enumerate}
\end{corollary}

\subsection{Positive bisimulation game}
The \emph{positive bisimulation game} from $(A,a)$ to $(B,b)$ is the variant of the bisimulation game between $(A,a)$ and $(B,b)$ where Spoiler can play in either $A$ or $B$, Duplicator responds in the other structure, and the winning condition for Duplicator after $k$ rounds is given by weakening the condition in eq.~\eqref{eq:winning-bisim} to $P^A(a_i) \Rightarrow P^B(b_i)$. As before, we have the following fact (see the appendix for a proof).

\begin{restatable}{proposition}{propposmodlogic}
\label{p:pos-mod-logic}
The following two are equivalent: 
\begin{enumerate}[label=(\roman*)]
\item\label{i:PMLk-preserved} $(A,a)\IMP_{\PML_{k}} (B,b)$, \ie for all $\phi(x)\in \PML_k$, if $A\models \phi(a)$ then $B\models \phi (b)$.
\item\label{i:Dupl-wins-k-bisim-game} Duplicator has a winning strategy in the $k$-round bisimulation game from $(A,a)$ to $(B,b)$.
\end{enumerate}
\end{restatable}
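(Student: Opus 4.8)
The plan is to run a back-and-forth argument in the style of Proposition~\ref{p:ex-EF-logic}, adapted to modal logic, in which $\Diamond_R$-modalities correspond to Spoiler moves in $A$ and $\Box_R$-modalities to Spoiler moves in $B$. This is the modal counterpart of the fact that, in the positive \EF game, occurrences of $\forall$ are tracked by Spoiler playing in $B$.

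For \ref{i:Dupl-wins-k-bisim-game} $\Rightarrow$ \ref{i:PMLk-preserved}, I would fix a winning strategy for Duplicator and prove, by induction on the modal depth of $\phi$ and uniformly over the round number $i$, that for every position $(a_i,b_i)$ reached after $i$ rounds of a play consistent with the strategy and every $\phi\in\PML_{k-i}$ one has $A\models\phi(a_i)\Rightarrow B\models\phi(b_i)$. The propositional base case is exactly the winning condition $P^A(a_i)\Rightarrow P^B(b_i)$ obtained by weakening \eqref{eq:winning-bisim}, while $\wedge$ and $\vee$ are immediate. For $\phi=\Diamond_R\psi$, a witness $a_{i+1}$ with $R^A(a_i,a_{i+1})$ and $A\models\psi(a_{i+1})$ is a legal Spoiler move in $A$; Duplicator's reply $b_{i+1}$ satisfies $R^B(b_i,b_{i+1})$, and the induction hypothesis applied to $\psi\in\PML_{k-i-1}$ at round $i+1$ gives $B\models\psi(b_{i+1})$. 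For $\phi=\Box_R\psi$, any $b_{i+1}$ with $R^B(b_i,b_{i+1})$ is a legal Spoiler move in $B$; Duplicator's reply $a_{i+1}$ satisfies $R^A(a_i,a_{i+1})$, so $A\models\psi(a_{i+1})$, and the induction hypothesis yields $B\models\psi(b_{i+1})$; as $b_{i+1}$ was arbitrary, $B\models\Box_R\psi(b_i)$. Taking $i=0$ and letting $\phi$ range over $\PML_k$ gives \ref{i:PMLk-preserved}.

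For \ref{i:PMLk-preserved} $\Rightarrow$ \ref{i:Dupl-wins-k-bisim-game}, I would have Duplicator maintain the invariant that after $i$ rounds the position $(a_i,b_i)$ satisfies $(A,a_i)\IMP_{\PML_{k-i}}(B,b_i)$. This holds at the start by hypothesis and, since atoms have modal depth $0$, it immediately entails the winning condition at every round. The key finiteness input is that, $\sigma$ being finite and the modal depth bounded, each $\PML_j$ contains only finitely many formulas up to logical equivalence, so any $\PML_j$-type is equivalent to a single positive formula of depth $\leq j$. When Spoiler plays $a_{i+1}$ in $A$ along $R$, let $\Phi$ be the conjunction of the $\PML_{k-i-1}$-type of $a_{i+1}$; then $A\models\Diamond_R\Phi(a_i)$ with $\Diamond_R\Phi\in\PML_{k-i}$, so the invariant produces $b_{i+1}$ with $R^B(b_i,b_{i+1})$ and $B\models\Phi(b_{i+1})$, re-establishing the invariant.

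The main obstacle is the dual $\Box$-move, where Spoiler plays $b_{i+1}$ in $B$ along $R$ and the information must flow against the direction of $\IMP$. I would argue by contradiction: if no $R$-successor $a_{i+1}$ of $a_i$ satisfies $(A,a_{i+1})\IMP_{\PML_{k-i-1}}(B,b_{i+1})$ (in particular if $a_i$ has no $R$-successor at all), then group the $R$-successors of $a_i$ by their finitely many $\PML_{k-i-1}$-types and, for each occurring type, choose a representative formula $\psi_j$ belonging to that type but false at $b_{i+1}$. The disjunction $\chi\coloneqq\bigvee_j\psi_j$ (the empty disjunction $\bot$ when $a_i$ has no $R$-successors) is positive of depth $\leq k-i-1$, is satisfied by every $R$-successor of $a_i$, and fails at $b_{i+1}$. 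Hence $A\models\Box_R\chi(a_i)$ while $B\not\models\Box_R\chi(b_i)$, and $\Box_R\chi\in\PML_{k-i}$, contradicting the invariant. This forces the existence of a response $a_{i+1}$ re-establishing the invariant and closes the induction; the routine verifications are deferred to the appendix.
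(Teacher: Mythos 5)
Your proposal is correct and follows essentially the same route as the paper: the direction \ref{i:Dupl-wins-k-bisim-game} $\Rightarrow$ \ref{i:PMLk-preserved} is the same induction on subformulas with $\Diamond_R$ handled by Spoiler moves in $A$ and $\Box_R$ by Spoiler moves in $B$, and the direction \ref{i:PMLk-preserved} $\Rightarrow$ \ref{i:Dupl-wins-k-bisim-game} maintains the same type-inclusion invariant, using $\Diamond_R\bigwedge(\text{type})$ for the forth step and a formula of the form $\Box_R\bigvee(\cdots)$ for the back step (the paper takes the disjunction over all of $\PML_{k-i}\setminus\tp^i_B(b_i)$ where you take finitely many type representatives, which is the same argument by contraposition).
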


Again, there is a structural characterisation which is a consequence of Theorem~\ref{th:game-bisim}.

\begin{proposition}    
\label{p:pos-mod-struct}
There exists a positive bisimulation from $F_k (A,a)$ to $F_k (B,b)$ in $\EM(\Mk)$ if, and only if, Duplicator has a winning strategy in the positive $k$-round bisimulation game from $(A,a)$ to $(B,b)$.
\end{proposition}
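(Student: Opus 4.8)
The plan is to mirror the concrete argument for the \EF case (Proposition~\ref{p:pos-EF-struct}), with the modifications dictated by the facts that the objects of $\EM(\Mk)$ are synchronization trees and that moves in the bisimulation game are labelled by binary relation symbols. (Alternatively, the statement is an instance of the axiomatic Theorem~\ref{th:game-bisim}; I would nonetheless present the concrete proof for the insight it gives.)

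For the direction from a positive bisimulation to a winning strategy, suppose we are given a diagram as in Definition~\ref{d:pos-bisim-concrete}, with $p\colon Z_1\to F_k(A,a)$ and $q\colon Z_2\to F_k(B,b)$ open pathwise embeddings and $h\colon Z_1\to Z_2$ a bijection. First I would prove a modal analogue of Fact~\ref{fact:chains}: given $h$-matched chains $z_1\cvr\cdots\cvr z_i$ in $Z_1$ and $z_1'\cvr\cdots\cvr z_i'$ in $Z_2$, any one-step extension of $p(z_i)$ in $F_k(A,a)$ (resp.\ of $q(z_i')$ in $F_k(B,b)$) along a relation symbol $R$ lifts to covering pairs $z_i\cvr z_{i+1}$ in $Z_1$ and $z_i'\cvr z_{i+1}'$ in $Z_2$ with $z_{i+1}'=h(z_{i+1})$ and projecting to the given extension. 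Openness of $p$ (resp.\ $q$) supplies the lift on the side where Spoiler played, and the matching element on the other side is obtained by applying $h$ (resp.\ $h^{-1}$). Using this, Duplicator maintains $h$-matched chains projecting to the current plays and answers Spoiler's $R$-move in one structure by the $\epsilon$-image of the lifted vertex in the other. The winning condition $P^A(a_i)\Rightarrow P^B(b_i)$ then follows because $p$ reflects the unary predicate $P$ along the path (it is a pathwise embedding), while $h$ and $q$ preserve it.

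For the converse, given a Duplicator winning strategy I would let $W(A,B)$ be the tree of valid plays consistent with the strategy, rooted at the initial play $(a,b)$ and ordered by the prefix order; since every round extends both paths along the \emph{same} relation symbol $R$, the underlying tree-with-edge-labels is common to both sides. I would define $Z_1,Z_2\in\EM(\Mk)$ on the universe $W(A,B)$ sharing this tree order and binary structure, but interpreting the unary predicates via the $A$-side and the $B$-side respectively, and take $h$ to be the identity. Condition~\ref{i:M-bis} and the height bound $\leq k$ are immediate, so $Z_1,Z_2$ are genuine synchronization trees. The projections $p,q$ (sending a play to its $A$-path, resp.\ its $B$-path) are pathwise embeddings because on a branch they are injective---distinct positions have distinct lengths---and reflect the structure, and they are open precisely because the winning strategy lets Duplicator respond whenever Spoiler extends on the corresponding side; finally $h=\id$ is a homomorphism exactly because the winning condition gives $P^A\Rightarrow P^B$ at every position.

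The main obstacle I anticipate is the $h^{-1}$ step in the modal analogue of Fact~\ref{fact:chains}, \ie handling Spoiler's moves in $B$: since $h$ is only a bijective morphism and need not be an isomorphism, it is not clear a priori that $z_{i+1}:=h^{-1}(z_{i+1}')$ satisfies $z_i\cvr z_{i+1}$. I would resolve this using that synchronization trees are \emph{trees}, so $z_{i+1}$ has a unique parent $w$; as $h$ preserves covering, $h(w)\cvr h(z_{i+1})=z_{i+1}'$, and since $z_{i+1}'$ has the unique parent $z_i'=h(z_i)$ in the tree $Z_2$, injectivity of $h$ forces $w=z_i$. Care is also needed to track that the lifted covering pairs carry the correct relation label, which follows from condition~\ref{i:M-bis} together with the fact that $p,q$ reflect and $h,p,q$ preserve the relations.
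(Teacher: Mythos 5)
Your proof is correct, but it follows a different route from the paper: the paper obtains Proposition~\ref{p:pos-mod-struct} purely as an instance of the axiomatic Theorem~\ref{th:game-bisim} together with the game translation in Proposition~\ref{p:pos-arbor-mod}, whereas you give a direct concrete construction modelled on the sketch for the \EF case (Proposition~\ref{p:pos-EF-struct}). The comparison is instructive. Your construction of $Z_1,Z_2$ on the common universe $W(A,B)$, sharing the tree order and the edge labels but taking the unary predicates from the $A$-side and the $B$-side respectively, is exactly the concrete shadow of the paper's four-step axiomatic argument: there the factorisations $X\times Y\to X^0\to X$ and $X\times Y\to Y^0\to Y$ of Proposition~\ref{p:fact-quo-pw} are needed precisely because the product may carry too few relations, and in the modal setting this collapses to ``same synchronization tree, two choices of unary structure''. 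Your resolution of the $h^{-1}$ step is sound and is the one genuinely non-routine point of the concrete argument: a bijective homomorphism of synchronization trees preserves roots and (by condition~\ref{i:M-bis}) the covering relation, hence preserves heights, so $h^{-1}(z_{i+1}')$ is a non-root whose unique parent must map to the unique parent $z_i'$ of $z_{i+1}'$, forcing it to be $z_i$ by injectivity; this is the concrete counterpart of the axiomatic requirement that $\Path h$ be a tree isomorphism in Definition~\ref{d:pos-bisim-arb}, and your label-tracking via uniqueness of $R$ in condition~\ref{i:M-bis} is also correct. What the axiomatic route buys is uniformity across all three comonads and reuse of the back-and-forth machinery; what your route buys is a self-contained proof exhibiting the positive bisimulation explicitly, in the same spirit as the paper's own treatment of the \EF case.
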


\begin{remark}
Positive bisimulations between Kripke models have been studied e.g.\ by Kurtonina and De Rijke in~\cite{Kurtonina97} under the name of \emph{directed simulations}. Combining Propositions~\ref{p:pos-mod-logic} and~\ref{p:pos-mod-struct} above, we deduce that $(A,a)\IMP_{\PML_{k}} (B,b)$ if and only if there is a positive bisimulation  from $F_k (A,a)$ to $F_k (B,b)$. A variant of this for image-finite Kripke models, and with no bound on the modal depth of formulas, was proved in~\cite[Proposition~3.4]{Kurtonina97}.
\end{remark}

\begin{corollary}
\label{cor:pos-mod-equiv}
The following conditions are equivalent:
\begin{enumerate}[label=(\roman*)]
\item $(A,a)\equiv_{\PML_{k}} (B,b)$.
\item There exist morphisms
\begin{center}
    \begin{tikzcd}
       &  Z\arrow[ld,"p",swap]& \\
        F_k(A,a) \arrow[rd,leftarrow,"q'"] & & F_k(B,b) \arrow[ul,leftarrow,"q",swap]\\
        & Z' \arrow[ru,"p'"]
    \end{tikzcd}
\end{center}
in $\EM(\Mk)$ such that $p,p'$ are surjective p-morphisms (in the sense of modal logic, cf.\ Remark~\ref{rem:ope-bisim-modal}) and $q,q'$ are compositions of a bijection followed by a surjective p-morphism. 
\end{enumerate}
\end{corollary}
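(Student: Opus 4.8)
The plan is to read the ``bowtie'' diagram in~(ii) as the superposition of two positive bisimulations, one in each direction, and then to reduce the corollary to Propositions~\ref{p:pos-mod-logic} and~\ref{p:pos-mod-struct}, which are quoted rather than reproved here. First I would record the definitional fact that $(A,a)\equiv_{\PML_{k}}(B,b)$ holds precisely when both $(A,a)\IMP_{\PML_{k}}(B,b)$ and $(B,b)\IMP_{\PML_{k}}(A,a)$ hold, since $\equiv_{\PML_{k}}$ is by definition the symmetrisation of the preorder $\IMP_{\PML_{k}}$. Chaining Proposition~\ref{p:pos-mod-logic} with Proposition~\ref{p:pos-mod-struct} then yields, for each direction, the equivalence
\[
(A,a)\IMP_{\PML_{k}}(B,b) \iff F_{k}(A,a)\pbisim F_{k}(B,b),
\]
and symmetrically with the roles of $(A,a)$ and $(B,b)$ exchanged. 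Thus condition~(i) is equivalent to the simultaneous existence of positive bisimulations $F_{k}(A,a)\pbisim F_{k}(B,b)$ and $F_{k}(B,b)\pbisim F_{k}(A,a)$.

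It then remains to verify that this pair of positive bisimulations is exactly the data displayed in~(ii). Here I would invoke Remark~\ref{rem:ope-bisim-modal}, which identifies the open pathwise embeddings in $\EM(\Mk)$ with the surjective p-morphisms. Unfolding Definition~\ref{d:pos-bisim-concrete}, a positive bisimulation from $F_{k}(A,a)$ to $F_{k}(B,b)$ consists of $Z_{1}\xrightarrow{p}F_{k}(A,a)$, a bijection $h\colon Z_{1}\to Z_{2}$, and $Z_{2}\xrightarrow{q}F_{k}(B,b)$ with $p,q$ open pathwise embeddings. Setting $Z\coloneqq Z_{1}$, the leg $p$ is a surjective p-morphism and the composite $q\circ h\colon Z\to F_{k}(B,b)$ is a bijection followed by a surjective p-morphism; this is precisely the upper triangle of~(ii). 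The second positive bisimulation, from $F_{k}(B,b)$ to $F_{k}(A,a)$, produces the lower triangle with vertex $Z'$ in the same way. For the converse, given the diagram in~(ii) one reads off $Z$ (resp.\ $Z'$) as a positive bisimulation from $F_{k}(A,a)$ to $F_{k}(B,b)$ (resp.\ from $F_{k}(B,b)$ to $F_{k}(A,a)$), factoring each ``bijection-followed-by-surjective-p-morphism'' leg through its intermediate object to recover the maps $h$ and $q$ of Definition~\ref{d:pos-bisim-concrete}.

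Since the two propositions do the genuine work, there is no hard analytic step in this corollary: it is essentially a bookkeeping exercise that packages two instances of Proposition~\ref{p:pos-mod-struct} into a single symmetric diagram. The only point demanding care is the faithful translation between the abstract morphism classes of Definition~\ref{d:pos-bisim-concrete} (open pathwise embeddings and bijections) and the modal-theoretic vocabulary used in the statement (surjective p-morphisms), which is exactly what Remark~\ref{rem:ope-bisim-modal} supplies. I would therefore present the argument as a short deduction, taking particular care that the factorisation of the composite legs $q,q'$ into a bijection and a surjective p-morphism is carried out in both directions, so that no information is lost when passing between~(i) and~(ii).
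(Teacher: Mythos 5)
Your argument is correct and is exactly the route the paper intends: Corollary~\ref{cor:pos-mod-equiv} is stated as the conjunction of the two directions of $\IMP_{\PML_k}$, each handled by chaining Proposition~\ref{p:pos-mod-logic} with Proposition~\ref{p:pos-mod-struct}, and then unfolding Definition~\ref{d:pos-bisim-concrete} with the identification of open pathwise embeddings in $\EM(\Mk)$ as surjective p-morphisms from Remark~\ref{rem:ope-bisim-modal}. The translation between the bowtie diagram and the two positive bisimulations (composing $q\circ h$ one way, factoring the legs $q,q'$ the other way) is handled faithfully, so nothing is missing.
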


\section{Arboreal categories}
\label{s:arboreal}

We recall from \cite{AR2021icalp,AR2022} the main definitions and facts concerning the axiomatic approach to game comonads based on \emph{arboreal categories}. 

\subsection{Proper factorisation systems and paths}
Given arrows $e$ and $m$ in a category~$\C$, we say that $e$ has the \emph{left lifting property} with respect to $m$, or that $m$ has the \emph{right lifting property} with respect to $e$, and write $e {\,\pit\,} m$, if every commutative square
\[\begin{tikzcd}
{\cdot} \arrow{d}[swap]{e} \arrow{r} & {\cdot} \arrow{d}{m} \arrow[leftarrow, dashed]{dl} \\
{\cdot} \arrow{r} & {\cdot}
\end{tikzcd}\]
admits a diagonal filler (cf.\ Definition~\ref{d:open}). For any class $\mathscr{H}$ of morphisms in $\C$, let~${}^{\pit}\mathscr{H}$ (respectively, $\mathscr{H}^{\pit}$) be the class of morphisms having the left (respectively, right) lifting property with respect to each morphism in $\mathscr{H}$.

\begin{definition}\label{def:f-s}
A \emph{proper factorisation system} in a category $\C$ is a pair of classes of morphisms $(\Q,\M)$ satisfying the following conditions:
\begin{enumerate}[label=(\roman*)]
\item Each morphism $f$ in $\C$ can be decomposed as $f = m \circ e$ with $e\in \Q$ and $m\in \M$.
\item $\Q={}^{\pit}\M$ and $\M=\Q^{\pit}$.
\item Every arrow in $\Q$ is an epimorphism and every arrow in $\M$ a monomorphism.
\end{enumerate}
We refer to $\M$-morphisms as \emph{embeddings} and denote them by $\emb$. $\Q$-morphisms will be referred to as \emph{quotients} and denoted by~$\epi$.
\end{definition}

We state some well known properties of proper factorisation systems (cf.\ \cite{freyd1972categories} or~\cite{riehl2008factorization}): 
\begin{lemma}\label{l:factorisation-properties}
Let $(\Q,\M)$ be a proper factorisation system in $\C$. The following statements~hold:
\begin{enumerate}[label=(\arabic*)]
\item\label{compositions} $\Q$ and $\M$ are closed under compositions.
\item\label{isos} $\Q\cap\M=\{\text{isomorphisms}\}$.
\item\label{cancellation-e} $g\circ f\in \Q$ implies $g\in\Q$.
\item\label{cancellation-m} $g\circ f\in\M$ implies $f\in\M$.
\end{enumerate}
\end{lemma}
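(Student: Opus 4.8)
The plan is to derive all four statements from the orthogonality characterisation $\Q = {}^{\pit}\M$ and $\M = \Q^{\pit}$ built into Definition~\ref{def:f-s}, invoking the properness condition (that every $\Q$-arrow is epi and every $\M$-arrow is mono) only where it is genuinely needed. The basic move throughout is: to place a morphism in $\Q$ I verify it has the left lifting property against every $m \in \M$, and dually for $\M$. Under this lens, \ref{compositions} and \ref{isos} use only orthogonality, whereas \ref{cancellation-e} and \ref{cancellation-m} hinge on properness.

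For \ref{compositions} I would prove that ${}^{\pit}\M = \Q$ is closed under composition by the standard two-stage pasting of lifts. Given composable $e_1, e_2 \in \Q$ and a square testing $e_2 \circ e_1 \pit m$ for some $m \in \M$, I first solve the lifting problem for $e_1$ against $m$ (against the square whose bottom edge has been precomposed with $e_2$), obtaining an intermediate filler, and then solve the lifting problem for $e_2$ against $m$; composing the resulting triangle identities produces a filler for $e_2 \circ e_1$. Closure of $\Q^{\pit} = \M$ under composition is the formal dual, obtained by reversing every arrow. For \ref{isos}, one inclusion is immediate: an isomorphism $f$ fills any lifting square via $u \circ f^{-1}$ (resp.\ $f^{-1} \circ v$), so isomorphisms lie in $\Q \cap \M$. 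For the converse, if $f \in \Q \cap \M$ then $f \in {}^{\pit}\M$ while $f$ is itself an $\M$-arrow, so $f \pit f$; applying this to the commuting square whose two horizontal edges are both $\id$ yields a diagonal $d$ with $d \circ f = \id$ and $f \circ d = \id$, exhibiting $f$ as invertible. Neither argument needs properness.

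The content lies in \ref{cancellation-e} and \ref{cancellation-m}, which are mutually dual. For \ref{cancellation-e}, suppose $g \circ f \in \Q$ and fix $m \in \M$ together with a square testing $g \pit m$, with top edge $u$ and bottom edge $v$. Precomposing this square with $f$ turns it into a lifting problem for $g \circ f$ against $m$, which $g \circ f \in {}^{\pit}\M$ solves by a diagonal $d$; directly this gives $m \circ d = v$ together with $d \circ (g \circ f) = u \circ f$. The triangle $m \circ d = v$ is what we want, but the other triangle $d \circ g = u$ is \emph{not} handed to us — the chase yields only $(d \circ g) \circ f = u \circ f$. This is exactly where properness enters: postcomposing with $m$ gives $m \circ (d \circ g) = v \circ g = m \circ u$, and since $m$ is a monomorphism we cancel it to obtain $d \circ g = u$. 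Hence $g \pit m$ for all $m \in \M$, so $g \in \Q$. Statement \ref{cancellation-m} runs by the dual chase: given $g \circ f \in \M$ and a square testing $e \pit f$ with $e \in \Q$, postcompose with $g$ and solve the lifting problem for $e$ against $g \circ f \in \Q^{\pit}$; the resulting diagonal $d$ satisfies the triangle adjacent to $e$ outright, while the remaining triangle $f \circ d = v$ follows from $(f \circ d) \circ e = v \circ e$ by cancelling the epimorphism $e$.

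The only real obstacle is the step just highlighted in \ref{cancellation-e} and \ref{cancellation-m}: lifting against the composite $g \circ f$ produces a diagonal that verifies only one of the two triangle identities, and closing the second triangle is impossible from orthogonality alone. It is precisely here that properness is indispensable — monicity of $m$ for \ref{cancellation-e} and epicity of $e$ for \ref{cancellation-m}. Everything else reduces to routine bookkeeping of commuting squares and their fillers.
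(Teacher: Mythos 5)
Your proof is correct: the paper does not actually prove this lemma but defers to the cited references (Freyd--Kelly and Riehl), and your lifting-property arguments are exactly the standard ones found there --- pasting of lifts for \ref{compositions}, the retract trick $f\pit f$ for \ref{isos}, and the mono/epi cancellation for \ref{cancellation-e} and \ref{cancellation-m}. You also correctly isolate the role of properness: for a general orthogonal factorisation system the cancellation properties require the extra hypothesis $f\in\Q$ (resp.\ $g\in\M$), and it is precisely the monicity of $\M$-maps (resp.\ epicity of $\Q$-maps) that lets you close the second triangle without it.
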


Let $\C$ be a category endowed with a proper factorisation system $(\Q,\M)$.
In the same way that one usually defines the poset of subobjects of a given object $X\in\C$, we can define the poset $\Emb{X}$ of \emph{$\M$-subobjects} of $X$, whose elements are equivalence classes~$[m]$ of embeddings $m\colon S\emb X$ (whenever convenient, we denote an equivalence class $[m]$ by any of its representatives).
For any morphism $f\colon X\to Y$ and embedding $m\colon S\emb X$, we can consider the $(\Q,\M)$-factorisation  of $f\circ m$: 
\[\begin{tikzcd}
S\arrow[twoheadrightarrow]{r} & \exists_f S \arrow[rightarrowtail]{r}{\exists_{f}m} & Y.
\end{tikzcd}\] 
This yields a monotone map $\exists_f\colon \Emb{X}\to\Emb{Y}$ sending $m$ to $\exists_{f}m$, and in fact it defines a functor from $\C$ to the category of posets and monotone maps.

\begin{definition}
An object $X$ of $\C$ is called a \emph{path} provided the poset $\Emb{X}$ is a finite linear order. Paths will be denoted by $P,Q$ and variations thereof. A \emph{path embedding} is an embedding $P\emb X$ whose domain is a path. 
\end{definition}

\begin{example}\label{ex:paths-EM-categories}
In the categories $\EM(\Ek)$, $\EM(\Pk)$ and $\EM(\Mk)$, consider the proper factorisation systems $(\Q,\M)$ where~$\Q$ consists of the surjective homomorphisms, 
and~$\M$ consists of the embeddings in the sense of Definition~\ref{d:pathwise}\ref{i:embedding}. In each case, the axiomatic notion of path coincides with the concrete one (see Definition~\ref{d:pathwise}\ref{i:path}).
\end{example}

\subsection{Arboreal categories defined}
Now that we have an abstract notion of path, we can define path embeddings, pathwise embeddings, bisimulations, and so forth, by generalising in a straightforward manner the definitions given in the concrete setting. In particular, extending Definition~\ref{d:bisimulation}, we can define a notion of \emph{bisimilarity} in any category equipped with a proper factorisation system.

The next step is to identify sufficient conditions on the category so that this notion of bisimilarity is well-behaved. In this regard, the sub-poset 
\[
\Path{X}
\] 
of $\Emb{X}$ consisting of the path embeddings plays a central role. In order for the maps $\exists_f\colon \Emb{X}\to\Emb{Y}$ to descend to maps $\Path{X}\to\Path{Y}$, we need that the quotient of a path is again a path. This in turn requires that the factorisation system $(\Q,\M)$ is \emph{stable}, \ie for every ${e\in \Q}$ and ${m\in\M}$ with common codomain, the pullback of~$e$ along~$m$ exists and belongs to $\Q$. Together with three additional axioms ensuring that the collection of paths behaves as expected, this leads to the notion of an arboreal category:

\begin{definition}\label{def:arboreal-cat}
An \emph{arboreal category} is a category $\A$, equipped with a stable proper factorisation system, that satisfies the following conditions:
\begin{description}\itemsep4pt
\item[Paths are connected]\label{ax:connected} Coproducts of paths exist and each path $P$ is \emph{connected}, \ie every arrow $P\to \coprod_{i\in I}{Q_i}$ into a coproduct of paths (with $I\neq\emptyset$) factors through some coproduct injection. 

\item[2-out-of-3 property]\label{ax:2-out-of-3} Given arrows $f\colon P\to Q$ and $g\colon Q\to Q'$ between paths, if any two of $f$, $g$ and $g\circ f$ are quotients, then so is the third. 

\item[Path-generation]\label{i:path-gen} Each object is \emph{path-generated}, \ie it is the colimit of the paths that embed into it.
\end{description}
\end{definition}

\begin{remark}
In Definition~\ref{def:arboreal-cat}, in the presence of the other two properties, the path-generation property is equivalent to saying that the full subcategory of $\A$ defined by the paths is \emph{dense}; see \cite[Theorem~5.1]{AR2022}.
\end{remark}

\begin{example}\label{ex:arboreal-cats}
The categories of coalgebras $\EM(\Ek)$, $\EM(\Pk)$ and $\EM(\Mk)$, equipped with the factorisation systems defined in Example~\ref{ex:paths-EM-categories}, are arboreal. \end{example}

We recall from~\cite{AR2022} some useful properties of paths that we will need later.
\begin{lemma}
\label{l:arboreal-facts}
The following statements hold for any object $X$ is an arboreal category:
\begin{enumerate}[label=(\arabic*)]
\item\label{i:quo-emb-path} If $X$ a path and $f\colon X \epi Y$, then $Y$ is a path. If $Y$ is a path and $f\colon X \emb Y$, then $X$ is a path.
\item\label{i:completely-join-irr} Any subset $U\subseteq \Path X$ admits a supremum $ \bigvee{U}$ in $\Emb{X}$. Moreover, if $U\neq\emptyset$ and $n\in \Path X$ satisfies $n\leq \bigvee{U}$, then $n\leq m$ for some $m\in U$. 
\item\label{i:one-emb-paths} For any two paths $P,Q$, there is at most one embedding $P\emb Q$.
\item\label{i:complete-lattices-of-subs} The poset $\Emb{X}$ is a complete lattice.
\end{enumerate}
\end{lemma}

\begin{proof}
Item~\ref{i:quo-emb-path} is \cite[Lemma~3.5]{AR2022}, while items~\ref{i:completely-join-irr} and~\ref{i:one-emb-paths} are direct consequences of \cite[Lemma~3.15 and Proposition~5.6]{AR2022}. For item~\ref{i:complete-lattices-of-subs}, see \cite[Remark~5.7]{AR2022}.
\end{proof}

\subsection{The functor $\Path$ of paths}

For any arboreal category $\A$, the assignment 
\[
(f\colon X\to Y) \ \longmapsto \ (\exists_{f}\colon \Path{X}\to \Path{Y})
\] 
yields a functor 
\[
\Path\colon \A\to\T
\] 
into the category $\T$ of non-empty trees and forest morphisms \cite[Theorem~3.11]{AR2022}. Note that, if $f$ is a pathwise embedding, we have $\Path f(m)  = f\circ m$ for all $m\in\Path{X}$.

In the remainder of this section, we collect some properties of the functor $\Path$ that we will need later. We begin with a preliminary observation.

\begin{lemma}\label{l:stable-emb}
The following statements hold in any arboreal category:
\begin{enumerate}[label=(\arabic*)]
\item\label{i:pullbacks-emb} The pullback of an embedding along an arbitrary arrow exists and is again a pullback.
\item\label{i:images-left-adj} For every arrow $f\colon X\to Y$, the monotone map $\exists_f\colon \Emb{X}\to\Emb{Y}$ is left adjoint.
\end{enumerate}
\end{lemma}

\begin{proof}
\ref{i:pullbacks-emb} Every morphism in an arboreal category can be decomposed as a quotient followed by an embedding, and the stability of the factorisation system ensures that pullbacks of embeddings along quotients exist. Thus it suffices to show that the pullback of an embedding along an embedding exists and is again an embedding. Note that the pullback of a diagram 
\[\begin{tikzcd}
S \arrow[rightarrowtail]{r} & X & T \arrow[rightarrowtail]{l}
\end{tikzcd}\]
can be computed as the infimum $S\wedge T$ in $\Emb{X}$, which exists because the latter poset is a (complete) lattice by Lemma~\ref{l:arboreal-facts}\ref{i:complete-lattices-of-subs}. Given this description, it is immediate to see that embeddings are stable under pullbacks along embeddings.\footnote{More generally, given any \emph{weak} factorisation system, if the pullback of an embedding along another embedding exists, then it is again an embedding. Cf.\ \cite{freyd1972categories} or~\cite{riehl2008factorization}.}

\ref{i:height-quotient} The right adjoint to $\exists_{f}$ is the monotone map $\Emb{Y} \to \Emb{X}$ that sends an embedding $S\emb Y$ to its pullback along $f$, which exists and is an embedding by item~\ref{i:pullbacks-emb}. The proof is a routine argument and is well known for posets of (strong) subobjects; in our setting, the same proof applies as in~\cite[Lemma~2.6]{AR2022}.
\end{proof}

For the next lemma, let us say that the \emph{height} of a path $P$ in an arboreal category, written $\htf(P)$, is the length of the chain $\Path{P}$.

\begin{lemma}\label{l:quotients-arb}
The following statements hold in any arboreal category:
\begin{enumerate}[label=(\arabic*)]
\item\label{i:quotient-surj} A morphism $f$ is a quotient if, and only if, $\Path{f}$ is surjective.
\item\label{i:height-quotient} A morphism $h\colon P\to Q$ between paths is a quotient if and only if $\htf(P)=\htf(Q)$.
\end{enumerate}
\end{lemma}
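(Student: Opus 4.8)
The plan is to establish item~\ref{i:quotient-surj} first and then read off item~\ref{i:height-quotient} from it, using only stability, path-generation, the cancellation properties of $(\Q,\M)$ in Lemma~\ref{l:factorisation-properties}, and the facts about paths recorded in Lemma~\ref{l:arboreal-facts}. Throughout I will rely on one preliminary consequence of path-generation: for every object $X$ one has $\bigvee\Path{X}=\id_X$ in $\Emb{X}$. Indeed, the supremum $s\colon S\emb X$ exists by Lemma~\ref{l:arboreal-facts}\ref{i:completely-join-irr}, every path embedding into $X$ factors through $s$, and therefore so does the colimiting cocone exhibiting $X$ as the colimit of its paths; this makes $s$ a split epimorphism, hence (being an embedding) an isomorphism.

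The forward implication of item~\ref{i:quotient-surj} is the crux of the whole lemma. Given a quotient $f\colon X\epi Y$ and a path embedding $n\colon Q\emb Y$, I would form the pullback of $f$ along $n$, which by stability yields a quotient $g\colon W\epi Q$ together with a projection $p\colon W\to X$ satisfying $n\circ g=f\circ p$. The key step, and the main obstacle, is to produce a single path embedding $r\colon R\emb W$ with $g\circ r\in\Q$, equivalently $\exists_g r=\id_Q$. Granting this, factoring $p\circ r=m\circ e$ gives a path embedding $m\colon P\emb X$, and since $f\circ m\circ e=n\circ g\circ r$ is already a $(\Q,\M)$-factorisation with embedding part $n$, and images are unaffected by precomposition with the quotient $e$, one obtains $\exists_f m=n$; as $n$ was arbitrary, this shows $\Path{f}$ surjective. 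To obtain $r$ I argue by contradiction. If no path of $W$ has $g$-image all of $Q$, then (assuming $\htf(Q)\geq 1$; the case $\Path{Q}=\{\id_Q\}$ is immediate) every $\exists_g r$ lies at or below the unique predecessor $q\colon Q^{-}\emb Q$ of $\id_Q$ in the finite chain $\Path{Q}$, so every $g\circ r$ factors through $q$. Pulling $g$ back along $q$ (again using stability) produces a proper embedding $W^{-}\emb W$ through which, by the universal property of the pullback, every path embedding of $W$ factors; hence $\bigvee\Path{W}\leq[W^{-}\emb W]<\id_W$, contradicting $\bigvee\Path{W}=\id_W$.

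The converse of item~\ref{i:quotient-surj} is then short. Factor $f=m\circ e$ with $e\colon X\epi Z$ and $m\colon Z\emb Y$. By the forward direction $\Path{e}$ is surjective, so surjectivity of $\Path{f}=\Path{m}\circ\Path{e}$ forces $\Path{m}$ surjective. Since $m$ is an embedding, for any path embedding $r$ the composite $m\circ r$ is again an embedding (Lemma~\ref{l:factorisation-properties}\ref{compositions}), so $\exists_m$ is simply postcomposition by $m$; its surjectivity says that every path embedding into $Y$ factors through $m$, i.e.\ $\id_Y=\bigvee\Path{Y}\leq[m]$. Hence $m$ is an isomorphism and $f=m\circ e\in\Q$.

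For item~\ref{i:height-quotient} I would exploit that for paths $P,Q$ the posets $\Path{P},\Path{Q}$ are finite chains and that any forest morphism out of a chain is injective (Example~\ref{ex:p-morphisms-open-maps}). If $h\colon P\epi Q$, then $\Path{h}$ is a height-preserving forest morphism that is injective and, by item~\ref{i:quotient-surj}, surjective; a height-preserving bijection between finite chains forces $\htf(P)=\htf(Q)$. Conversely, if $\htf(P)=\htf(Q)$, factor $h=m\circ e$ with $e\colon P\epi P'$ and $m\colon P'\emb Q$; the forward direction just proved gives $\htf(P')=\htf(P)=\htf(Q)$, so $\Path{m}$ is an injective height-preserving map between chains of equal finite length, hence surjective, whence $m\in\Q$ by item~\ref{i:quotient-surj}. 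As $m\in\M$ as well, Lemma~\ref{l:factorisation-properties}\ref{isos} makes $m$ an isomorphism and $h=m\circ e$ a quotient.
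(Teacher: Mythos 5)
Your proof is correct, but it takes a genuinely more self-contained route than the paper for item~\ref{i:quotient-surj}. The paper imports both halves of the key input from \cite{AR2022}: the forward implication is cited outright as Lemma~5.5(b) there, and the converse is derived from the identity $\id_X=\bigvee\Path{X}$ (Lemma~5.5(a)) together with the fact that $\exists_f$ is a left adjoint and hence preserves suprema (Lemma~2.6). You instead re-derive everything from the arboreal axioms: you prove $\bigvee\Path{X}=\id_X$ directly from path-generation via the split-epi argument, you prove the forward implication by pulling the quotient $f$ back along a path embedding $n$ and extracting, by contradiction against $\bigvee\Path{W}=\id_W$, a path of the pullback whose image is all of $Q$, and you handle the converse by factoring $f=m\circ e$ and showing the embedding part $m$ must be invertible --- thereby avoiding any appeal to the adjointness of $\exists_f$. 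What the paper's route buys is brevity given the cited results; what yours buys is a complete proof of the non-trivial forward direction from stability alone, which the paper never displays. Two small points you assert without proof but which do hold: the pullback projection $W^{-}\to W$ is an embedding because $\M=\Q^{\pit}$ is stable under the pullbacks that exist, and it is a \emph{proper} embedding because otherwise $g$ would factor through $q<\id_Q$, contradicting $g\in\Q$. Your treatment of item~\ref{i:height-quotient} coincides with the paper's, merely spelling out the reduction to forest morphisms between finite chains.
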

\begin{proof}
\ref{i:quotient-surj} Consider an arbitrary morphism $f\colon X\to Y$. The ``only if'' direction is the content of \cite[Lemma~5.5(b)]{AR2022}. For the reverse implication, suppose that $\Path{f}$ is surjective. Recall from \cite[Lemma~5.5(a)]{AR2022} that, for all objects $W$ in an arboreal category and all $m\in \Emb{W}$, 
\[
m = \bigvee{\{p\in \Path{W}\mid p\leq m\}}.
\]
In particular, $\id_{W}= \bigvee{\Path{W}}$.
Moreover, the monotone map~$\exists_{f}$ is left adjoint by Lemma~\ref{l:stable-emb}\ref{i:images-left-adj}; in particular, it preserves suprema.
Thus in the complete lattice $\Emb{Y}$ we have
\begin{align*}
\id_{Y} &= \bigvee{\Path{Y}} \\
&= \bigvee{\{\exists_{f}p \mid p\in \Path{X}\}} \tag*{$\Path{f}$ is surjective}\\
&= \exists_{f}(\bigvee{\Path{X}}) \tag*{$\exists_{f}$ is left adjoint}\\
&= \exists_{f}(\id_{X}),
\end{align*}
which in turn is equivalent to saying that $f$ is a quotient.

\ref{i:height-quotient} The map $\Path{h}$ is a forest morphism between finite chains, hence it is surjective (equivalently, bijective) if, and only if, $\htf(P)=\htf(Q)$. The statement then follows from the previous item.
\end{proof}

In the case of the arboreal categories in Example~\ref{ex:arboreal-cats}, the functor $\Path$ is faithful (in fact, it is naturally isomorphic to the forgetful functor that forgets the relational structure and retains only the forest order). The next lemma characterises when the functor $\Path$ is faithful, for any arboreal category.

\begin{lemma}\label{l:one-arrow-path}
The following are equivalent in any arboreal category $\A$:
\begin{enumerate}[label=(\roman*)]
    \item\label{i:P-faithful} The functor $\Path\colon \A\to\T$ is faithful.
    \item\label{i:one-arrow} For any two paths $P,Q$, there is at most one arrow $P\to Q$.
    \item\label{i:one-quotient} For any two paths $P,Q$, there is at most one quotient $P\epi Q$.
\end{enumerate}
\end{lemma}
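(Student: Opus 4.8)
The plan is to establish the cycle of implications (i) $\Rightarrow$ (ii) $\Rightarrow$ (iii) $\Rightarrow$ (i). The middle implication is immediate: every quotient is in particular an arrow, so if there is at most one arrow $P\to Q$ then a fortiori there is at most one quotient $P\epi Q$.

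For (i) $\Rightarrow$ (ii), I would fix paths $P,Q$ and two arrows $f,g\colon P\to Q$, and compare their images $\Path f,\Path g\colon \Path P\to\Path Q$, which are forest morphisms between the finite \emph{chains} $\Path P$ and $\Path Q$. By the footnote to Definition~\ref{d:forests}, a forest morphism preserves the height of each element; hence any forest morphism between finite chains is forced to send the element of height $i$ to the element of height $i$, and is therefore uniquely determined. Thus $\Path f=\Path g$, and faithfulness of $\Path$ yields $f=g$.

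The substantial implication is (iii) $\Rightarrow$ (i), for which I would use the path-generation axiom: each object $X$ is the colimit of the path embeddings $m\colon P\emb X$ into it, so two arrows $f,g\colon X\to Y$ are equal as soon as $f\circ m=g\circ m$ for every such $m$. Fixing a path embedding $m$ and assuming $\Path f=\Path g$, we get $\exists_f m=\exists_g m$ in $\Path Y$; since these are the $\M$-subobjects arising as the embedding parts of the $(\Q,\M)$-factorisations of $f\circ m$ and $g\circ m$, I can choose a single representative $m'\colon P'\emb Y$ and write $f\circ m=m'\circ e_f$ and $g\circ m=m'\circ e_g$ with $e_f,e_g\in\Q$. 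As $m'$ is monic, $e_f$ and $e_g$ are the unique $\Q$-parts of these factorisations, and both are quotients $P\epi P'$ with the \emph{same} codomain. By Lemma~\ref{l:arboreal-facts}\ref{i:quo-emb-path}, $P'$ is a path (a quotient of the path $P$), so hypothesis (iii) applies and gives $e_f=e_g$, whence $f\circ m=g\circ m$. Ranging over all path embeddings $m$ and invoking path-generation yields $f=g$.

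The main obstacle is precisely this last step: transporting a hypothesis about quotients between \emph{paths} to arbitrary arrows $f,g\colon X\to Y$. The three ingredients that make this work are the $(\Q,\M)$-factorisation (which manufactures the quotients $e_f,e_g$ to which (iii) can be applied), the stability of paths under quotients (ensuring $P'$ is again a path), and path-generation (reducing faithfulness to agreement on path embeddings). The delicate point to get right is that $\exists_f m=\exists_g m$ as $\M$-subobjects genuinely permits a common representative $m'$, so that $e_f$ and $e_g$ share a codomain and (iii) is literally applicable rather than only up to isomorphism; this is secured by the monicity of $m'$, which pins down each of $e_f,e_g$ as the unique $\Q$-part of its factorisation.
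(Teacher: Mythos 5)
Your proposal is correct and follows essentially the same route as the paper: the (i) $\Rightarrow$ (ii) step via uniqueness of forest morphisms between finite chains, the trivial (ii) $\Rightarrow$ (iii), and the (iii) $\Rightarrow$ (i) step via path-generation together with the $(\Q,\M)$-factorisations of $f\circ m$ and $g\circ m$ through a common path embedding, which is exactly the paper's diagram with two parallel quotients $e,e'\colon P\epi Q$. The extra care you take in justifying the choice of a common representative $m'$ and the uniqueness of the $\Q$-parts is a correct elaboration of a point the paper leaves implicit.
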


\begin{proof}
    \ref{i:P-faithful} $\Rightarrow$ \ref{i:one-arrow} follows at once from the fact that there is at most one forest morphism between any two finite chains, while \ref{i:one-arrow} $\Rightarrow$ \ref{i:one-quotient} is obvious.

    For \ref{i:one-quotient} $\Rightarrow$ \ref{i:P-faithful}, consider arrows $f,g\colon X\to Y$ in $\A$ such that $\Path{f} = \Path{g}$. Since $X$ is path-generated, in order to prove that $f=g$ it suffices to show that if $m\colon P\emb X$ is an arbitrary path embedding into $X$, we have $f\circ m =g\circ m$. But $\Path{f}(m) = \Path{g}(m)$ implies the existence of a diagram as displayed below,
    \[\begin{tikzcd}
    P \arrow[twoheadrightarrow,yshift=3pt]{r}{e} \arrow[twoheadrightarrow,yshift=-3pt]{r}[swap]{e'} & Q \arrow[rightarrowtail]{r}{n} & X
    \end{tikzcd}\]
    where $f\circ m = n\circ e$ and $g\circ m = n\circ e'$. If there is at most one quotient $P\epi Q$ we get $e=e'$ and so $f\circ m =g\circ m$, as desired.
\end{proof}

In the concrete setting, e.g.\ in the category $\EM(\Ek)$, we can factor any morphism $f\colon X\to Y$ of forest-ordered structures as 
\[\begin{tikzcd}
X \arrow{r}{e} & X^{\circ} \arrow{r}{g} & Y
\end{tikzcd}\] 
where $e$ is the identity function and $g$ is a pathwise embedding whose underlying function coincides with $f$. The object $X^{\circ}$ is obtained from $X$ by keeping the same forest order and adding tuples of related elements along the branches in order to make $g$ a pathwise embedding. We shall now see that an appropriate generalisation of this factorisation, which will be needed in Section~\ref{s:pos-arbor} to construct positive bisimulations, is available in any arboreal category that satisfies some additional assumptions (which hold in all examples of arboreal categories mentioned in this paper).

The following definition is introduced in a forthcoming joint paper by Tom\'a\v{s} Jakl and the third author, where it is shown that, under appropriate assumptions, the functor~$\Path$ is a fibration and has a right adjoint. Proposition~\ref{p:fact-quo-pw} below is a direct consequence of~$\Path$ being a fibration.

\begin{definition}
Let $\A$ be an arboreal category.
\begin{enumerate}[label=(\arabic*)]
\item A \emph{tree-indexed diagram of paths} in $\A$ is a functor $T\to \A$, with $T$ a non-empty tree, that factors through the inclusion $\A_{p}\hookrightarrow\A$ of the full subcategory of paths.
\item An object $X$ of $\A$ is \emph{tree-connected} if for every tree-indexed diagram of paths $F\colon T\to \A$ and every morphism 
\[
f\colon X\to \colim{F}
\]
there is a least $j\in T$ such that $f$ factors through the colimit map ${F(j)\to \colim{F}}$.
\end{enumerate}
\end{definition}

\begin{proposition}\label{p:fact-quo-pw}
Let $\A$ be an arboreal category admitting colimits of tree-indexed diagrams of paths and such that all paths in $\A$ are tree-connected.
Each arrow $X\to Y$ in $\A$ can be decomposed as
\[\begin{tikzcd}
X \arrow{r}{e} & X^{\circ} \arrow{r}{g} & Y
\end{tikzcd}\]
where $g$ is a pathwise embedding and $\Path e$ is an isomorphism (in particular, it follows from Lemma~\ref{l:quotients-arb}\ref{i:quotient-surj} that $e$ is a quotient).
\end{proposition}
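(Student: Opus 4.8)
The plan is to realise $X^{\circ}$ as a colimit of a tree-indexed diagram of paths obtained by pushing the paths of $X$ forward along $f$, and to read off the two factors from the universal property. Since $\A$ is path-generated, $X=\colim G$ where $G\colon\Path{X}\to\A$ is the canonical diagram sending a path embedding $m\colon P\emb X$ to its domain $P$ (with the unique embeddings of Lemma~\ref{l:arboreal-facts}\ref{i:one-emb-paths} as transition maps). For each such $m$, take the $(\Q,\M)$-factorisation $P\xrightarrow{q_m}\exists_{f}P\xrightarrow{\exists_{f}m}Y$ of $f\circ m$; by Lemma~\ref{l:arboreal-facts}\ref{i:quo-emb-path} the object $\exists_{f}P$ is again a path, and setting $F(m)\coloneqq\exists_{f}P$ defines a second tree-indexed diagram of paths $F\colon\Path{X}\to\A$ whose transition maps are the embeddings induced by monotonicity of $\exists_{f}$. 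The quotients $q_m$ assemble into a natural transformation $G\Rightarrow F$, and the embeddings $\exists_{f}m$ form a cocone from $F$ to $Y$; both facts follow from functoriality of the factorisation system. I then set $X^{\circ}\coloneqq\colim F$ (which exists by hypothesis), let $e\coloneqq\colim(q_m)\colon X\to X^{\circ}$ be induced by the natural transformation, and let $g\colon X^{\circ}\to Y$ be the mediating morphism out of the colimit determined by the cocone $(\exists_{f}m)_m$. The identity $g\circ e=f$ is then immediate upon precomposing both sides with each colimit injection $P\to X$ and invoking the universal property of $X=\colim G$.

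Next I would check that $g$ is a pathwise embedding. Given a path embedding $n\colon Q\emb X^{\circ}$, tree-connectedness of the path $Q$ provides a least $j\in\Path{X}$ through which $n$ factors, say $n=\kappa_{j}\circ n'$ with $\kappa_{j}\colon F(j)\to X^{\circ}$ the colimit injection; since $n$ is an embedding, Lemma~\ref{l:factorisation-properties}\ref{cancellation-m} forces $n'\colon Q\emb F(j)$ to be an embedding as well. Then $g\circ n=\exists_{f}j\circ n'$ is a composite of embeddings with path domain, hence a path embedding, as required.

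The heart of the argument is to show that $\Path{e}$ is an isomorphism of trees. For surjectivity I would run the analysis above in reverse: given $n\colon Q\emb X^{\circ}$ with least index $j$ and $n'\colon Q\emb F(j)$ of height $h'$, the node $m\leq j$ of the branch $\down j$ at height $h'$ satisfies $\exists_{e}(\iota_m)=[n]$, because $e\circ\iota_m$ factors as $n\circ q_m$ once one identifies, via Lemma~\ref{l:arboreal-facts}\ref{i:one-emb-paths}, the embedding $F(m)\emb F(j)$ with $n'$ (both have height $h'$, using that $q_m$ preserves height by Lemma~\ref{l:quotients-arb}\ref{i:height-quotient}). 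Injectivity is the delicate point, and I expect it to be the main obstacle: a surjective forest morphism may well merge incomparable branches, and this must be excluded. The key is again tree-connectedness, which guarantees that every path embedding into $X^{\circ}$ localises to a single branch of $\Path{X}$ and so cannot identify nodes lying on distinct branches; combined with the height bookkeeping above (a height-preserving map is injective along each chain), this shows that the assignment $n\mapsto m$ is a genuine inverse to $\Path{e}$. The last clause of the statement, namely that $e$ is a quotient, then follows from Lemma~\ref{l:quotients-arb}\ref{i:quotient-surj}.

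At the conceptual level, all of this is precisely the (vertical, cartesian)-factorisation associated with the fibration $\Path$: the hypotheses are exactly those under which $\Path$ is a fibration, the pathwise embeddings are its cartesian morphisms, and the morphisms inverted by $\Path$ are the vertical ones, so the statement follows at once once that structure is available. The colimit construction above is the explicit computation of the cartesian lift, and I would include it to keep the present paper self-contained.
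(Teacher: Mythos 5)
Your construction is sound, but note that the paper does not actually prove this proposition: it is stated as a direct consequence of the functor $\Path$ being a fibration, with the proof deferred to a forthcoming paper of Jakl and the third author. Your argument therefore supplies a self-contained proof rather than reproducing one, and it is exactly the explicit computation of the cartesian lift that you describe in your closing remark: $X^{\circ}\coloneqq\colim F$ with $F(m)\coloneqq\exists_{f}(\dom m)$, indexed by the tree $\Path X$, with $e$ induced by the quotients $q_m$ and $g$ by the cocone $(\exists_f m)_m$. Two points in your sketch deserve to be made explicit to close it fully. First, to identify $\Path e(m)$ with the subobject represented by the colimit injection $\kappa_m\colon F(m)\to X^{\circ}$ you need $\kappa_m$ to be an embedding, so that $e\circ m=\kappa_m\circ q_m$ is the $(\Q,\M)$-factorisation; this follows by cancellation (Lemma~\ref{l:factorisation-properties}\ref{cancellation-m}) from $g\circ\kappa_m=\exists_f m\in\M$. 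Second, the injectivity of $\Path e$, which you rightly flag as the delicate step, closes as follows: if $[\kappa_{m_1}]=[\kappa_{m_2}]=[n]$ with $n\colon Q\emb X^{\circ}$, then $n$ factors through both $\kappa_{m_1}$ and $\kappa_{m_2}$, so the least index $j$ provided by tree-connectedness of $Q$ satisfies $j\leq m_1$ and $j\leq m_2$; on the other hand, writing $n=\kappa_j\circ n'$ with $n'\colon Q\emb F(j)$ and using that the quotients $q_m$ preserve height (Lemma~\ref{l:quotients-arb}\ref{i:height-quotient}) gives $\htf(m_i)=\htf(F(m_i))=\htf(Q)\leq\htf(F(j))=\htf(j)$, and $j\leq m_i$ together with $\htf(j)\geq\htf(m_i)$ forces $j=m_1=m_2$. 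With these two refinements (plus the routine observation that a bijective forest morphism between trees is an isomorphism in $\T$), your proof is complete and correct.
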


The morphisms that admit a decomposition of the form $g\circ e$, with $g$ an \emph{open} pathwise embedding and $\Path{e}$ an isomorphism, admit a particularly simple characterisation. In order to state this (see Lemma~\ref{l:tree-open-charact} below), we introduce the following terminology:

\begin{definition}\label{def:tree-open}
A morphism $f$ in an arboreal category is said to be \emph{tree-open} if $\Path f$ is a p-morphism (cf.\ Example~\ref{ex:p-morphisms-open-maps} for the latter notion).
\end{definition}

\begin{remark}
In all our concrete examples, a morphism between coalgebras for a game comonad is tree-open exactly when it is a surjective p-morphism between the underlying forest orders. Note that, in any arboreal category, tree-open morphisms are quotients by Lemma~\ref{l:quotients-arb}\ref{i:quotient-surj}.
\end{remark}

With this terminology, \cite[Proposition~4.2]{AR2022} can be rephrased as follows:
\begin{lemma}\label{l:open-tree-open} 
A pathwise embedding is open if, and only if, it is tree-open.
\end{lemma}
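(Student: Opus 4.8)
The plan is to reduce both openness and tree-openness of a pathwise embedding $f\colon X\to Y$ to statements about the forest morphism $\Path f\colon \Path X\to\Path Y$, using the identity $\Path f(m)=f\circ m$ valid for pathwise embeddings (recorded just after the definition of $\Path$), together with two facts I would invoke repeatedly: diagonal fillers of squares of path embeddings are automatically embeddings (the parenthetical in Definition~\ref{d:open}), and embeddings are monomorphisms, closed under composition and satisfying the cancellation law of Lemma~\ref{l:factorisation-properties}\ref{cancellation-m}. I would work throughout with the $\leq$-formulation of p-morphism from Example~\ref{ex:p-morphisms-open-maps}, since it matches the poset structure of $\Path X$ and $\Path Y$ directly and avoids passing through the covering relation.

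For the implication open $\Rightarrow$ tree-open, note that $\Path f$ is already monotone, being a forest morphism, so only the lifting clause of the p-morphism condition is at issue. Given $m\in\Path X$ and $n\in\Path Y$ with $\Path f(m)\leq n$, unwinding the inequality produces a path embedding $j\colon P\emb Q'$ (with $P=\dom m$ and $Q'=\dom n$, the domain comparison being an embedding by Lemma~\ref{l:factorisation-properties}\ref{cancellation-m}) such that $n\circ j=f\circ m$; this is exactly a commuting square of path embeddings with vertical legs $j$ and $f$. Applying openness yields an embedding $d\colon Q'\emb X$ with $d\circ j=m$ and $f\circ d=n$, and then $m'\coloneqq d\in\Path X$ witnesses the p-morphism clause, since $m\leq m'$ via $j$ and $\Path f(m')=f\circ d=n$.

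For the converse, start from a commuting square of path embeddings with $f\circ m=n\circ i$. Since $n\circ i$ factors through $n$, we have $\Path f(m)=f\circ m=n\circ i\leq n$ in $\Path Y$, so the p-morphism property of $\Path f$ supplies $m'\in\Path X$ with $m\leq m'$ and $\Path f(m')=n$. The equality $\Path f(m')=n$ is an equality of $\M$-subobjects, hence gives an isomorphism $\phi$ between the domains with $n\circ\phi=f\circ m'$, while $m\leq m'$ gives an embedding $k$ with $m=m'\circ k$. I would then propose $d\coloneqq m'\circ\phi^{-1}$ as the filler: the triangle $f\circ d=n$ is immediate, and $d\circ i=m$ reduces to the identity $\phi\circ k=i$, which follows by cancelling the monomorphism $n$ in $n\circ\phi\circ k=f\circ m'\circ k=f\circ m=n\circ i$. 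Being a composite of $m'$ with an isomorphism, $d$ is an embedding.

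The only point that requires genuine care is this last one: $\Path f(m')=n$ holds merely as an identity of subobjects, so it produces an isomorphism $\phi$ rather than an on-the-nose equality, and verifying that the resulting $d$ makes the upper triangle commute hinges on pinning $\phi$ down through the mono-cancellation step above. Everything else is a direct unwinding of the definitions, so I expect the argument to be short, yielding a self-contained derivation of the rephrasing of \cite[Proposition~4.2]{AR2022}.
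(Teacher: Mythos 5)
Your proof is correct. Note that the paper does not actually prove this lemma: it simply observes that it is a rephrasing of \cite[Proposition~4.2]{AR2022}, so your argument supplies a self-contained derivation where the paper relies on a citation, and it is the natural unwinding of the definitions that one would expect that external result to rest on. Both directions check out: in open $\Rightarrow$ tree-open you correctly use that $f\circ m$ is an embedding (because $f$ is a pathwise embedding) together with Lemma~\ref{l:factorisation-properties}\ref{cancellation-m} to turn the inequality $\Path f(m)\leq n$ into a square of path embeddings, and in the converse you correctly handle the only delicate point, namely that $\Path f(m')=n$ is an equality of $\M$-subobjects and hence only yields an isomorphism $\phi$, which you then pin down by cancelling the monomorphism $n$. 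As a small remark, that last step could also be dispatched by Lemma~\ref{l:arboreal-facts}\ref{i:one-emb-paths} (there is at most one embedding between two paths, so $\phi\circ k$ and $i$ coincide automatically), but your mono-cancellation argument is equally valid and more elementary.
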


\begin{remark}\label{rem:tree-open-path-lift-prop}
A morphism $f\colon X\to Y$ is tree-open precisely when it satisfies the following path lifting property: Any commutative square as on the left-hand side below can be completed to a commutative diagram as on the right-hand side:\footnote{Observe that the left triangle in the rightmost diagram commutes provided that the upper triangle and the square on the bottom right commute. Also, the arrow $P\to Q'$ is necessarily an embedding.}
\begin{center}
\begin{tikzcd}[column sep=1em, row sep=1em]
P \arrow[rightarrowtail]{rr} \arrow{dd} & & X \arrow{dd}{f} \\
& \phantom{Q'} & \\
Q \arrow[rightarrowtail]{rr} & & Y
\end{tikzcd}
\ \ \ \ \ \ \ 
\begin{tikzcd}[column sep=1em, row sep=1em]
P \arrow[rightarrowtail]{rr} \arrow{dd} \arrow{dr} & & X \arrow{dd}{f} \\
& Q' \arrow[rightarrowtail]{ur} \arrow[twoheadrightarrow]{dl} & \\
Q \arrow[rightarrowtail]{rr} & & Y
\end{tikzcd}
\end{center}
(Since we will not need this fact in the following, we omit its proof.)
\end{remark}

\begin{lemma}\label{l:tree-open-charact}
The following statements are equivalent for any morphism $f$ in an arboreal category satisfying the assumptions of Proposition~\ref{p:fact-quo-pw}:
\begin{enumerate}[label=(\roman*)]
\item\label{i:tree-open} $f$ is tree-open.
\item\label{i:tree-open-decomp} There exist morphisms $e$ and $g$ such that $g$ is an open pathwise embedding, $\Path e$ is an isomorphism, and $f = g\circ e$.
\end{enumerate}
\end{lemma}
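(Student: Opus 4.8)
The plan is to reduce both implications to two facts already established: the factorisation provided by Proposition~\ref{p:fact-quo-pw} and the identification of open pathwise embeddings with tree-open morphisms given by Lemma~\ref{l:open-tree-open}. The only auxiliary observation I need is that tree-openness is stable under composition with morphisms whose image under $\Path$ is an isomorphism. This rests on two elementary remarks about the category $\T$ of trees: every isomorphism of trees is a p-morphism, and p-morphisms between forests are closed under composition. Both are immediate from the covering-relation characterisation of p-morphisms recorded in Example~\ref{ex:p-morphisms-open-maps} (if $h_1(x)\cvr y$ lifts and then $h_2(x')\cvr z$ lifts, one composes the two lifts). Since $\Path$ is a functor, these remarks let me transport the p-morphism property freely across the factorisation.

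For the implication \ref{i:tree-open-decomp} $\Rightarrow$ \ref{i:tree-open}, I would start from a decomposition $f = g\circ e$ with $g$ an open pathwise embedding and $\Path e$ an isomorphism. By Lemma~\ref{l:open-tree-open}, $g$ is tree-open, so $\Path g$ is a p-morphism. Since $\Path f = \Path g\circ \Path e$ and $\Path e$ is an isomorphism, hence a p-morphism, the composite $\Path f$ is a p-morphism by the closure remark above. Thus $f$ is tree-open.

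For the converse \ref{i:tree-open} $\Rightarrow$ \ref{i:tree-open-decomp}, I would assume $f$ is tree-open and apply Proposition~\ref{p:fact-quo-pw} to obtain a factorisation $f = g\circ e$ in which $g$ is a pathwise embedding and $\Path e$ is an isomorphism; it only remains to upgrade $g$ to an \emph{open} pathwise embedding. From $\Path f = \Path g\circ \Path e$ I get $\Path g = \Path f\circ (\Path e)^{-1}$, and since $\Path f$ is a p-morphism while $(\Path e)^{-1}$ is an isomorphism of trees, the composite $\Path g$ is a p-morphism. Hence $g$ is tree-open, and because $g$ is a pathwise embedding, Lemma~\ref{l:open-tree-open} yields that $g$ is open, giving exactly the decomposition required by \ref{i:tree-open-decomp}.

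The substantive content is carried entirely by the two cited results, so I do not expect any genuine obstacle. The one point to treat with care is the invariance of tree-openness under the quotient factor $e$: this is precisely what transfers the p-morphism property between $\Path f$ and $\Path g$ in both directions, and it hinges on the (routine but essential) closure of p-morphisms under composition together with the fact that isomorphisms are p-morphisms, both read off from the covering-relation formulation in Example~\ref{ex:p-morphisms-open-maps}.
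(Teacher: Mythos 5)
Your proposal is correct and follows essentially the same route as the paper: both directions combine the factorisation from Proposition~\ref{p:fact-quo-pw} with Lemma~\ref{l:open-tree-open}, transferring the p-morphism property along $\Path f = \Path g\circ \Path e$ using that $\Path e$ is an isomorphism. The paper merely compresses the elementary remarks you spell out (isomorphisms are p-morphisms, p-morphisms compose) into the one-line assertion that $f$ is tree-open if and only if $g$ is.
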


\begin{proof}
\ref{i:tree-open} $\Rightarrow$ \ref{i:tree-open-decomp} By Proposition~\ref{p:fact-quo-pw}, we can write $f$ as $g\circ e$ where $g$ is a pathwise embedding and $\Path e$ is an isomorphism. Since $\Path f = \Path g \circ \Path e$, $f$ is tree-open if, and only if, $g$ is tree-open. In turn, by Lemma~\ref{l:open-tree-open}, $g$ is tree-open precisely when it is open.

\ref{i:tree-open-decomp} $\Rightarrow$ \ref{i:tree-open} It follows from Lemma~\ref{l:open-tree-open} that $\Path{g}$ is a p-morphism, and therefore the composite $\Path{f}=\Path{g}\circ \Path{e}$ is also a p-morphism. That is, $f$ is tree-open.
\end{proof}

\section{Existential arboreal games and pathwise embeddings}
\label{s:ex-arbor}

The purpose of this section is to unify the existential variants of the three games into an axiomatic setting in which we can prove that Duplicator has a winning strategy if, and only if, there exists a pathwise embedding in the corresponding category.
We recall from \cite{AR2022} the notion of existential arboreal game.

\begin{definition}\label{d:ex-arbor-game}
    Let $\A$ be an arboreal category, and let $X, Y$ be objects of $\A$. The \emph{existential arboreal game} $\exists\game(X,Y)$ played from $X$ to $Y$ is defined as follows. A winning position is a pair $(m,n) \in \Path X \times \Path Y $ such that $\dom(m)\cong\dom(n)$. 
    \begin{itemize}
    \item The initial position is $(\bot_X,\bot_Y)$, where $\bot_X\colon P\emb X $ and $\bot_Y\colon Q \emb Y$ are the roots of $\Path X $ and $\Path Y$, respectively. If $P\not \cong Q $, Duplicator loses the game.  
    \item At each round, if the current position is~$(m,n)$, Spoiler chooses $m'\in \Path{X}$ such that $m \cvr m'$ and Duplicator answers by selecting $n'\in\Path{Y}$ satisfying $n\cvr n'$. Duplicator wins the round if the ensuing position~$(m',n')$ is winning. Duplicator wins the game if they have a strategy that is winning after $t$ rounds for all $t\geq 0$.
    \end{itemize}
\end{definition}

\begin{theorem}[{\cite[Proposition 6.18]{AR2022}}]\label{th:ex-arbor}
    The following statements are equivalent for any arboreal category $\A$ and for any two objects $X,Y$ of $\A$:

    \begin{enumerate}[label=(\roman*)]
        \item Duplicator has a winning strategy in $\exists\game(X,Y)$.

        \item There exists a pathwise embedding $f\colon X \rightarrow Y$.
    \end{enumerate}
\end{theorem}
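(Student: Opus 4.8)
The statement to prove is Theorem~\ref{th:ex-arbor}, which asserts the equivalence, in any arboreal category $\A$, between Duplicator having a winning strategy in $\exists\game(X,Y)$ and the existence of a pathwise embedding $f\colon X\to Y$. I would prove the two implications separately. The key infrastructure is that $X$ is path-generated (it is the colimit of the paths embedding into it) together with the characterisation of pathwise embeddings via the functor $\Path\colon \A\to\T$: recall that when $f$ is a pathwise embedding, $\Path f(m) = f\circ m$ for every $m\in\Path X$, so a pathwise embedding is exactly a morphism that sends each path embedding into $X$ to a path embedding into $Y$.

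Let me sketch the direction where a pathwise embedding yields a winning strategy. Given a pathwise embedding $f\colon X\to Y$, I would have Duplicator maintain the invariant that after each round the position $(m,n)$ satisfies $n = \exists_f m = f\circ m$ (the latter equality because $f$ is a pathwise embedding). The point is that $f\circ m$ is automatically a path embedding into $Y$, and since $m$ and $f\circ m$ have isomorphic domains (both being paths with the same height, as $f\circ m$ is an embedding of $\dom(m)$), the position is winning. When Spoiler extends $m\cvr m'$ in $\Path X$, Duplicator responds with $n' \coloneqq \exists_f m' = f\circ m'$; one checks $n\cvr n'$ using that $\exists_f$ is a forest morphism and hence preserves the covering relation, and that $\dom(m')\cong\dom(n')$ since $f\circ m'$ is again an embedding. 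This gives a winning strategy by induction on the number of rounds.

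The converse is the harder direction and I expect it to be the main obstacle. Suppose Duplicator wins $\exists\game(X,Y)$. The idea is to read off a cocone of embeddings into $Y$ over the diagram of paths embedding into $X$, and then invoke path-generation to obtain a mediating morphism $f\colon X\to Y$, which will turn out to be a pathwise embedding. Concretely, a winning strategy assigns to each path embedding $m\colon P\emb X$ (reachable as the $X$-component of some play) a path embedding $n_m\colon Q\emb Y$ with $\dom(m)\cong\dom(n_m)$, compatibly with the covering relation. The delicate points are: (i) showing that the assignment $m\mapsto n_m$ is genuinely functorial/compatible with the inclusions among path embeddings of $X$, so that it constitutes a cocone — here one uses uniqueness of embeddings between paths (Lemma~\ref{l:arboreal-facts}\ref{i:one-emb-paths}) to glue the strategy's responses coherently along chains; and (ii) verifying that the mediating morphism $f$ obtained from the colimit property indeed satisfies $f\circ m = n_m$ for each path embedding $m$, so that $f$ is a pathwise embedding by definition.

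Since this theorem is quoted from \cite[Proposition~6.18]{AR2022}, the cleanest route in this paper is simply to cite that reference rather than reproduce the argument. If a self-contained proof is wanted, the principal technical hurdle is the coherence in step (i): the strategy a priori only provides responses along individual plays, and one must argue that these can be organised into a single cocone independent of the order in which paths were explored. This is where connectedness of paths and the $2$-out-of-$3$ property for quotients enter, ensuring that the domain isomorphisms $\dom(m)\cong\dom(n_m)$ are compatible and that no inconsistency arises when two plays reach comparable positions.
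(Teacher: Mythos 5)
Your proposal is correct and follows essentially the same route as the source: the paper itself defers to \cite[Proposition~6.18]{AR2022} for this theorem, and the argument it sketches for the concrete instances (e.g.\ in Proposition~\ref{p:ex-EF-struct}) is exactly your cocone-over-paths construction, with the mediating morphism supplied by path-generation and coherence secured by the uniqueness of embeddings between paths (Lemma~\ref{l:arboreal-facts}\ref{i:one-emb-paths}). Both directions as you sketch them are sound, so citing the reference or fleshing out your outline would each be acceptable here.
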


We claim that the previous result is a common generalisation of Propositions~\ref{p:ex-EF-struct}, \ref{p:ex-peb-struct} and~\ref{p:ex-mod-struct}, when applied to the arboreal categories $\EM(\EkI)$, $\EM(\PkI)$ and $\EM(\Mk)$ respectively. To show this, we need to prove that in each of the three cases the arboreal game is equivalent to the concrete one, in the sense that Duplicator has a winning strategy for one of them if and only if they have a winning strategy for the other. This is the purpose of the following three propositions.

\begin{proposition}\label{p:ex-arbor-EF}
    Duplicator has a winning strategy for the $k$-round existential \EF game from $A$ to $B$ if, and only if, they have a winning strategy for the existential arboreal game $\exists \game(\FI_k(A),\FI_k(B))$ in the category $\EM(\EkI)$.
\end{proposition}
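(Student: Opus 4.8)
The plan is to set up a round\-/by\-/round dictionary between plays of the two games and then to check that the two winning conditions express the same thing; essentially all of the work is concentrated in the second task.

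First I would recall the concrete description of $\FI_k(A) = (\EkI J(A), \pref)$: its elements are the non-empty sequences $[a_1,\ldots,a_i]$ of elements of $A$ with $i\leq k$, ordered by prefix. Since a path embedding into $\FI_k(A)$ has image a $\pref$\-/chain, the non-root elements of $\Path{\FI_k(A)}$ are in bijection with these sequences, the sequence $[a_1,\ldots,a_i]$ corresponding to the embedding with image $[a_1]\cvr\cdots\cvr[a_1,\ldots,a_i]$, while the root $\bot$ of $\Path{\FI_k(A)}$ is the embedding of the initial object, which I identify with the empty play. Under this dictionary a covering step $m\cvr m'$ is exactly an extension of a sequence by one further element, so I would match the concrete position in which Spoiler has played $a_1,\ldots,a_i$ and Duplicator $b_1,\ldots,b_i$ with the arboreal position $(m_i,n_i)$, where $m_i\leftrightarrow[a_1,\ldots,a_i]$ and $n_i\leftrightarrow[b_1,\ldots,b_i]$; Spoiler's legal moves ($m_i\cvr m_{i+1}$) are then precisely choices of $a_{i+1}\in A$, and Duplicator's responses ($n_i\cvr n_{i+1}$) precisely choices of $b_{i+1}\in B$. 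Since sequences have length $\leq k$, both games run for at most $k$ rounds.

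The heart of the argument is to show that the arboreal winning condition $\dom(m_i)\cong\dom(n_i)$ is equivalent to $\{(a_j,b_j)\mid j\leq i\}$ being a partial isomorphism from $A$ to $B$. Here $\dom(m_i)$ is the induced substructure of $\FI_k(A)$ on the chain $\{[a_1],\ldots,[a_1,\ldots,a_i]\}$, a path whose underlying forest is a linear order with $i$ elements; any $\sigma^I$\-/isomorphism $\dom(m_i)\to\dom(n_i)$ is in particular a forest isomorphism, hence the unique level\-/preserving bijection $[a_1,\ldots,a_j]\mapsto[b_1,\ldots,b_j]$. Thus $\dom(m_i)\cong\dom(n_i)$ holds precisely when this bijection preserves and reflects each relation of $\sigma^I$. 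Since all elements of a chain are pairwise $\pref$\-/comparable, condition~\ref{i:E} reduces $R^{\FI_k(A)}$ on such a tuple to $R^{J(A)}$ of the last elements, i.e.\ to $R^A(a_{j_1},\ldots,a_{j_n})$ for $R\in\sigma$ and to $a_{j_1}=a_{j_2}$ for the adjoined binary symbol $I$. Hence $\dom(m_i)\cong\dom(n_i)$ amounts to the conjunction of $R^A(a_{j_1},\ldots,a_{j_n})\Leftrightarrow R^B(b_{j_1},\ldots,b_{j_n})$ for all $R\in\sigma$ and $a_{j_1}=a_{j_2}\Leftrightarrow b_{j_1}=b_{j_2}$ for all indices --- the latter expressing that $a_j\mapsto b_j$ is a well-defined bijection of the induced carriers --- which together are exactly the statement that the relation is a partial isomorphism.

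Finally I would transport strategies along the position dictionary and observe that, because a partial isomorphism restricts to a partial isomorphism on any subset of its domain, a concrete strategy that is winning after $k$ rounds is winning after every $i\leq k$ rounds, which matches the arboreal requirement of being winning after all $t\geq 0$; conversely the arboreal condition at $i=k$ yields the concrete one. I expect the main obstacle to be precisely the structural computation of the middle paragraph, and within it the bookkeeping around $I$: one must verify that the interpretation of $I$ in $\EkI J(A)$ contributes exactly the reflection of equalities needed to upgrade a partial homomorphism to a partial isomorphism. The move\-/and\-/position correspondence itself is then routine, once the description of $\FI_k(A)$ and of its path embeddings is in place.
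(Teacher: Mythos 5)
Your proposal is correct and follows essentially the same route as the paper's proof: identify $\Path(\FI_k(A))$ minus its root with the underlying forest of $\FI_k(A)$, observe that an isomorphism $\dom(m)\cong\dom(n)$ must be the unique level-preserving bijection between chains, and use condition~\ref{i:E} (together with the interpretation of $I$ as detecting equal last elements) to reduce the isomorphism condition to the partial-isomorphism winning condition. The paper carries out the same computation, including the explicit chain of equivalences for $I$ and for each $R\in\sigma$, so there is nothing to add.
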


\begin{proof}
Note that the forest obtained by removing the root of the tree $\Path (\FI_k(A))$ is isomorphic to the underlying forest of $\FI_k(A)$. Explicitly, if $m \colon P \emb \FI_k(A)$ is an element of $\Path (\FI_k(A))$ different from the root, and $P$ has underlying order $p_1 \cvr \cdots \cvr p_i$, then the corresponding element of $\FI_k(A)$ is $m(p_{i})$. Similarly for $B$, and so valid positions in the arboreal game $\exists\game(\FI_k(A),\FI_k(B))$ correspond precisely to valid positions the $k$-round \EF game from $A$ to $B$.

We show that the isomorphism above restricts to a bijection between (non-initial) winning positions of $\exists\game(\FI_k(A),\FI_k(B))$, and winning positions of the $k$-round \EF game. (Note that the initial object of $\EM(\EkI)$ is the empty structure, so the initial position of $\exists\game(\FI_k(A),\FI_k(B))$ is always winning.)

\begin{itemize}
\item Let $(m,n)$ be a valid position in $\exists\game(\FI_k(A),\FI_k(B))$. If $(m,n)$ is winning, then $\dom(m)\cong \dom(n)$ and so we can assume without loss of generality that $\dom(m)$ and $\dom(n)$ are one and the same path $P$ whose underlying order is ${x_1 \cvr \cdots \cvr x_i}$. Write $s_j \coloneqq m(x_j)$ and $t_j \coloneqq n(x_j)$ for all $1\leq j \leq i$. We claim that $(s_i,t_i)$ is a winning position in the \EF game. 
Recalling that $s_{i} = [\epsilon_A m(x_1),\dots,\epsilon_A m(x_i)]$, and similarly for $t_{i}$, we have 
\begin{align*}
\epsilon_A s_j = \epsilon_A s_l \ & \Longleftrightarrow \  I^{\FI_k(A)}(s_j,s_l) \\
& \Longleftrightarrow \ I^P(x_j,x_l) \tag*{$m$ embedding} \\
& \Longleftrightarrow \ I^{\FI_k(B)}(t_j,t_l) \tag*{$n$ embedding} \\
& \Longleftrightarrow \ \epsilon_B t_j = \epsilon_B t_l.
\end{align*} 
Hence the relation $\{(a_j,b_j)\}_{1\leq j \leq i}$ is functional and injective.
Similarly, by replacing $I$ with $R\in \sigma$, we see that it is a partial isomorphism. 

\item Conversely, suppose that $(m,n)$ is a valid position of $\exists\game(\FI_k(A),\FI_k(B))$ such that the corresponding position in the \EF game is winning. Then the domains of $m$ and $n$ have the same height, say they are of the form ${x_1 \cvr \cdots \cvr x_i}$ and ${y_1 \cvr \cdots \cvr y_i}$ respectively. The forest isomorphism $x_{j}\mapsto y_{j}$ is also an isomorphism of structures between $\dom(m)$ and $\dom(n)$: just observe that a tuple of elements in $\FI_k(A)$ is in the interpretation of a relation symbol precisely when the tuple formed by the last elements in $J(A)$ is in the relation, and similarly for $\FI_k(B)$. But $\{(\epsilon_A m(x_j),\epsilon_A m(y_j))\mid 1\leq j \leq i\}$ is a partial isomorphism by assumption, hence $(m,n)$ is a winning position in $\exists\game(\FI_k(A),\FI_k(B))$.
\end{itemize}

Thus, the isomorphism between $\Path (\FI_k(A))$, with the root removed, and the underlying forest of $\FI_k(A)$ allows us to translate Duplicator winning strategies in either game into winning strategies in the other game.
\end{proof}

\begin{proposition}\label{p:ex-arbor-peb}
    Duplicator has a winning strategy for the existential $k$-pebble game from $A$ to $B$ if, and only if, they have a winning strategy for the existential arboreal game $\exists \game(\FI_k(A),\FI_k(B))$ in the category $\EM(\PkI)$.
\end{proposition}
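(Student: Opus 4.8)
The plan is to mirror the proof of Proposition~\ref{p:ex-arbor-EF}, the only new ingredient being the bookkeeping of the pebbling function. First I would record the root-removal isomorphism: deleting the root of the tree $\Path(\FI_k(A))$ yields a forest isomorphic to the underlying forest of $\FI_k(A)$, where a path embedding $m\colon P\emb \FI_k(A)$ with $P$ having underlying chain $p_1\cvr\cdots\cvr p_i$ corresponds to the move sequence $m(p_i)=[(q_1,a_1),\dots,(q_i,a_i)]\in\Pk J(A)$. The crucial observation is that the pebble indices $q_1,\dots,q_i$ are exactly the values of the pebbling function along the chain, since the pebbling function of $\FI_k(A)$ sends a sequence to the pebble index of its last move. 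Hence valid positions of $\exists\game(\FI_k(A),\FI_k(B))$ correspond to pairs of plays in the concrete existential $k$-pebble game, with one covering step $m\cvr m'$ (extending the chain by one move) matching one move of the pebble game (the choice of a pebble index together with an element).

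Next I would check that this correspondence restricts to a bijection between non-initial winning positions. Here the pebbling function does double duty. On the one hand, a winning position of the arboreal game requires an isomorphism $\dom(m)\cong\dom(n)$ in $\EM(\PkI)$, and such an isomorphism must preserve the pebbling functions; by the previous paragraph this forces the two plays to use the same sequence of pebble indices, which is precisely the convention that Duplicator answers with the corresponding pebble. On the other hand, exactly as in the \EF case, the fact that $m$ and $n$ are embeddings (hence reflect the interpretations of $I$ and of each $R\in\sigma$) lets one translate the structural isomorphism into a statement about the selected elements. The one genuinely new point is condition~\ref{i:P}: in $\FI_k(A)$ a tuple of comparable prefixes lies in the interpretation of a relation symbol only when the pebble indices witnessing the shorter prefixes are not reused further along the branch. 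I would argue that this makes the induced substructure $\dom(m)$ record precisely the relations holding among elements that are simultaneously pebbled at some round, so that matching $\dom(m)$ with $\dom(n)$ amounts to requiring that, at every round, the current pebble placings determine a partial isomorphism from $A$ to $B$.

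The hard part will be exactly this reconciliation between the full-path substructure $\dom(m)$ and the concrete winning condition, which is phrased in terms of current pebble positions rather than the entire move history. Concretely, I would show that $I^{\FI_k(A)}(s_i,s_j)$ (and likewise $R^{\FI_k(A)}$ on a comparable tuple) holds precisely when the pebbles placed at the earlier prefixes remain live up to the latest one; from this it follows that every relation recorded in $\dom(m)$ is a relation among the current pebble positions at some round $t\leq i$ (the pebble indices involved are then forced to be pairwise distinct), and conversely that every current-position relation at a round $t$ appears in $\dom(m)$, since the last occurrences of distinct live pebbles automatically satisfy condition~\ref{i:P}. As the isomorphism $\dom(m)\cong\dom(n)$ restricts along prefixes to the lower positions, this delivers the round-by-round equivalence.

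Once the bijection of winning positions is established and seen to be compatible with the covering relation, translating winning strategies in either direction is immediate, just as at the end of the proof of Proposition~\ref{p:ex-arbor-EF}: a Duplicator strategy for the pebble game is read off from the responses $n\cvr n'$ in the arboreal game and vice versa, and the matching of winning positions guarantees that the winning condition is preserved after every round.
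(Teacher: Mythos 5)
Your proposal is correct and follows essentially the same route as the paper's proof: the root-removal isomorphism between $\Path(\FI_k(A))$ and the underlying forest of $\FI_k(A)$, the observation that preservation of the pebbling function forces the two plays to use the same pebble indices, and the use of the embedding property of $m$ and $n$ (together with condition~\ref{i:P}) to match the induced substructures with the partial-isomorphism condition on current pebble positions. The reconciliation you flag as the hard part is exactly the step the paper handles by restricting attention to last occurrences of each pebble index, so no genuinely new argument is needed.
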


\begin{proof}
The proof is analogous to that of Proposition~\ref{p:ex-arbor-EF}: there is an isomorphism between the forest obtained from $\Path(\FI_k(A))$ by removing the root, and the underlying forest of $\FI_k(A)$. This isomorphism restricts to a bijection between (non-initial) winning positions of $\exists\game(\FI_k(A),\FI_k(B))$, and winning positions in the existential $k$-pebble game:  
\begin{itemize}
\item If $(m,n)$ is a winning position of $\exists\game(\FI_k(A),\FI_k(B))$ then $\dom(m)\cong \dom(n)$, so we can assume without loss of generality that $\dom(m)$ and $\dom(n)$ are one and the same path $Q$ with underlying order $y_1 \cvr \cdots \cvr y_n$. For all $i\in \{1,\ldots,n\}$, let $s_i = m(y_i)$ and $t_i = n(y_i)$. 
Morphisms in $\EM(\PkI)$ preserve pebble functions, hence for every $i$ we have $p_A(s_i) = p_Q(y_i) = p_B(t_i) \coloneqq p_i$. Since the pebble indices in the sequences $s_{n}$ and $t_{n}$ coincide, $(s_n,t_n)$ is a valid position in the existential $k$-pebble game.
Let us show that the latter is a winning position. Let $i,j\leq n$ be the last occurrences of pebbles $p_i,p_j$ in $p_1,\dots,p_n$, and suppose pebbles $p_{i},p_{j}$ are placed on elements $a_{i},a_{j}$ and $b_{i},b_{j}$ of $A$ and $B$, respectively. By assumption, given that the $s_i$ (respectively, the $t_i$) are pairwise comparable, and $m,n$ are embeddings, we get
\[
a_i = a_j \ \Leftrightarrow \ I^{\FI_k(A)}(s_i,s_j) \ \Leftrightarrow \ I^Q(y_i,y_j) \ \Leftrightarrow \ I^{\FI_k(B)}(t_i,t_j) \ \Leftrightarrow \ b_i = b_j.
\] 
The same reasoning applies, mutatis mutandis, to any relation symbol in the vocabulary in place of $I$, and to any subset of the $a_i$'s and $b_i$'s such that $i$ is the last occurrence of the pebble $p_i$. Therefore, the current positions of the pebbles form a partial isomorphism between $A$ and $B$.

\item Conversely, suppose $(m,n)$ is a valid position of $\exists\game(\FI_k(A),\FI_k(B))$ whose corresponding position $[(p_1,a_1),\dots,(p_n,a_n)], [(p_1,b_1),\dots,(p_n,b_n)]$ in the existential $k$-pebble game is winning. As in the proof of Lemma~\ref{p:ex-arbor-EF}, it can be shown that
\[
[(p_1,a_1),\dots,(p_i,a_i)] \mapsto [(p_1,b_1),\dots,(p_i,b_i)]
\] 
is an isomorphism in the category $\EM(\PkI)$ between $\dom(m)$ and $\dom(n)$, and so $(m,n)$ is a winning position. 
\end{itemize}

The translation of Duplicator winning strategies in one game to winning strategies in the other game is then obtained by applying the isomorphism between $\Path (\FI_k(A))$, with the root removed, and the underlying forest of $\FI_k(A)$.
\end{proof}

\begin{proposition}\label{p:ex-arbor-mod}
    Duplicator has a winning strategy for the $k$-round existential bisimulation game from $(A,a)$ to $(B,b)$ if, and only if, they have a winning strategy for the existential arboreal game $\exists \game(F_k (A,a), F_k (B,b))$ in the category $\EM(\Mk)$.
\end{proposition}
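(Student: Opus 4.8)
The plan is to follow the template of Propositions~\ref{p:ex-arbor-EF} and~\ref{p:ex-arbor-peb}, translating between the two games by matching up their positions and, in particular, their winning positions. The first step is to describe $\Path(F_k(A,a))$ explicitly. Since the objects of $\EM(\Mk)$ are \emph{pointed}, the situation differs slightly from the previous two cases: here one does not discard the root. As $F_k(A,a)$ is a synchronization tree, every node is the top of a unique chain from the root, so path embeddings $m\colon P\emb F_k(A,a)$ correspond bijectively to nodes of $F_k(A,a)$, \ie to paths
\[
a=a_0\xrightarrow{R_1}a_1\to\cdots\xrightarrow{R_i}a_i
\]
in $A$ of length $i\leq k$. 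Under this identification a pair $(m,n)\in\Path(F_k(A,a))\times\Path(F_k(B,b))$ corresponds to a pair of such paths, one in $A$ and one in $B$.

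Next I would pin down the winning positions. In a synchronization tree each covering pair $x_{\ell-1}\cvr x_\ell$ carries a unique relation label $R$ with $R(x_{\ell-1},x_\ell)$, and the unary predicates mark the nodes on which they hold; since $m$ and $n$ are embeddings, this structure is inherited from $F_k(A,a)$ and $F_k(B,b)$. Thus an isomorphism $\dom(m)\cong\dom(n)$ exists precisely when the two paths have equal height, agree on the relation symbols along their covering pairs (so $R_\ell=S_\ell$), and satisfy $P^A(a_\ell)\Leftrightarrow P^B(b_\ell)$ for every unary $P\in\sigma$ and every $0\leq\ell\leq i$. The second condition is exactly the requirement---built into the bisimulation game---that Spoiler and Duplicator move along a common relation at each round, while the third is precisely Duplicator's winning condition. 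In particular the initial positions correspond: $\dom(\bot_X)\cong\dom(\bot_Y)$ holds if, and only if, $P^A(a)\Leftrightarrow P^B(b)$ for all unary $P$, which is the $i=0$ instance of Duplicator's winning condition.

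Finally I would translate strategies. A Spoiler move $m\cvr m'$ in $\exists\game(F_k(A,a),F_k(B,b))$ extends the $A$-path by one step $a_i\xrightarrow{R}a_{i+1}$, and the covering pair so created is labelled by a unique $R$; this is exactly Spoiler's choice of a relation $R$ and an element $a_{i+1}$ with $R^A(a_i,a_{i+1})$ in the bisimulation game, and conversely. The delicate point, which I expect to be the main obstacle, is the treatment of Duplicator's responses: a move $n\cvr n'$ in the arboreal game extends the $B$-path by a step labelled by \emph{some} relation $S$, with no a priori constraint $S=R$, whereas the bisimulation game forces Duplicator to reply along $R$. This is resolved by the winning condition: if $(m',n')$ is to be winning then $\dom(m')\cong\dom(n')$ forces $S=R$, so any Duplicator \emph{winning} strategy in the arboreal game automatically moves along the same relation and hence yields a legal winning strategy in the bisimulation game, and symmetrically. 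Combined with the correspondence of winning positions established above, this gives the equivalence in both directions.
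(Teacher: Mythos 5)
Your proposal is correct and follows essentially the same route as the paper: identifying $\Path(F_k(C,c))$ with the underlying tree of the unravelling so that path embeddings correspond to plays, and then matching winning positions. The paper leaves the verification at ``it is not difficult to see''; your additional observations (that the root is retained in the pointed setting, and that the isomorphism requirement in a winning position automatically forces Duplicator's arboreal move to follow the same relation symbol $R$) are exactly the details being elided, and they are argued correctly.
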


\begin{proof}
We spell out the translation between positions of the game $\exists \game(F_k (A,a), F_k (B,b))$ and positions of the $k$-round existential bisimulation game. For any pointed Kripke model $(C,c)$, the play in the existential bisimulation game corresponding to a path embedding $m\colon P \emb F_k(C,c)$, with $P$ given by $x_1 \cvr \cdots \cvr x_i$, is simply $[m(x_{1}),\ldots, m(x_i)]$. Conversely, a play 
\[
c=c_0 \xrightarrow{R_1} c_1 \to\cdots\xrightarrow{R_i}c_i
\]
of length at most $k$ in $(C,c)$ corresponds to the path embedding $m\colon P \emb F_k(C,c)$ where $P$ is the path $x_1 \cvr \cdots \cvr x_i$ and $m$ sends $x_{j}$ to 
\[
c_0 \cvr (c_0 \xrightarrow{R_1} c_1 ) \cvr (c_0 \xrightarrow{R_1} c_1 \xrightarrow{R_2} c_2 ) \cvr \cdots \cvr (c_0 \xrightarrow{R_1} c_1 \to\cdots\xrightarrow{R_j}c_j).
\]
These assignments define an isomorphism between the tree $\Path(F_k(C,c))$ and the underlying tree of $F_k(C,c)$. When applied to the Kripke models $(A,a)$ and $(B,b)$, it is not difficult to see that this yields a bijection between Duplicator winning strategies in $\exists \game(F_k (A,a), F_k (B,b))$ and Duplicator winning strategies in the $k$-round existential bisimulation game from $(A,a)$ to $(B,b)$.
\end{proof}

As the concrete definitions of pathwise embedding coincide with the arboreal one, the latter three propositions show that Theorem~\ref{th:ex-arbor} is indeed a common generalisation of the results stated in Section~\ref{s:concrete-exs}, when applied to the Eilenberg-Moore categories $\EM(\EkI)$, $\EM(\PkI)$ and $\EM(\Mk)$ respectively.

\section{Positive arboreal games and positive bisimulations}
\label{s:pos-arbor}

Following the same strategy as in the previous section, we will now unify the positive variants of the three model-comparison games, and relate them to an axiomatic notion of positive bisimulation. For the following definitions, we fix an arbitrary arboreal category $\A$ and objects $X,Y$ of $\A$. We write $X\to Y$ if there exists an arrow from $X$ to $Y$; if no such arrow exists, we write $X\not\to Y$.

\begin{definition}\label{d:pos-arbor-game}
The \emph{positive arboreal game} $\game^+(X,Y)$ played from $X$ to $Y$ is defined as follows. A winning position is a pair $(m,n) \in \Path X \times \Path Y $ such that $\dom(m)\to\dom(n)$. 
    \begin{itemize}
    \item The initial position is $(\bot_X,\bot_Y)$, where $\bot_X\colon P\emb X $ and $\bot_Y\colon Q \emb Y$ are the roots of $\Path X $ and $\Path Y$, respectively. If $P\not \rightarrow Q$, Duplicator loses the game.  
    \item At each round, if the current position is~$(m,n)$, either Spoiler chooses $m'\in \Path{X}$ such that $m \cvr m'$ and Duplicator responds with some $n'\in\Path{Y}$ satisfying $n\cvr n'$, or Spoiler chooses $n''$ with $n \cvr n''$ and Duplicator responds with $m''$ such that $m\cvr m''$. Duplicator wins the round if the ensuing position is winning, and they win the game if they have a strategy that is winning after $t$ rounds for all~$t\geq 0$.
    \end{itemize}
\end{definition}

\begin{remark}\label{r:winning-quotient}
    The definition of the game $\game^+(X,Y)$ ensures that any valid position $(m,n)$ satisfies $\htf(m)=\htf(n)$. Hence, by item~\ref{i:height-quotient} in Lemma~\ref{l:quotients-arb}, winning positions can equivalently be defined by requiring the existence of a quotient $\dom(m)\epi\dom(n)$.
\end{remark}

\begin{definition}\label{d:pos-bisim-arb}
    A \emph{positive bisimulation} from $X$ to $Y$ is a diagram
\begin{center}
    \begin{tikzcd}
    Z_1\arrow{d}[swap]{f} \arrow{r}{h} &Z_2\arrow{d}{g} \\
    X & Y
    \end{tikzcd}
    \end{center}
such that $f$ and $g$ are open pathwise embeddings and $\Path h$ is a tree isomorphism.
\end{definition}

\begin{remark}
In the concrete cases, the assumption that $\Path h$ is a tree isomorphism is equivalent to saying that $h$ is a bijective morphism of coalgebras. Therefore, Definition~\ref{d:pos-bisim-arb} is a direct generalisation of Definition~\ref{d:pos-bisim-concrete}.
\end{remark}

\begin{remark}\label{rem:positive-bisim-as-spans}
Positive bisimulations can be equivalently described as spans where one leg is an open pathwise embedding, and the other leg is the composition of an arrow~$h$ such that $\Path h$ is a tree isomorphism followed by an open pathwise embedding. In view of Lemma~\ref{l:tree-open-charact}, under mild assumptions (which are satisfied by all our concrete examples), positive bisimulations can be identified with spans of the form
\[\begin{tikzcd}
X  & Z \arrow{l}[swap]{f} \arrow{r}{g'} & Y
\end{tikzcd}\]
where $f$ is an open pathwise embedding and $g'$ is tree-open.
\end{remark}

The main result of this section is the following:
\begin{theorem}\label{th:game-bisim}
Let $\A$ be an arboreal category such that the functor $\Path\colon \A \to \T$ is faithful, $\A$ admits colimits of tree-index diagrams of paths, and paths in $\A$ are tree-connected. For any two objects $X,Y$ of $\A$ admitting a product, Duplicator has a winning strategy in $\game^+(X,Y)$ if and only if there exists a positive bisimulation from $X$ to $Y$.
\end{theorem}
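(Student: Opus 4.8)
The plan is to prove both implications entirely through the functor $\Path\colon \A\to\T$, reducing the statement to a combinatorial fact about p-morphisms between trees, in the same spirit as the existential case (Theorem~\ref{th:ex-arbor}). Throughout I will use three translations: $f,g$ are open pathwise embeddings exactly when $\Path f,\Path g$ are p-morphisms (Lemma~\ref{l:open-tree-open} together with Definition~\ref{def:tree-open}); a valid position $(m,n)$ is winning exactly when a quotient $\dom(m)\epi\dom(n)$ exists (Remark~\ref{r:winning-quotient}); and between any two paths there is at most one arrow, in particular at most one quotient, by faithfulness of $\Path$ (Lemma~\ref{l:one-arrow-path}).

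For the direction from a positive bisimulation to a winning strategy, write $\theta\coloneqq \Path h$, a tree isomorphism, and recall that $\Path f,\Path g$ are p-morphisms. Duplicator maintains, after the position $(m,n)$, a witness $w\in\Path{Z_1}$ with $f\circ w=m$ and $\Path(g\circ h)(w)=n$. Since $f$ is a pathwise embedding we have $\Path f(w)=f\circ w$, so in fact $\dom(w)=\dom(m)$, and $\Path(g\circ h)(w)=n$ means $(g\circ h)\circ w$ factors as $\dom(m)\epi\dom(n)\xrightarrow{n}Y$; the resulting quotient $\dom(m)\epi\dom(n)$ is precisely the winning condition (Remark~\ref{r:winning-quotient}). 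At the root, $w\coloneqq\bot_{Z_1}$ works, as p-morphisms and tree isomorphisms preserve roots. If Spoiler plays $m\cvr m'$ in $X$, the p-morphism $\Path f$ lifts this to some $w\cvr w'$ with $\Path f(w')=m'$, and Duplicator answers $n'\coloneqq\Path(g\circ h)(w')$, which covers $n$ because forest morphisms preserve the covering relation (Definition~\ref{d:forests}); if instead Spoiler plays $n\cvr n''$ in $Y$, I lift along $\Path g$ to some $\theta(w)\cvr w''$ in $\Path{Z_2}$, pull back through $\theta^{-1}$ to a cover $w\cvr\theta^{-1}(w'')$, and read off the $X$-response via $\Path f$. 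In either case the invariant is preserved, so the strategy is winning.

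For the converse I reconstruct the bisimulation from a fixed winning strategy. Let $W\subseteq \Path X\times\Path Y$ be the tree of positions reachable under the strategy; every node is a winning position whose two coordinates have equal height, and the projections $\pi_1\colon W\to\Path X$ and $\pi_2\colon W\to\Path Y$ are p-morphisms, the lifting condition on $\pi_1$ (respectively $\pi_2$) being exactly Duplicator's ability to answer a Spoiler move in $X$ (respectively in $Y$). For each node $(m,n)$, the winning condition and Remark~\ref{r:winning-quotient} yield a unique quotient $q_{(m,n)}\colon\dom(m)\epi\dom(n)$. Sending $(m,n)$ to $\dom(m)$ and to $\dom(n)$, equipped with the evident embeddings along the order of $W$, gives two tree-indexed diagrams of paths $F_1,F_2\colon W\to\A_{p}$, and I set $Z_1\coloneqq\colim F_1$ and $Z_2\coloneqq\colim F_2$, which exist by hypothesis. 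The cocones $\{m\}$ and $\{n\}$ induce $f\colon Z_1\to X$ and $g\colon Z_2\to Y$, while the family $\{q_{(m,n)}\}$ is a natural transformation $F_1\Rightarrow F_2$ — its naturality squares commute automatically, as there is at most one arrow between paths — and so induces $h\colon Z_1\to Z_2$. The assumed product $X\times Y$ furnishes the ambient object carrying these pairs of plays, mirroring the concrete universe of valid plays in the proof of Proposition~\ref{p:pos-EF-struct}.

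It remains to check the three conditions of Definition~\ref{d:pos-bisim-arb}. Each $q_{(m,n)}$ is a quotient between paths of equal height, hence height-preserving, so $\Path q_{(m,n)}$ is an isomorphism of chains (Lemma~\ref{l:quotients-arb}\ref{i:height-quotient}); passing to colimits, $\Path h$ is a tree isomorphism. For $f$ and $g$ I claim the cocone legs are path embeddings and that $\Path{Z_1}\cong W\cong\Path{Z_2}$ with $\Path f=\pi_1$ and $\Path g=\pi_2$; granting this, $f$ and $g$ are pathwise embeddings whose images under $\Path$ are p-morphisms, so they are tree-open and therefore open by Lemma~\ref{l:open-tree-open}. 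I expect the main obstacle to be precisely this identification of $\Path(\colim F_i)$ with the indexing tree $W$, and the verification that the cocone legs realise the projections $\pi_i$: this is exactly where tree-connectedness of paths and the existence of colimits of tree-indexed diagrams of paths are needed, and it is the arboreal manifestation of the fibrational behaviour of $\Path$ underlying Proposition~\ref{p:fact-quo-pw}. Once this structural lemma is in place, everything else reduces to bookkeeping with p-morphisms of trees (Example~\ref{ex:p-morphisms-open-maps}) and the uniqueness of arrows between paths.
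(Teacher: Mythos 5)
Your forward direction is correct and is essentially the paper's argument: the set of pairs $\bigl(\Path f(w),\Path(g\circ h)(w)\bigr)$ for $w\in\Path{Z_1}$ is exactly the positive back-and-forth system the paper extracts from a positive bisimulation (Propositions~\ref{p:game-backforth} and~\ref{p:bisim-backforth}), and your witness-maintenance argument via the p-morphisms $\Path f$, $\Path g$ and the isomorphism $\theta$ is a correct inlining of that.

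The converse, however, has a genuine gap, and it is exactly the step you flag as ``the main obstacle'': you need that for the tree-indexed diagrams $F_1,F_2\colon W\to\A_p$ the colimit legs $\dom(m)\to\colim F_i$ are embeddings and that $\Path(\colim F_i)\cong W$ with $\Path f=\pi_1$, $\Path g=\pi_2$. Nothing in the paper's toolkit delivers this: the hypotheses ``colimits of tree-indexed diagrams of paths exist'' and ``paths are tree-connected'' are only ever used to invoke Proposition~\ref{p:fact-quo-pw}, which gives a factorisation of an \emph{existing} arrow, not a realisation of an abstract tree as $\Path$ of a colimit (that realisation is essentially the right-adjoint/fibration property of $\Path$ from the forthcoming work alluded to before Proposition~\ref{p:fact-quo-pw}, which this paper does not prove or assume). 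Deferring it as ``bookkeeping'' leaves the heart of the theorem unproved. The paper's proof of Proposition~\ref{p:backforth-bisim} avoids the issue by never taking an abstract colimit: it factors the product projections as $X\times Y\xrightarrow{p_X}X^0\xrightarrow{q_X}X$ and $X\times Y\xrightarrow{p_Y}Y^0\xrightarrow{q_Y}Y$ via Proposition~\ref{p:fact-quo-pw}, realises each winning pair $\brp{m,n}$ as the path embedding $\langle m,n\circ\lambda\rangle$ into $X\times Y$ and hence as path embeddings $\hat m\emb X^0$, $\hat n\emb Y^0$, and then defines $Z_1,Z_2$ as the \emph{suprema} of these (downward-closed, by strength of the system) families in $\Emb{X^0}$ and $\Emb{Y^0}$ — suprema which exist and satisfy $\Path{Z_i}\cong W$ by Lemma~\ref{l:arboreal-facts}\ref{i:completely-join-irr}. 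This is also where the product hypothesis does real work; in your construction $X\times Y$ is mentioned but never used. To repair your proof you would either have to prove the colimit-realisation lemma from the axioms (nontrivial) or switch to the paper's subobject construction.
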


The arboreal categories $\EM(\EkI)$, $\EM(\PkI)$ and $\EM(\Mk)$ are easily seen to satisfy the assumptions of the previous theorem. Moreover, we will see in Section~\ref{ss:pos-arbor-games} that the positive arboreal game played in the latter categories is equivalent to the positive versions of the $k$-round \EF game, $k$-pebble game, and $k$-round bisimulation game, respectively. Therefore, Propositions~\ref{p:pos-EF-struct}, \ref{p:pos-peb-struct} and~\ref{p:pos-mod-struct} in Section~\ref{s:concrete-pos} are instances of Theorem~\ref{th:game-bisim}.

To prove the latter result, we first rephrase positive arboreal games in terms of \emph{positive back-and-forth systems} in Section~\ref{ss:pos-backforth}. We then show in Section~\ref{ss:pos-arbor-bisim} that, under the assumptions of Theorem~\ref{th:game-bisim}, there exists a positive back-and-forth system from~$X$ to~$Y$ if and only if there exists a positive bisimulation from~$X$ to~$Y$.

\subsection{Positive back-and-forth-systems}\label{ss:pos-backforth}

The following is a positive variant of the notion of back-and-forth system in arboreal categories introduced in \cite[Definition~6.1]{AR2022}. Given objects $X$ and $Y$, we write $\brp{m,n}$ to denote a pair $(m,n)\in \Path X \times \Path Y$ such that there exists a quotient $\dom(m)\epi \dom(n)$.

\begin{definition}\label{d:pos-backforth}
A \emph{positive back-and-forth system} from $X$ to $Y$ is a subset $\B$ of $\Path X \times \Path Y$ consisting of pairs $\brp{m,n}$ such that:

    \begin{enumerate}[label=(\roman*)]
        \item\label{i:pos-back-forth-base} $\brp {\bot_X,\bot_Y} \in \B$.

        \item\label{i:pos-back-forth-forth} If $\brp{m,n}\in \B$ and $m\cvr m'$, there exists $n'$ such that $n\cvr n'$ and $\brp{m',n'}\in \B$.

        \item\label{i:pos-back-forth-back} If $\brp{m,n}\in \B$ and $n\cvr n'$, there exists $m'$ such that $m\cvr m'$ and $\brp{m',n'}\in \B$.
    \end{enumerate}
A positive back-and-forth system $\B$ is \emph{strong} if for all $\brp{m,n}\in \B$, if $m' \cvr m$ and $n'\cvr n$, then $\brp{m',n'}\in \B$.
\end{definition}

\begin{remark}
In the definition of strong positive back-and-forth system, the condition $\brp{m',n'}$ is automatically satisfied. Indeed, if $f\colon \dom(m)\epi \dom(n)$, $m'\cvr m$ and $n'\cvr n$, then the (quotient, embedding) decomposition of the restriction of $f$ to $\dom(m')$ yields a quotient $\dom(m')\epi \dom(n')$. Just observe that quotients between paths preserve the height, and $n'$ is the only element below $n$ of height $\htf(n)-1$.
\end{remark}

Given a positive back-and-forth system, we can always construct a strong positive back-and-forth system; this simple observation will be useful in the next section.

\begin{lemma}\label{l:strong-backforth}
    If there exists a positive back-and-forth system from $X$ to $Y$, there exists also a strong positive back-and-forth system from $X$ to $Y$.
\end{lemma}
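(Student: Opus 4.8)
The plan is to pass from a given positive back-and-forth system $\B$ to the sub-system $\B'$ consisting of exactly those pairs that are \emph{reachable} inside $\B$ by alternating forth/back moves from the base pair. Concretely, I would let $\B'$ be the set of all $\brp{m,n}\in\B$ for which there is a finite chain
\[
(\bot_X,\bot_Y)=(p_0,q_0),(p_1,q_1),\dots,(p_t,q_t)=(m,n)
\]
with every $(p_i,q_i)\in\B$ and $p_i\cvr p_{i+1}$, $q_i\cvr q_{i+1}$ for all $i$, so that each step advances both coordinates by a single cover. Since $\B'\subseteq\B$, every pair in $\B'$ is automatically of the form $\brp{m,n}$, and the validity condition is inherited for free.

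The first task is to verify the three defining clauses of a positive back-and-forth system for $\B'$. The base clause $\brp{\bot_X,\bot_Y}\in\B'$ holds via the trivial chain. For the forth clause, given $\brp{m,n}\in\B'$ (witnessed by a chain) and a cover $m\cvr m'$, I apply the forth property of $\B$ to the pair $(m,n)\in\B$ to obtain $n'$ with $n\cvr n'$ and $\brp{m',n'}\in\B$; appending $(m',n')$ to the witnessing chain shows $\brp{m',n'}\in\B'$. The back clause is symmetric, using the back property of $\B$. This part is routine: it uses nothing beyond the axioms for $\B$ together with the observation that every forth/back move increases \emph{both} coordinates by exactly one cover, so the chain condition is preserved.

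The only point requiring care is \emph{strongness}, and here the key structural fact is that $\Path X$ and $\Path Y$ are trees: in a tree every non-root element $x$ has $\down x$ finite and linearly ordered, hence covers a \emph{unique} element. Consequently, if $\brp{m,n}\in\B'$ with $m,n$ non-roots and $m'\cvr m$, $n'\cvr n$, then $m'$ and $n'$ are forced to be the immediate predecessors of $m$ and $n$. Taking a chain witnessing $\brp{m,n}\in\B'$ whose final step is $(p_{t-1},q_{t-1})\to(m,n)$, uniqueness of predecessors gives $p_{t-1}=m'$ and $q_{t-1}=n'$, and the truncated chain witnesses $\brp{m',n'}\in\B'$. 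If instead one of $m,n$ is a root, then—since $\Path X,\Path Y$ are trees with unique roots $\bot_X,\bot_Y$, and pairs $\brp{m,n}$ have $\htf(m)=\htf(n)$ by Lemma~\ref{l:quotients-arb}\ref{i:height-quotient}—both are roots, and the strongness clause is vacuous.

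I expect the main obstacle to be conceptual rather than computational: the tempting construction is the plain downward closure of $\B$ under predecessors, but this need not remain a back-and-forth system, since a predecessor pair can admit a forth-extension in the first coordinate that has no matching response recorded in the closure. Restricting instead to $\B$-reachable pairs sidesteps this, because reachability builds in a full admissible history, and the uniqueness of predecessors in trees makes that history closed under taking one step down—which is precisely strongness.
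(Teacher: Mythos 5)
Your proof is correct and follows essentially the same route as the paper: both pass to the subset $\B'\subseteq\B$ of \emph{reachable} pairs, and since $\Path X$ and $\Path Y$ are trees (so every element has a unique chain of predecessors down to the root), your ``chain from $(\bot_X,\bot_Y)$'' definition of reachability coincides with the paper's ``all prefix pairs $\brp{m_j,n_j}$ lie in $\B$''. The only difference is that the paper leaves the verification of the axioms and of strongness as ``easily seen'', whereas you spell it out — including the correct observation that uniqueness of immediate predecessors in a tree is what makes reachability closed under one step down.
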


\begin{proof}
This is a straightforward adaptation of the proof of \cite[Lemma~6.3]{AR2022}. Given a positive back-and-forth system $\B$ from $X$ to $Y$, we construct a strong positive back-and-forth system $\B'$ by removing those pairs that prevent $\B$ from being strong. 

Any pair $\brp{m,n}$ satisfies $\htf(m)=\htf(n)$; write $k$ for the latter natural number, and for every $j\in \{0,\ldots, k\}$ let $m_{j}$ and $n_{j}$ be the unique path embeddings of height $j$ below $m$ and $n$, respectively. We say that $\brp{m,n}$ is \emph{reachable} if $\brp{m,n}\in \B$ for all $j\in \{0,\ldots, k\}$. The subset $\B'$ of $\B$ consisting of the reachable pairs is easily seen to be a strong positive back-and-forth system from $X$ to $Y$. 
\end{proof}

The existence of a Duplicator winning strategy in $\game^+(X,Y)$ can then be translated in terms of positive back-and-forth systems:

\begin{proposition}\label{p:game-backforth}
Duplicator has a winning strategy in the game $\game^+(X,Y)$ if, and only if, there exists a (strong) positive back-and-forth system from $X$ to $Y$.
\end{proposition}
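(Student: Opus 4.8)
The plan is to prove the two implications separately, establishing the equivalence between winning strategies and (plain) positive back-and-forth systems directly, and then obtaining the ``strong'' version for free from Lemma~\ref{l:strong-backforth}. Throughout I will use Remark~\ref{r:winning-quotient}: since every valid position $(m,n)$ of $\game^+(X,Y)$ satisfies $\htf(m)=\htf(n)$, a position is winning precisely when it is a pair $\brp{m,n}$, \iec when there is a quotient $\dom(m)\epi\dom(n)$. This matches exactly the shape of the pairs constituting a positive back-and-forth system, which is what makes the correspondence run.

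For the direction from systems to strategies, suppose $\B$ is a positive back-and-forth system from $X$ to $Y$. I would have Duplicator play so as to maintain the invariant that the current position $(m,n)$ satisfies $\brp{m,n}\in\B$. The initial position lies in $\B$ by condition~\ref{i:pos-back-forth-base}, and any pair in $\B$ is a winning position, since its membership witnesses a quotient $\dom(m)\epi\dom(n)$ (Remark~\ref{r:winning-quotient}). At a round from such a position, if Spoiler makes a forth move $m\cvr m'$, condition~\ref{i:pos-back-forth-forth} supplies some $n'$ with $n\cvr n'$ and $\brp{m',n'}\in\B$, which Duplicator plays; if Spoiler makes a back move $n\cvr n''$, condition~\ref{i:pos-back-forth-back} supplies the required $m''$. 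In either case the invariant is preserved, so every position reached is winning and the strategy is winning after $t$ rounds for every $t\geq 0$.

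For the converse, suppose Duplicator has a winning strategy and let $\B\subseteq\Path X\times\Path Y$ be the set of all positions occurring in some play in which Duplicator follows this strategy. Because the strategy is winning, every such position is winning, hence of the form $\brp{m,n}$ by Remark~\ref{r:winning-quotient}, so $\B$ consists of pairs of the right kind. Condition~\ref{i:pos-back-forth-base} holds since $(\bot_X,\bot_Y)$ begins every play. For~\ref{i:pos-back-forth-forth}, given $\brp{m,n}\in\B$ reached along some consistent history and a cover $m\cvr m'$, I let Spoiler make that forth move; Duplicator's response is some $n'$ with $n\cvr n'$, and the resulting position $(m',n')$ is reached by a consistent history and so lies in $\B$. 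Condition~\ref{i:pos-back-forth-back} is symmetric, using a back move. Thus $\B$ is a positive back-and-forth system, and Lemma~\ref{l:strong-backforth} then upgrades it to a strong one, discharging the parenthetical strengthening.

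I do not expect a serious obstacle, as the argument is essentially the standard dictionary between strategies and back-and-forth systems. The one point requiring care is that a winning strategy may in principle depend on the whole history of the play rather than only on the current position; this causes no difficulty here, since the back-and-forth conditions assert only the \emph{existence} of suitable responses, so it suffices to read off Duplicator's move along any single history witnessing that $(m,n)\in\B$. The only other thing to keep straight is the consistent use of the identification ``winning position $=\brp{m,n}$'' from Remark~\ref{r:winning-quotient} in both directions.
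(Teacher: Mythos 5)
Your proposal is correct and follows essentially the same route as the paper: from a system, Duplicator maintains membership in $\B$ as an invariant; from a winning strategy, the set of reachable positions forms the system, with Remark~\ref{r:winning-quotient} supplying the identification of winning positions with pairs $\brp{m,n}$. The only cosmetic difference is that the paper observes the set of reachable positions is already strong (since covering predecessors in $\Path{X}$ and $\Path{Y}$ are unique, the predecessors of a reached position are themselves reached), whereas you invoke Lemma~\ref{l:strong-backforth} to upgrade; both are fine.
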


\begin{proof}
If Duplicator has a winning strategy for $\game^+(X,Y)$, we can inductively construct the set $\B$ of all positions of the game that can be reached if Duplicator plays according to their winning strategy. Using Remark~\ref{r:winning-quotient}, it is easy to see that this is a (strong) positive back-and-forth system from $X$ to $Y$.

Conversely, if there exists a positive back-and-forth system from $X$ to $Y$, condition~\ref{i:pos-back-forth-base} in Definition~\ref{d:pos-backforth} ensures that the initial position is winning, while conditions~\ref{i:pos-back-forth-forth} and~\ref{i:pos-back-forth-back} provide Duplicator with a way of responding to Spoiler's moves so that the ensuing position is again winning.
\end{proof}

\subsection{Positive bisimulations}
\label{ss:pos-arbor-bisim}

To establish Theorem~\ref{th:game-bisim}, it remains to prove that there exists a positive bisimulation from $X$ to $Y$ if and only if there exists a positive back-and-forth system from $X$ to $Y$. We start with the easier of the two implications.

\begin{proposition}
\label{p:bisim-backforth}
If there exists a positive bisimulation from $X$ to $Y$, then there exists a positive back-and-forth system from $X$ to $Y$.
\end{proposition}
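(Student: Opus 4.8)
The plan is to transport the positive bisimulation through the functor $\Path\colon\A\to\T$ and to read off the required back-and-forth system directly from the images of its three legs. Given a positive bisimulation with legs $f,g$ and horizontal map $h$ as in Definition~\ref{d:pos-bisim-arb}, I would first record what $\Path$ does to each map. Since $f$ and $g$ are open pathwise embeddings, they are tree-open by Lemma~\ref{l:open-tree-open}, so $\Path f\colon\Path{Z_1}\to\Path X$ and $\Path g\colon\Path{Z_2}\to\Path Y$ are p-morphisms (Definition~\ref{def:tree-open}); being pathwise embeddings, they moreover satisfy $\Path f(w)=f\circ w$ and $\Path g(v)=g\circ v$. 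By hypothesis $\Path h\colon \Path{Z_1}\to\Path{Z_2}$ is a tree isomorphism. Note that none of the additional hypotheses of Theorem~\ref{th:game-bisim} are required for this (easier) direction, so the argument takes place in an arbitrary arboreal category.

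I would then set
\[
\B \coloneqq \{\,(\Path f(w),\,\Path g(\Path h(w))) \mid w\in\Path{Z_1}\,\}\subseteq \Path X\times\Path Y.
\]
The first thing to verify is that each element is a legitimate pair $\brp{m,n}$. Fixing $w$ and writing $m=\Path f(w)$, $n=\Path g(\Path h(w))$, the fact that $f$ is a pathwise embedding gives $\dom(m)=\dom(w)$, while $\dom(n)=\dom(\Path h(w))=\exists_h(\dom(w))$; the quotient part of the $(\Q,\M)$-factorisation of $h\circ w$ supplies a quotient $\dom(w)\epi\exists_h(\dom(w))$, that is, $\dom(m)\epi\dom(n)$. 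Hence $(m,n)$ is indeed of the form $\brp{m,n}$.

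It then remains to check conditions \ref{i:pos-back-forth-base}--\ref{i:pos-back-forth-back} of Definition~\ref{d:pos-backforth}. For the base clause I would take $w=\bot_{Z_1}$ and use that $\Path f$, $\Path h$, $\Path g$ are tree morphisms, hence preserve roots, to obtain $\brp{\bot_X,\bot_Y}\in\B$. For the forth clause, given $\brp{m,n}\in\B$ arising from $w$ and given $m\cvr m'$, the p-morphism $\Path f$ lifts this to some $w'$ with $w\cvr w'$ and $\Path f(w')=m'$; putting $n'=\Path g(\Path h(w'))$ yields $\brp{m',n'}\in\B$, and $n\cvr n'$ because $\Path h$ and $\Path g$, being forest morphisms, preserve the covering relation. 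The back clause is symmetric: given $n\cvr n'$, the p-morphism $\Path g$ lifts along $v=\Path h(w)$ to some $v'$ with $v\cvr v'$ and $\Path g(v')=n'$, and since $\Path h$ is an isomorphism there is a unique $w'=(\Path h)^{-1}(v')$, which satisfies $w\cvr w'$; setting $m'=\Path f(w')$ gives $\brp{m',n'}\in\B$ with $m\cvr m'$.

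This is in essence a diagram chase through $\Path$, so I do not anticipate a genuine obstacle. The two points that require care are the verification of the quotient condition $\dom(m)\epi\dom(n)$ (which is exactly what the factorisation defining $\exists_h$ provides) and the correct exploitation of $\Path h$ being a full tree isomorphism rather than merely a p-morphism: it is precisely the invertibility of $\Path h$ that drives the back clause, where a lift obtained in $\Path{Z_2}$ must be transported back to $\Path{Z_1}$ in order to produce the witnessing pair in $\B$.
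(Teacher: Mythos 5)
Your proposal is correct and follows essentially the same route as the paper's proof: both define $\B$ as the set of pairs $\brp{\Path f(w),\Path(g\circ h)(w)}$ for $w\in\Path{Z_1}$, obtain the quotient $\dom(m)\epi\dom(n)$ from the $(\Q,\M)$-factorisation defining $\Path h$, and verify the three clauses using that $\Path f$ and $\Path g$ are p-morphisms (via Lemma~\ref{l:open-tree-open}) and that $\Path h$ is a tree isomorphism. Your explicit use of $(\Path h)^{-1}$ in the back clause is only a cosmetic variant of the paper's appeal to $h$ being tree-open.
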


\begin{proof}
Suppose there exists a positive bisimulation from $X$ to $Y$ as displayed below.

\begin{center}
    \begin{tikzcd}
    Z_1\arrow{d}[swap]{f} \arrow{r}{h} &Z_2\arrow{d}{g} \\
    X & Y
    \end{tikzcd}
    \end{center}
Note that $h$ is tree-open because $\Path h$ is an isomorphism of trees, and $f,g$ are tree-open by Lemma~\ref{l:open-tree-open}.
Consider a path embedding $m\colon P\emb Z_1$. Since $f$ and $g$ are pathwise embeddings, we have $\Path f(m)=f\circ m$ and $\Path(g\circ h)(m)=g\circ \Path h(m)$.
By definition of~$\Path$ on morphisms we have $\dom(m)\epi \dom(\Path h(m))$ and thus 
\[
\brp{\Path f(m), \Path (g\circ h)(m)}.
\]
Set $\B\coloneqq\{\brp{\Path f(m), \Path (g\circ h)(m)}\} \mid m\in \Path Z_1\}$.

We claim that $\B$ is a positive back-and-forth system from $X$ to $Y$.
\begin{enumerate}[label=(\roman*)]
    \item $\Path f$ and $\Path (g\circ h)$ preserve the roots, hence 
    \[
    \brp{\Path f(\bot_{Z_1}),\Path (g\circ h)(\bot_{Z_1})} =\brp{\bot_X,\bot_Y} \in \B.
    \]
    
    \item Let $\brp{m,n} \in \B$, and suppose that $m=f\circ p$ and $n=g\circ \Path h(p)$ for some $p\in \Path Z_1$.
If $m'\in \Path X$ is such that $m\cvr m'$, since $f$ is tree-open there is $e\in \Path Z_{1}$ such that $p\cvr e$ and $f\circ e= m'$. Then $n'\coloneqq g\circ \Path h(e)$ satisfies $n\cvr n'$ because $\Path (g\circ h)$ preserves $\cvr$, and moreover $\brp{m', n'} = (\Path f(e), \Path (g\circ h)(e)) \in \B$.
 
      \item As in the previous item, suppose $\brp{m,n} \in \B$, with $m=f\circ p$ and $n=g\circ \Path h(p)$ for some $p\in \Path Z_1$. If $n\in \Path Y$ satisfies $n\cvr n'$, since $g,h$ are tree-open there exist $e'\in \Path Z_{2}$ such that $\Path h (p) \cvr  e'$ and $g\circ e' = n'$, and $e\in \Path Z_{1}$ such that $p\cvr e$ and $\Path h(e) = e'$. As $\Path f$ preserves $\cvr$ we get $m\cvr m'$; furthermore, $\brp{m', n'} = (\Path f(e), \Path (g\circ h)(e)) \in \B$. \qedhere
\end{enumerate}
\end{proof}

The converse implication requires more work. To construct a positive bisimulation starting from a positive back-and-forth system, we need to understand what the objects $Z_1$ and $Z_2$ will be.
In the corresponding result for bisimulations, cf.\ Theorem~\ref{th:full-bisimulations}, the objects $Z_{1},Z_{2}$ coincide and their underlying set consists of the winning positions $W(A,B)\subseteq \FI_k (A) \times \FI_k(B)$, equipped with the relational structure induced by that of the product $\FI_k (A) \times \FI_k(B)$, which in our case corresponds to $X\times Y$. The same approach will not work for positive bisimulations, because $X$ might have ``fewer relations'' than~$Y$, which will result in $X\times Y$ also having fewer relations, and so there may be no subobject of $X\times Y$ that admits a pathwise embedding to $Y$. 

This leads to the idea of considering a factorisation of the projections $\pi_X\colon X\times Y \to X$ and $\pi_Y\colon X\times Y \to Y$ 
as $X\times Y \rightarrow X^0 \rightarrow X$ and $X\times Y \rightarrow Y^0 \rightarrow Y$, respectively, such that $X^0$ (respectively, $Y^0$) has ``the shape of $X\times Y$'' and ``the relational structure of~$X$ (respectively, of $Y$)''. The objects $Z_1$ and $Z_{2}$ will then  embed into $X^0$ and $Y^0$, respectively. The desired factorisation of the projections is provided by Proposition~\ref{p:fact-quo-pw}: the fact that $\Path X^0$ is isomorphic to $\Path (X\times Y)$ means that $X^0$ has the shape of $X\times Y$, and the fact that the arrow $X^0\rightarrow X$ is a pathwise embedding means that $X^{0}$ has the relational structure of $X$.

\begin{proposition}
\label{p:backforth-bisim}
Suppose that $\A$ satisfies the assumptions of Theorem~\ref{th:game-bisim}, and let $X,Y$ be objects of $\A$ admitting a product. If there exists a (strong) positive back-and-forth system from $X$ to $Y$, there exists a positive bisimulation from $X$ to $Y$.
\end{proposition}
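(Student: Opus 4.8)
The plan is to reduce to a \emph{strong} system using Lemma~\ref{l:strong-backforth}, and then to manufacture the two legs of the positive bisimulation as subobjects of the two ``reshapings'' of the product $X\times Y$ supplied by Proposition~\ref{p:fact-quo-pw}. Concretely, I would form the product $X\times Y$ (which exists by hypothesis) with projections $\pi_X,\pi_Y$, and factor each projection as $X\times Y\to X^0\to X$ and $X\times Y\to Y^0\to Y$ via maps $e_X,e_Y$ and $g_X,g_Y$, where $g_X,g_Y$ are pathwise embeddings and $\Path e_X,\Path e_Y$ are tree isomorphisms. This identifies $\Path X^0$, $\Path(X\times Y)$ and $\Path Y^0$ as a single tree, and it gives $g_X,g_Y$ the crucial domain-preserving feature that $\Path g_X(p)=g_X\circ p$ is a path embedding with domain $\dom(p)$.

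The key device is the \emph{canonical lift} of a pair $\brp{m,n}\in\B$. Given the unique (by Lemma~\ref{l:one-arrow-path}) quotient $k\colon\dom(m)\epi\dom(n)$, I form $\langle m, n\circ k\rangle\colon\dom(m)\to X\times Y$; it is an embedding by cancellation (Lemma~\ref{l:factorisation-properties}\ref{cancellation-m}), hence a path embedding, and a short factorisation computation gives $\exists_{\pi_X}\langle m,n\circ k\rangle=m$ and $\exists_{\pi_Y}\langle m,n\circ k\rangle=n$. Using strongness I would check that the set $S^{\ast}$ of these canonical lifts is downward closed in $\Path(X\times Y)$ and contains the root: the predecessor of the canonical lift of $\brp{m',n'}$ (with $m\cvr m'$, $n\cvr n'$) is exactly the canonical lift of $\brp{m,n}$, which lies in $\B$ by strongness, the needed compatibility $k'\circ\mu=\nu\circ k_0$ of quotients again following from faithfulness of $\Path$. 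Transporting $S^{\ast}$ along the tree isomorphisms yields downward-closed subtrees $S\subseteq\Path X^0$ and $S'\subseteq\Path Y^0$; I then set $Z_1:=\bigvee S$ and $Z_2:=\bigvee S'$ as $\M$-subobjects (these suprema exist by Lemma~\ref{l:arboreal-facts}\ref{i:completely-join-irr}), and that same lemma gives $\Path Z_1=S$ and $\Path Z_2=S'$, so by path-generation $Z_1=\colim_{p\in S}\dom(p)$ and likewise for $Z_2$. The legs are $f:=g_X\circ(Z_1\emb X^0)$ and $g:=g_Y\circ(Z_2\emb Y^0)$, pathwise embeddings as composites of an embedding with a pathwise embedding. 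To build $h$, I assign to each canonical lift the quotient $\dom(p)\epi\dom(q)$ relating the two images; naturality over $S^{\ast}$ holds because there is at most one arrow between paths (Lemma~\ref{l:one-arrow-path}), so these assemble into $h\colon Z_1\to Z_2$, and computing $\Path h$ shows it agrees with the tree isomorphism $S\to S'$ induced by $\Path e_Y\circ(\Path e_X)^{-1}$.

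The main obstacle is establishing that $f$ and $g$ are \emph{open}. By Lemma~\ref{l:open-tree-open} it suffices to show they are tree-open, i.e.\ that $\Path f$ and $\Path g$ are p-morphisms. Here the canonical-lift construction pays off: under the identification $\Path Z_1=S\cong S^{\ast}$, the map $\Path f$ sends the lift of $\brp{m,n}$ to $m$. Given a cover $m\cvr m'$ in $\Path X$, the \emph{forth} condition produces $n'$ with $n\cvr n'$ and $\brp{m',n'}\in\B$; its canonical lift lies in $S^{\ast}$, and the predecessor computation above shows it covers the lift of $\brp{m,n}$ and projects to $m'$ under $\pi_X$. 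This is precisely the p-morphism lifting for $\Path f$, and the \emph{back} condition gives the symmetric statement for $\Path g$.

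It is exactly to make the two legs land in objects carrying, respectively, the relational structure of $X$ and of $Y$ --- rather than the impoverished structure of $X\times Y$ --- that the factorisation of Proposition~\ref{p:fact-quo-pw} is needed; without reshaping the product there need be no subobject of $X\times Y$ admitting a pathwise embedding onto the richer of the two structures, which is why the naive approach used for full bisimulations (taking $Z_1=Z_2$ inside $X\times Y$) fails in the positive setting.
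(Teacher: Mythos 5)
Your proposal is correct and follows essentially the same route as the paper's proof: reduce to a strong system, lift each pair $\brp{m,n}$ to a path embedding $\langle m,n\circ\lambda\rangle$ into $X\times Y$, transport along the factorisations of the projections from Proposition~\ref{p:fact-quo-pw}, take suprema of the resulting downward-closed families to get $Z_1,Z_2$, and use the forth/back conditions for openness and the uniqueness of arrows between paths to assemble $h$. The only cosmetic difference is that you establish openness by checking the p-morphism condition on $\Path f,\Path g$ and invoking Lemma~\ref{l:open-tree-open}, whereas the paper constructs the diagonal fillers directly; these are interchangeable.
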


\begin{proof}
Suppose there exists a positive back-and-forth system $\B$ from $X$ to $Y$. In view of Lemma~\ref{l:strong-backforth}, we can assume that $\B$ is strong.
We will construct a positive bisimulation
\begin{center}
    \begin{tikzcd}
    Z_1\arrow{d}[swap]{f} \arrow{r}{h} &Z_2\arrow{d}{g} \\
    X & Y
    \end{tikzcd}
    \end{center}
 from $X$ to $Y$ in four steps. The second and third steps follow closely the proof of \cite[Proposition 6.9]{AR2022}: we construct the suprema of two sets of path embeddings, and show that the restrictions of the appropriate projection maps to these subobjects are open pathwise embeddings. 

\vspace{0.5em}
\textbf{First step:} We start with an auxiliary construction. Consider the decompositions of the projections $\pi_{X}\colon X\times Y \to X$ and $\pi_{Y}\colon X\times Y \to Y$ given by Proposition~\ref{p:fact-quo-pw}: 
\[
X\times Y \xrightarrow{p_X} X^0 \xrightarrow{q_X} X, \ \ X\times Y \xrightarrow{p_Y}Y^0 \xrightarrow{q_Y} Y.
\] 
$\Path p_X, \Path p_Y$ are tree isomorphisms, and $q_X, q_Y$ are pathwise embeddings.

By Lemma~\ref{l:one-arrow-path}, for each $\brp{m,n} \in\B$ there is exactly one quotient $\lambda\colon \dom(m) \epi \dom(n)$. Write $P\coloneqq \dom(m)$, $Q\coloneqq\dom(n)$, and consider the arrow 
\[
\langle m,n\circ \lambda\rangle \colon P \rightarrow X\times Y.
\] 
The latter is a path embedding; just observe that $\pi_{X}\circ \langle m,n\circ \lambda\rangle = m$, which is an embedding. Define the path embeddings 
\[
\langle m,n\rangle_l\coloneqq \Path{p_{X}}(\langle m,n\circ \lambda\rangle)\colon P'\emb X^0
\]
and 
\[
\langle m,n\rangle_r\coloneqq \Path{p_{Y}}(\langle m,n\circ \lambda\rangle)\colon Q'\emb Y^0.
\]
Since $q_{X}\circ p_X\circ \langle m,n\circ \lambda \rangle = m$ and the latter is an embedding, it follows that $p_X\circ \langle m,n\circ \lambda \rangle$ is an embedding; hence $\langle m,n\rangle_l = p_X\circ \langle m,n\circ \lambda \rangle$. 

To improve readability, we will sometimes write $\hat{m}$ and $\hat{n}$ instead of $\langle m,n\rangle_l$ and $\langle m,n\rangle_r$, respectively. The uniqueness of $\lambda$ implies that $\hat{m},\hat{n}$ depend only on $m,n$.
Note that $P'\cong P$, thus the domain of $\hat{m}$ can be identified with the domain of $m$. Similarly, we claim that $Q'\cong Q$, so the domain of $\hat{n}$ can be identified with that of $n$.
Consider the following commutative square:
    \begin{center}
        \begin{tikzcd}
               P\arrow[d,two heads]\arrow[r,two heads,"\lambda"]& Q\arrow[d,tail,"n"]\\
               Q'\arrow[r,tail,"q_Y\circ \hat{n}"]&Y
        \end{tikzcd}
    \end{center}
 Since (quotient, embedding) factorisations are unique up to isomorphism, there is an isomorphism $Q\cong Q'$ making the diagram above commute. Henceforth we will identify $Q$ with $Q'$, so that 
    \begin{equation}\label{eq:nprime}
        p_Y\circ \langle m,n\circ \lambda\rangle=\langle m,n\rangle_r\circ \lambda.
    \end{equation}

We thus have a commutative diagram as displayed below:
\begin{equation}\label{eq:parallel-embeddings}
    \begin{tikzcd}
           & P \arrow[r,two heads,"\lambda"]\arrow[d,tail,"\hat{m}"] \arrow[ld,tail,"m" swap] & Q \arrow[d,tail,"\hat{n}" swap] \arrow[rd,tail,"n"]& \\
           X & X^0 \arrow[l,"q_X" swap] & Y^0 \arrow[r,"q_Y"] & Y
    \end{tikzcd}
\end{equation}

\vspace{0.5em}    
\textbf{Second step:} We now construct the objects $Z_1$ and $Z_2$, along with the pathwise embeddings $f$ and $g$.
Consider the sets
\[
U_1\coloneqq \{\langle m,n\rangle_l \mid \brp{m,n} \in \B\}\subseteq \Path X^0
\]
and
\[ 
U_2\coloneqq\{\langle m,n\rangle_r \mid \brp{m,n} \in \B\}\subseteq \Path Y^0,
\] 
which are downwards closed in $\Emb X^0$ and $\Emb Y^0$, respectively, because $\B$ is strong.
Define $i_{1}\colon Z_1 \emb X^0$ as the supremum of $U_{1}$ in $\Emb X^0$, and $i_{2}\colon Z_2 \emb Y^0$ as the supremum of $U_2$ in $\Emb Y^0$ (these suprema exist by Lemma~\ref{l:arboreal-facts}\ref{i:completely-join-irr}).
Then 
\[
f\coloneqq q_X\circ i_{1}\colon Z_{1}\to X \ \text{ and } \ g\coloneqq q_Y\circ i_{2}\colon Z_{2}\to Y
\] 
are pathwise embeddings because $q_X$ and $q_Y$ are pathwise embeddings.
Note that, since $U_{1}$ is downwards closed, by Lemma~\ref{l:arboreal-facts}\ref{i:completely-join-irr} any element of $\Path X^{0}$ that is below $i_{1}$ must belong to $U_{1}$; similarly for $i_{2}$ and $U_{2}$.

\vspace{0.5em}
\textbf{Third step: }
We claim that the pathwise embeddings $f$ and $g$ defined in the previous step are open.
We prove that $g$ is open; proving the corresponding fact for~$f$ is similar and slightly simpler. Consider a commuting diagram of the following form.

\[\begin{tikzcd}
           P \arrow[d,tail,"k" swap] \arrow[r,tail,"e"] & Z_{2} \arrow[d,"g"] \\
           Q \arrow[r,tail,"n'"] & Y
\end{tikzcd}\]
As $i_{2}\circ e\in \Path Y^{0}$ is below $i_{2}$, there exists $\brp{m,n} \in \B$ such that $i_{2}\circ e=\langle m,n\rangle_r$. Then
\begin{align*}
g\circ e &= q_Y\circ i_{2} \circ e \tag*{definition of $g$}\\
&=q_Y \circ \langle m,n\rangle_r \\
&= n \tag*{by~\eqref{eq:parallel-embeddings}}
\end{align*}
and so $n=n'\circ k$ implies that $n'\geq n$. Applying condition~\ref{i:pos-back-forth-back} in Definition \ref{d:pos-backforth}, possibly finitely many times, yields $m'\geq m$ such that $\brp{m',n'} \in \B$.
In particular, $\langle m',n'\rangle_r\colon Q\emb Y^0$ belongs to $U_{2}$ and so it is below $i_{2}$, i.e.\ there exists $h\colon Q\to Z_{2}$ such that $\langle m',n'\rangle_r = i_{2}\circ h$. We claim that $h$ is a diagonal filler for the square above, i.e.\ the following diagram commutes.
\[\begin{tikzcd}
P \arrow[d,tail,"k" swap] \arrow[r,tail,"e"] & Z_{2} \arrow[d,"g"] \\
Q \arrow[r,tail,"n'"] \arrow[ur,"h" description] & Y
\end{tikzcd}\]
The lower triangle commutes because $g\circ h = q_Y\circ i_{2} \circ h = q_Y\circ \langle m',n'\rangle_r = n'$. 
For the upper triangle, suppose for a moment that $\langle m,n\rangle_r = \langle m',n'\rangle_r\circ k$. Then 
\[
i_{2}\circ h\circ k = \langle m',n'\rangle_r \circ k = \langle m,n\rangle_r = i_{2}\circ e
\] 
and, since $i_{2}$ is monic, we get $h\circ k = e$ as desired.

It remains to show that $\langle m,n\rangle_r = \langle m',n'\rangle_r\circ k$. Since there is at most one embedding between any two paths, this amounts to showing that $\langle m,n\rangle_r \leq \langle m',n'\rangle_r$. In turn, since the map $\exists_{p_{Y}}\colon \Emb(X\times Y)\to\Emb{Y^{0}}$ is monotone, it is enough to prove that $\langle m,n\circ \lambda \rangle \leq \langle m',n'\circ \lambda'\rangle$, where $\lambda\colon \dom(m)\epi \dom(n)$ and $\lambda'\colon \dom(m')\epi \dom(n')$ are the unique quotients. Since $m\leq m'$, there is $l\colon \dom(m)\to \dom(m')$ such that $m = m'\circ l$. Using the fact that any diagram of arrows between paths commutes, it is easy to see that $\langle m',n'\circ \lambda'\rangle \circ l = \langle m,n\circ \lambda \rangle$, showing that $\langle m,n\circ \lambda \rangle \leq \langle m',n'\circ \lambda'\rangle$.

\vspace{0.5em}
\textbf{Fourth step:} In this last step, we construct $h\colon Z_{1}\to Z_{2}$ such that $\Path{h}$ is an isomorphism. This arrow is obtained as the unique mediating morphism induces by a compatible cocone with vertex $Z_{2}$ on the diagram of path embeddings into $Z_{1}$. Identifying $\Path{Z_{1}}$ and $\Path{Z_{2}}$ with sub-trees of $\Path{X^{0}}$ and $\Path{Y^{0}}$, respectively, the elements of $\Path{Z_{1}}$ are precisely those of the form $\langle m_i,n_i\rangle_l$ for $\brp{m_{i},n_{i}} \in \B$, and those of $\Path{Z_{2}}$ are precisely those of the form $\langle m_i,n_i\rangle_r$. For each $i$ there is a quotient
\[
h_{i}\colon \dom(\langle m_i,n_i\rangle_l) = \dom(m_{i}) \epi \dom(n_{i}) = \dom(\langle m_i,n_i\rangle_r).
\]
We claim that the following cocone is compatible, where we identify $k_{i}$ with $\langle m_i,n_i\rangle_l$, and $l_{i}$ with $\langle m_i,n_i\rangle_r$.
\[\begin{tikzcd}
       & Z_2 & \\
       Q_i \arrow[ru,tail,"l_i"]& & Q_j \arrow[lu,tail,"l_j" swap]\\
       & Z_1 \arrow[dashed]{uu}[description]{h} & \\
       P_i \arrow[uu,two heads, "h_i"] \arrow[ru,tail,"k_i"]\arrow[rr,tail,"e"]& & P_j \arrow[lu,tail,"k_j" swap] \arrow[uu,two heads, "h_j" swap]
\end{tikzcd}\]
The compatibility condition follows at once if we prove that the assignment $\langle m_i,n_i\rangle_l \mapsto \langle m_i,n_i\rangle_r$ is monotone. Recall that the diagram
\[\begin{tikzcd}
\Path{X^{0}} & \Path(X\times Y) \arrow{l}[swap]{\Path{p_{X}}} \arrow{r}{\Path{p_{Y}}} & \Path{Y^{0}}
\end{tikzcd}\]
consists of two isomorphisms of trees. The image of $U=\{\langle m, n\circ \lambda \rangle \mid \brp{m,n} \in \B\}$ under $\Path{p_{X}}$ and $\Path{p_{Y}}$ coincides with $\Path{Z_{1}}$ and $\Path{Z_{2}}$, respectively. Thus the tree isomorphisms above descend to isomorphisms 
\[\begin{tikzcd}
\Path{Z_{1}} & U \arrow{l}[swap]{\Path{p_{X}}} \arrow{r}{\Path{p_{Y}}} & \Path{Z_{2}}.
\end{tikzcd}\]
The assignment $\langle m_i,n_i\rangle_l \mapsto \langle m_i,n_i\rangle_r$ coincides with $\Path{p_{Y}}\circ (\Path{p_{X}})^{-1}$, so it is clearly monotone.
By construction, the induced morphism $h\colon Z_{1}\to Z_{2}$ satisfies $\Path{h} = \Path{p_{Y}}\circ (\Path{p_{X}})^{-1}$ and so it is an isomorphism of trees.
\end{proof}

\subsection{Positive arboreal games}\label{ss:pos-arbor-games}

We show that the positive arboreal games $\game^+(X,Y)$, when considered in the arboreal categories $\EM(\EkI)$, $\EM(\PkI)$ and $\EM(\Mk)$, capture precisely the positive $k$-round \EF game, the positive $k$-pebble game and the positive $k$-round bisimulation game, respectively. The proofs are similar to those in Section~\ref{s:ex-arbor}; we describe the main ingredients for Proposition~\ref{p:pos-arbor-EF} and leave the proofs of Propositions~\ref{p:pos-abor-peb} and~\ref{p:pos-arbor-mod} to the reader.

\begin{proposition}\label{p:pos-arbor-EF}
    Duplicator has a winning strategy in the positive arboreal game $\game^+(\FI_k(A),\FI_k(B))$ in the category $\EM(\EkI)$ if, and only if, they have a winning strategy in the positive $k$-round \EF game from $A$ to $B$.
\end{proposition}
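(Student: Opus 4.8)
The plan is to follow the template of Proposition~\ref{p:ex-arbor-EF}, adapting it to the back-and-forth move structure and the one-way (partial homomorphism) winning condition of the positive game. First I would reuse the isomorphism established there between the forest obtained from $\Path(\FI_k(A))$ by deleting the root and the underlying forest of $\FI_k(A)$ (and likewise for $B$): a non-root path embedding $m\colon P\emb \FI_k(A)$ with $P$ given by $x_1\cvr\cdots\cvr x_i$ corresponds to the element $m(x_i)=[a_1,\dots,a_i]$, with $a_j=\epsilon_A m(x_j)$ the successive elements along the chain. Under this identification, valid positions $(m,n)$ of $\game^+(\FI_k(A),\FI_k(B))$ match valid positions $([a_1,\dots,a_i],[b_1,\dots,b_i])$ of the positive $k$-round \EF game, the initial position corresponding to the roots (where the empty initial object is handled exactly as in Proposition~\ref{p:ex-arbor-EF}). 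Crucially, the two games have the same move structure: a cover $m\cvr m'$ in $\Path(\FI_k(A))$ is the act of extending a play by one element, and Spoiler's back-and-forth choice of playing in $X$ or in $Y$ (Definition~\ref{d:pos-arbor-game}) matches Spoiler playing in $A$ or in $B$.

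The heart of the argument is to show that this bijection of positions restricts to a bijection of \emph{winning} positions. By Remark~\ref{r:winning-quotient}, a valid position $(m,n)$ is winning in $\game^+$ precisely when there is a quotient $\dom(m)\epi\dom(n)$. Since valid positions satisfy $\htf(m)=\htf(n)$, the paths $\dom(m),\dom(n)$ have the same height, so the underlying chains have the same length and any forest morphism between them is forced to be the height-preserving bijection $x_j\mapsto y_j$; by Lemma~\ref{l:one-arrow-path} this is the only candidate quotient, and a bijective homomorphism is automatically surjective. Hence a quotient exists if, and only if, the map $x_j\mapsto y_j$ is a homomorphism of $\sigma^{I}$-structures. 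Recalling that $m,n$ are embeddings (so a tuple lies in a relation of $\dom(m)$ exactly when the tuple of last elements lies in the corresponding relation of $A$), this map is a homomorphism if, and only if, $R^A(a_{j_1},\dots,a_{j_n})\Rightarrow R^B(b_{j_1},\dots,b_{j_n})$ for every $R\in\sigma$, together with the analogous implication for $I$. The $I$-clause reads $a_j=a_l\Rightarrow b_j=b_l$, which is precisely the functionality (equality-preservation) part of a partial homomorphism, while the $R$-clauses are its relation-preservation part. Thus the existence of a quotient is equivalent to $\{(a_j,b_j)\}$ being a partial homomorphism from $A$ to $B$, i.e.\ to the image of $(m,n)$ being a winning position of the positive \EF game.

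Finally, since positions, the move structure, and winning conditions all correspond under the isomorphism, a Duplicator winning strategy in either game translates directly into one for the other, the height bound $\leq k$ on $\FI_k(A)$ and $\FI_k(B)$ ensuring that both games terminate after $k$ rounds. I expect the main obstacle to be the winning-condition correspondence rather than the strategy transfer: the subtle point is verifying that the categorical notion of quotient between paths captures \emph{precisely} the partial-homomorphism condition, and in particular that preservation of the auxiliary relation $I$ encodes the equality-preservation demanded by positive logic with equality. Once this is pinned down, the remaining steps are routine and parallel the existential case.
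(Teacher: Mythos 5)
Your proposal is correct and follows essentially the same route as the paper: both use the isomorphism between $\Path(\FI_k(C))$ minus its root and the underlying forest of $\FI_k(C)$, reduce the winning-condition correspondence to the unique forest morphism $x_j\mapsto y_j$ between the two chains being a homomorphism of $\sigma^I$-structures, and exploit that $m,n$ are embeddings to translate this into the partial-homomorphism (including equality-preservation via $I$) condition. Your phrasing via Remark~\ref{r:winning-quotient} and quotients is an equivalent reformulation of the paper's direct use of the homomorphism $\dom(m)\to\dom(n)$, so there is no substantive difference.
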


\begin{proof}
Let $C$ denote either of the structures $A$ or $B$.
As in the proof of Proposition~\ref{p:ex-arbor-EF}, we need to show that the isomorphism between the forest obtained by removing the root of $\Path(\FI_k(C))$, and the underlying forest of $\FI_k(C)$, yields a bijection between (non-initial) winning positions of $\game^+(\FI_k(A),\FI_k(B))$ and winning positions in the positive $k$-round \EF game from $A$ to $B$. 

    \begin{itemize}
        \item Let $(m,n)$ be a valid position in $\game^+(\FI_k(A),\FI_k(B))$. If $(m,n)$ is a winning position, then $\dom(m)$ and $\dom(n)$ are paths of the same height, say ${x_1\cvr \cdots \cvr x_i}$ and ${y_1\cvr \cdots \cvr y_i}$, respectively, and there is a homomorphism $h\colon \dom(m)\to\dom(n)$ sending $x_{j}$ to $y_{j}$ for all $j\in\{1,\ldots,i\}$. The position in the \EF game corresponding to $(m,n)$ is $(s,t)$, where
\[
s\coloneqq [\epsilon_{JA}m(x_1),\dots,\epsilon_{JA}x(p_i)]
\]
and
\[
t\coloneqq [\epsilon_{JB}nh(x_1),\dots,\epsilon_{JB}nh(x_i)].
\]
The morphism $m$ is an embedding, hence it reflects relations, while the homomorphism $n\circ h$ preserves them. Thus the interpretations of the relation symbols in $\sigma\cup \{I\}$ are preserved from $A$ to $B$. That is, $(s,t)$ is a winning position in the positive $k$-round \EF game.
        \item Conversely, suppose that $(m,n)$ is a valid position in $\game^+(\FI_k(A),\FI_k(B))$ such that the corresponding position in the positive $k$-round \EF game is winning. Then the domains of $m$ and $n$ have the same height, say they are of the form ${x_1 \cvr \cdots \cvr x_i}$ and ${y_1 \cvr \cdots \cvr y_i}$ respectively. It is not difficult to see that the winning condition for the positive \EF game implies that the forest morphism $\dom(m)\to\dom(n)$ that sends $x_{j}$ to $y_{j}$, for all $j\in \{1,\ldots,i\}$, is a homomorphism of structures. Thus, $(m,n)$ is a winning position in $\game^+(\FI_k(A),\FI_k(B))$. \qedhere
    \end{itemize}
\end{proof}

\begin{proposition}\label{p:pos-abor-peb}
    Duplicator has a winning strategy for the positive arboreal game $\game^+(\FI_k(A),\FI_k(B))$ in the category $\EM(\PkI)$ if and only if they have a winning strategy for the positive $k$-pebble game from $A$ to $B$.
\end{proposition}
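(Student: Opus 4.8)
The plan is to follow the template of Proposition~\ref{p:ex-arbor-peb}, replacing its partial-isomorphism (two-way) analysis with the one-way analysis used in the proof of Proposition~\ref{p:pos-arbor-EF}. First I would recall, exactly as in those proofs, the isomorphism between the forest obtained by deleting the root of $\Path(\FI_k(C))$ and the underlying forest of $\FI_k(C)$, for $C$ either $A$ or $B$: a non-root path embedding $m\colon P\emb \FI_k(C)$ with $P$ given by $x_1\cvr\cdots\cvr x_i$ is sent to the move sequence $m(x_i)=[(p_1,c_1),\dots,(p_i,c_i)]\in\FI_k(C)$. This identifies valid positions of $\game^+(\FI_k(A),\FI_k(B))$ with plays of the positive $k$-pebble game, and (using Remark~\ref{r:winning-quotient}, so that $\htf(m)=\htf(n)$) reduces the proposition to checking that this identification restricts to a bijection between the non-initial winning positions of the two games.

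The second step is to set up the bijection on positions and dispose of the pebble bookkeeping. Given a valid position $(m,n)$, I would write the underlying orders of $\dom(m),\dom(n)$ as $x_1\cvr\cdots\cvr x_i$ and $y_1\cvr\cdots\cvr y_i$ and set $s_j\coloneqq m(x_j)$ and $t_j\coloneqq n(y_j)$. Since morphisms of $\EM(\PkI)$ preserve the pebbling function, and the unique forest morphism between two chains of equal height is $x_j\mapsto y_j$, any witnessing arrow $\dom(m)\to\dom(n)$ forces the pebble indices of $s_j$ and $t_j$ to agree for every $j$; this is precisely what makes the corresponding pebble-game position valid, and it is handled exactly as in Proposition~\ref{p:ex-arbor-peb}.

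The heart of the argument is the winning-position correspondence. For the direction ``$(m,n)$ winning $\Rightarrow$ pebble position winning'', I would use that a winning position supplies an arrow $h\colon\dom(m)\to\dom(n)$ in $\EM(\PkI)$, so $n\circ h\colon\dom(m)\to\FI_k(B)$ is a homomorphism while $m\colon\dom(m)\emb\FI_k(A)$ is an embedding. For each relation symbol $R\in\sg^I$, reflection along $m$ together with preservation along $n\circ h$ yields $R^{\FI_k(A)}(s_{j_1},\dots,s_{j_r})\Rightarrow R^{\FI_k(B)}(t_{j_1},\dots,t_{j_r})$; restricting to the tuples formed by the last occurrences of the pebbles and reading off last elements via $\epsilon$, this says exactly that the current pebble placements define a partial homomorphism from $A$ to $B$, i.e.\ a positive win. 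The converse is the same observation run backwards: if the pebble position is a partial homomorphism, then the forest map $x_j\mapsto y_j$ is a homomorphism of $\sg^I$-structures preserving pebble functions, hence a morphism $\dom(m)\to\dom(n)$, witnessing that $(m,n)$ is winning.

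Finally, since the identification of positions is compatible with the covering relation governing a single round, a Duplicator winning strategy in either game transports verbatim to one in the other. I expect the only real obstacle to be the bookkeeping inside the heart of the argument: one must track the last occurrence $l(r)$ of each pebble $r$ and verify that the equality relation $I$ behaves correctly, so that preservation of $I$ by $n\circ h$ (together with its reflection by $m$) forces repeated pebble placements to be mapped consistently. This is precisely the pebble-specific subtlety already treated in Proposition~\ref{p:ex-arbor-peb}, now appearing in its one-way positive form, so no genuinely new difficulty should arise beyond careful indexing.
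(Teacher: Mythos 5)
Your proposal is correct and follows exactly the route the paper intends: it combines the position-translation and pebble-index bookkeeping of Proposition~\ref{p:ex-arbor-peb} with the one-way (reflection along the embedding $m$, preservation along $n\circ h$) analysis of Proposition~\ref{p:pos-arbor-EF}, which is precisely the argument the paper describes and then leaves to the reader. No genuinely different ideas are introduced, and the details you flag (last occurrences of pebbles, the handling of $I$) are the right ones to check.
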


\begin{proposition}\label{p:pos-arbor-mod}
    Duplicator has a winning strategy for the positive arboreal game $\game^+(F_k (A,a), F_k (B,b))$ in the category $\EM(\Mk)$ if and only if they have a winning strategy for the positive $k$-round bisimulation game from $(A,a)$ to $(B,b)$.
\end{proposition}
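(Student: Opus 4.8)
The plan is to follow the template of Proposition~\ref{p:ex-arbor-mod}. First I would reuse the isomorphism constructed there between the tree $\Path(F_k(C,c))$ and the underlying tree of $F_k(C,c)$, applied simultaneously to $(C,c)=(A,a)$ and $(C,c)=(B,b)$. Under this correspondence a path embedding $m\colon P\emb F_k(A,a)$ is identified with a play
\[
a=a_0\xrightarrow{R_1}a_1\to\cdots\xrightarrow{R_i}a_i
\]
in $A$, and likewise $n\colon Q\emb F_k(B,b)$ with a play $b=b_0\xrightarrow{R'_1}b_1\to\cdots\xrightarrow{R'_i}b_i$ in $B$. Thus valid positions $(m,n)$ of $\game^+(F_k(A,a),F_k(B,b))$ correspond bijectively to pairs of plays of equal length, i.e.\ to positions of the positive $k$-round bisimulation game from $(A,a)$ to $(B,b)$. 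As in Proposition~\ref{p:ex-arbor-mod}, it then remains only to check that this bijection restricts to a bijection between the \emph{winning} positions of the two games; a winning strategy for Duplicator in either game is then transported, round by round, to a winning strategy in the other.

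By Remark~\ref{r:winning-quotient}, a valid position $(m,n)$ of $\game^+(F_k(A,a),F_k(B,b))$ is winning exactly when there is a homomorphism $\dom(m)\to\dom(n)$ in $\EM(\Mk)$. Since valid positions satisfy $\htf(m)=\htf(n)$, the domains are linear synchronization trees of equal height, with underlying chains $x_0\cvr\cdots\cvr x_i$ and $y_0\cvr\cdots\cvr y_i$. Every morphism in $\EM(\Mk)$ preserves the distinguished element and the definable tree order, so the only possible underlying map for such a homomorphism is the height-preserving bijection $h$ sending $x_j\mapsto y_j$. Hence the position is winning precisely when this $h$ is a homomorphism of Kripke models.

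The crux of the argument is to unwind this last condition. By the synchronization-tree axiom~\ref{i:M-bis}, each covering edge carries a unique binary relation symbol, namely $R_j$ for $x_{j-1}\cvr x_j$ and $R'_j$ for $y_{j-1}\cvr y_j$. Preservation of the binary relations by $h$ therefore forces $R_j=R'_j$ for all $j$ (since an $R_j$-edge out of $y_{j-1}$ can only be the unique covering successor $y_j$, and only if its label is $R_j$); this is exactly the rule, built into the bisimulation game, that a move and its response proceed along the same accessibility relation, and along any valid play of the positive bisimulation game it holds automatically. Preservation of the unary predicates by $h$ amounts to
\[
P^A(a_j)\ \Longrightarrow\ P^B(b_j)\qquad\text{for every unary }P\in\sigma\text{ and every }0\le j\le i,
\]
which is precisely the weakening of the winning condition~\eqref{eq:winning-bisim} defining the positive bisimulation game. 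Taking $j=0$ recovers the initial position: the roots of $\Path(F_k(A,a))$ and $\Path(F_k(B,b))$ correspond to the trivial plays $(a)$ and $(b)$, and are winning on either side exactly when $P^A(a)\Rightarrow P^B(b)$ for all unary $P$. This yields the desired bijection on winning positions, and with it the equivalence of the two games.

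I expect the main obstacle to be exactly this homomorphism analysis between linear synchronization trees: one must verify carefully that, because the tree order is definable and each covering edge carries a unique label, preservation of the binary relations is equivalent to the edge labels coinciding (so that the forth/back moves of the arboreal game line up with the single-relation moves of the bisimulation game), while the remaining content of being a Kripke homomorphism is exactly the one-directional implication on the unary predicates. The residual bookkeeping—that the bijection on positions intertwines Spoiler's choice of structure and Duplicator's response in both games, so that winning strategies transfer verbatim—is routine and mirrors Proposition~\ref{p:ex-arbor-mod}.
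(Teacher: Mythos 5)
Your proof is correct and follows exactly the route the paper intends: the paper leaves this proposition to the reader, indicating only that one should adapt Proposition~\ref{p:ex-arbor-mod} via the isomorphism between $\Path(F_k(C,c))$ and the underlying tree of $F_k(C,c)$, and your analysis of homomorphisms between linear synchronization trees (rigidity of the underlying map, matching of edge labels via axiom~\ref{i:M-bis}, and the one-directional condition on unary predicates) is precisely the required verification that winning positions correspond.
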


The axiomatic result in Theorem~\ref{th:game-bisim}, combined with the previous three propositions, yields Propositions~\ref{p:pos-EF-struct}, \ref{p:pos-peb-struct} and~\ref{p:pos-mod-struct}, respectively.

\section{A preservation result for modal logic}\label{s:preservation}
Recall that \emph{Lyndon's positivity theorem}~\cite{Lyndon1959} states that a first-order sentence in a vocabulary $\sigma$ is logically equivalent to a positive sentence if, and only if, it is preserved under surjective homomorphisms between $\sigma$-structures.
This section aims to establish the following equi-depth positivity theorem for modal logic, and its relativisation to finite Kripke models, using positive bisimulations and the results in Section~\ref{s:concrete-pos}.

\begin{theorem}\label{t:Lyndon-modal}
The following statements are equivalent for any modal formula~$\phi$ of modal depth at most~$k$ in a (multi)modal vocabulary:
\begin{enumerate}[label=(\roman*)]
\item\label{i:phi-preserved} $\phi$ is preserved under surjective homomorphisms between pointed Kripke models (respectively, between finite pointed Kripke models).
\item\label{i:phi-positive} $\phi$ is logically equivalent (respectively, logically equivalent over finite pointed Kripke models) to a positive modal formula of modal depth at most $k$.
\end{enumerate}
\end{theorem}

\begin{remark}
A proof of Theorem~\ref{t:Lyndon-modal} can be obtained, at least for unimodal vocabularies, from Lyndon's interpolation theorem for modal logic; cf.\ e.g.\ \cite{Gabbay1972} or the forthcoming volume~\cite{craig}. The proof-theoretic approach has the advantage of deriving the relativisation to finite Kripke models from the finite model property of modal logic, and the preservation of modal depth from the analysis of the syntactic shape of the interpolants. The novelty of our proof, on the other hand, is that it is model-theoretic in spirit, yet it is resource-sensitive and works in the finite as well.
\end{remark}

We shall need the following standard characterisation of model classes of positive modal formulas.

\begin{lemma}\label{l:upward-closed-positively-axiomat}
Let $\sigma$ be a (multi)modal vocabulary and let $\D\hookrightarrow \CSstar$ be a full subcategory of pointed Kripke models. The following statements are equivalent:
\begin{enumerate}[label=(\roman*)]
\item  $\D=\Mod(\psi)$ for some $\psi\in \PML_k$.
\item $\D$ is upwards closed with respect to the preorder $\IMP_{\PML_{k}}$, that is
\[
(A,a)\IMP_{\PML_{k}} (B,b) \ \text{ and } \ (A,a)\in \D \ \Longrightarrow \ (B,b)\in \D.
\]
\end{enumerate}
\end{lemma}
\begin{proof}
This well-known fact follows from the existence of finitely many formulas in $\PML_k$ up to logical equivalence. It can be deduced from the corresponding statement for first-order formulas with bounded quantifier rank, see e.g.\ \cite[Lemma~3.13]{Libkin2004}.
\end{proof}

\begin{proof}[Proof of Theorem~\ref{t:Lyndon-modal}]
\ref{i:phi-preserved} $\Rightarrow$ \ref{i:phi-positive} If $\phi$ is preserved under surjective homomorphisms between pointed Kripke models, then the category $\Mod(\phi)$ is closed in $\CSstar$ under surjective images and saturated under the equivalence relation $\equiv_{\ML_{k}}$. In view of Lemma~\ref{l:upward-closed-positively-axiomat}, it suffices to show that $\Mod(\phi)$ is upwards closed with respect to $\IMP_{\PML_{k}}$. 

Suppose that $(A,a)\IMP_{\PML_{k}} (B,b)$. By Theorem~\ref{t:positive-concrete-charact}\ref{i:positive-modal-char}, there is a positive bisimulation $F_k(A,a) \pbisim F_k(B,b)$ in $\EM(\Mk)$, where the latter can be identified with the full subcategory of $\CSstar$ defined by synchronization trees of height $\leq k$. Such a positive bisimulation is of the form
\[\begin{tikzcd}[column sep=1.5em]
{} & Z \arrow{dl}[swap]{f} \arrow{dr}{g} & {} \\
F_{k}(A,a) & & F_{k}(B,b)
\end{tikzcd}\]
where $f$ is a p-morphism of Kripke models and $g$ is a homomorphism of Kripke models that is a p-morphism between the underlying tree orders; in particular $g$ is surjective. Therefore, if $\Mod(\phi)$ contains $(A,a)$ then it contains its $k$-unravelling $F_{k}(A,a)$ because $(A,a)\equiv_{\ML_{k}}F_{k}(A,a)$, and so it contains also $Z$ because p-morphisms (preserve and) reflect the validity of modal formulas. In turn, since $g$ is surjective we get $F_{k}(B,b)\in \Mod(\phi)$, and reasoning as before we conclude that $(B,b)\in \Mod(\phi)$.

The same proof, mutatis mutandis, yields the relativisation to finite Kripke models. Just observe that if $(A,a)$ and $(B,b)$ are finite, then so are $F_{k}(A,a)$ and $F_{k}(B,b)$. The Kripke model $Z$ is also finite, as it can be identified with the object $Z_{1}$ constructed in the fourth step of the proof of Proposition~\ref{p:backforth-bisim}. Concretely, the underlying set of $Z$ is a subset of the underlying set of $F_{k}(A,a)\times F_{k}(B,b)$, hence it is finite.

\ref{i:phi-positive} $\Rightarrow$ \ref{i:phi-preserved} This is a straightforward verification.
\end{proof}

\appendix

\section{Omitted proofs of logical equivalences via games}\label{app:logical-equiv}

In this appendix we provide detailed proofs for the results in Sections~\ref{s:concrete-exs} and~\ref{s:concrete-pos} whose proofs were either sketched or omitted.

\propexEFlogic*

\begin{proof}
\ref{i:exist-pres-EF} $\Rightarrow$ \ref{i:win-exis-EF}
Let $C\in \{A,B\}$. For $i\geq 0$ and $c_1,\dots,c_i\in C$, recall that the $\EFO_{k}$-type of $c_1,\dots,c_i$ is defined as
\[\tp_C(c_1,\dots,c_i)\coloneqq \{\phi(x_1,\dots,x_i) \in \EFO_{k-i} \mid C\models \phi (c_1,\dots,c_i)\}.
\] 
In particular $\tp_C(\bot)=\{\phi(\varnothing) \in \EFO_k \mid C\models \phi \}$, hence condition~\ref{i:exist-pres-EF} states that $\tp_A(\bot)\subseteq \tp_B(\bot)$.

We claim that, given $i\geq 1$, $a_1,\dots,a_{i-1}$ and $b_1,\dots,b_{i-1}$ such that $\tp_A(a_1,\dots,a_{i-1})\subseteq \tp_B(b_1,\dots,b_{i-1})$, if $a_i\in A$ there exists $b_i\in B$ such that $\tp_A(a_1,\dots,a_i)\subseteq \tp_B(b_1,\dots,b_i)$.
Let $\Phi\coloneqq \bigwedge{\tp_A(a_1,\dots,a_i)}$ and consider the formula 
\[
\psi(x_1,\dots,x_{i-1})\coloneqq \exists x_i \Phi(x_1,\dots,x_{i-1},x_i).
\]
Then $a_i$ witnesses the fact that $A\models \psi(a_1,\dots,a_{i-1})$.  Since $\psi$ is an element of $\EFO_{k-i+1}$ it belongs to $\tp_A(a_1,\dots,a_{i-1})$, and so also to $\tp_B(b_1,\dots,b_{i-1})$ by inductive hypothesis. That is, there exists $b_i\in B$ such that $B\models \Phi (b_1,\dots,b_i)$.
As each $\phi\in \tp_A(a_1,\dots,a_i)$ is equivalent to one of the conjuncts in $\Phi$, this yields $\tp_A(a_1,\dots,a_i)\subseteq \tp_B(b_1,\dots,b_i)$.
    
Now, let us consider the strategy in which, if elements $(a_{1},b_{1}), \ldots, (a_{i-1},b_{i-1})$ have been played and Spoiler chooses an element $a_{i}$, then Duplicator responds by choosing any $b_{i}$ that satisfies the property above. By induction, we have $\tp_A(a_1,\dots,a_i)\subseteq \tp_B(b_1,\dots,b_i)$. Let us show that this is a winning strategy for Duplicator.

The inclusion of types on formulas of the form $\neg (x_i=x_j)$ ensures that $\{(a_i,b_i)\}_{1\leq i\leq k}$ defines a functional and injective relation. Suppose now that $m(1),\dots,m(p)\leq k$.
\begin{itemize}
    \item If $R^A(a_{m(1)},\dots,a_{m(p)})$ then $R(x_{m(1)},\dots,x_{m(p)})\in \tp_A(a_1,\dots,a_j)$. Since the latter type is contained in $\tp_B(b_1,\dots,b_j)$, we get $R^B(b_{m(1)},\dots,b_{m(p)})$.
    
    \item Conversely, suppose that $R^A(a_{m(1)},\dots,a_{m(p)})$ does not hold. Then the formula $\neg R(x_{m(1)},\dots,x_{m(p)})$ belongs to $\tp_A(a_1,\dots,a_j)$, and by the same reasoning as above (which applies because this is indeed a formula of $\EFO_k$, since negations are allowed at atomic level) we conclude that $R^B(b_{m(1)},\dots,b_{m(p)})$ does not hold.
\end{itemize}
This shows that $\{(a_i,b_i)\}_{1\leq i\leq k}$ is a partial isomorphism, so Duplicator wins.

\ref{i:win-exis-EF} $\Rightarrow$ \ref{i:exist-pres-EF}
Let $\phi(\varnothing) \in \ELL_k$. Since the quantifier rank is at most $k$, we can rename the (bound) variables of $\phi$ to $x_1,\dots,x_k$, making sure that the quantifiers $\exists x_i$ appear in $i$-increasing order within a quantifier branch. 
Let us prove by induction on $\psi$ that, given $0\leq i\leq k$, $a_1,\dots,a_i$ and $\psi(x_1,\dots,x_i)$ subformula of $\phi$, if $A\models \psi (a_1,\dots,a_i)$ then $B\models \psi (b_1,\dots,b_i)$, where $[b_1,\dots,b_i]$ is Duplicator's answer to Spoiler's play $[a_1,\dots,a_n]$.

\begin{enumerate}
    \item If $\psi$ is a disjunction or conjunction of formulas that satisfy the inductive hypothesis, the desired statement follows at once.
    
    \item If $\psi$ is of the form $R(x_{m(1)},\dots,x_{m(p)})$ then, since $A\models \psi (a_1,\dots,a_i)$, we have $R^A(a_{m(1)},\dots,a_{m(p)})$ and thus $R^B(b_{m(1)},\dots,b_{m(p)})$. That is, $B\models \psi (b_1,\dots,b_i)$. 
    
    \item If $\psi$ is $x_{m}=x_{m'}$, using the fact that the relation is functional, if $a_m=a_{m'}$ then $b_{m}=b_{m'}$.
    
    \item If $\psi$ is of the form $\neg R(x_{m(1)},\dots,x_{m(p)})$ or $\neg (x_m = x_{m'})$, then the proof is the same, mutatis mutandis, and uses the fact that relations are reflected and the functional relation is injective, respectively.
    
    \item Finally, if $\psi$ is $\exists x_{i+1}\chi$ and $A\models \psi (a_1,\dots,a_i)$, there exists $a_{i+1}\in A$ such that $A\models \chi(a_1,\dots,a_{i+1})$. Let $b_{i+1}$ be Duplicator's answer to the play $[a_1,\dots,a_{i+1}]$. As $\chi$ is a subformula of $\phi$ where $x_1,\dots,x_{i+1}$ are free, by inductive hypothesis we have $B\models \chi(b_1,\dots,b_{i+1})$ and thus $B\models \psi (b_1,\dots,b_i)$.

\end{enumerate}

Now, if $A\models \phi$, with the induction property applied to $\phi$ and $i=0$, we get $B\models \phi$.
\end{proof}

\propexpeblogic*

\begin{proof}
Given $[(p_1,c_1),\dots,(p_n,c_n)]\in \FI_k(A)$, fix an enumeration $r_1,\dots,r_m$ of the $p_i$'s (\ie $\{r_1,\dots,r_m\}=\{p_1,\dots,p_n\}$ and the $r_j$ are pairwise distinct) and a function $l$ such that $l(r_j)\leq n$ is the index of the last appearance of pebble $r_j$ in the sequence $[p_1,\dots,p_n]$.

\ref{i:exist-pres-pebble} $\Rightarrow$ \ref{i:win-exis-peb} 
For $p_1,\dots, p_n\leq k$ and $c_1,\dots,c_n\in C$, the $\exists \LL^k$-type of $(p_1,c_1),\dots,(p_n,c_n)$ in $C$ is 
\[
\tp_C((p_1,c_1),\dots,(p_n,c_n))\coloneqq \{\psi(x_{r_1} ,\dots,x_{r_m} )\in \ELL^k\mid C\models \psi  (c_{l(r_1)},\dots,c_{l(r_m)})\}.
\] 
We have $\tp_C(\bot)=\{\phi(\varnothing) \in \ELL^k \mid C\models \phi\}$ thus item~\ref{i:exist-pres-pebble} states that $\tp_A(\bot)\subseteq \tp_B(\bot)$.

Let us prove the following property: Given $i\geq 1$, $p_1,\dots, p_{i-1}, a_1,\dots,a_{i-1}$ and $b_1,\dots,b_{i-1}$ such that $\tp_A((p_1,a_1),\dots,(p_{i-1}, a_{i-1}))\subseteq \tp_B(b_1,\dots,b_{i-1})$, for all ${1\leq p_i \leq k}$ and $a_i\in A$, there is $b_i\in B$ such that $\tp_A((p_1,a_1),\dots,(p_i,a_i))\subseteq \tp_B((p_1,b_1),\dots,(p_i,b_i))$.

 Suppose for simplicity that in the enumeration $r_1,\dots,r_m$ of the $p_j$ we have $r_m=p_i$ (and thus $l(r_m)=i$).
    Let $\Phi\coloneqq \bigwedge{\tp_A((p_1,a_1),\dots,(p_i,a_i))}$ and note that $\Phi$ belongs to $\ELL^k$ since we allow infinitary conjunctions. If 
\[
\psi(x_{r_1},\dots, x_{r_{m-1}})\coloneqq \exists x_{p_i} \Phi(x_{l(r_1)},\dots,x_{l(r_{m-1})},x_{p_i}).
\]
then $a_i$ witnesses the fact that $A\models \psi(a_{l(r_1)},\dots,a_{l(r_{m-1})})$.

    \begin{itemize}
        \item If $p_i$ does not appear in $p_1,\dots,p_{i-1}$, then the formula $\psi(x_{r_1},\dots, x_{r_{m-1}})$ belongs to $\tp_A((p_1,a_1),\dots,(p_{i-1} ,a_{i-1}))$. By inductive hypothesis, it also belongs to $\tp_B((p_1,a_1),\dots,(p_{i-1} ,a_{i-1}))$, which entails the existence of $b_i\in B$ such that $B\models \Phi (b_{l(r_1)},\dots,b_{l(r_{m-1})},b_i)$. 
Since each $\phi\in \tp_A((p_1,a_1),\dots,(p_i,a_i))$ is equivalent to one of the conjuncts in $\Phi$, this yields $\tp_A((p_1,a_1),\dots,(p_i,a_i))\subseteq \tp_B((p_1,b_1),\dots,(p_i,b_i))$.

        \item Now, suppose $p_i=r_m$ appears in $p_1,\dots,p_{i-1}$, and let its last appearance be at index $l'(p_i)$. We can view $\psi$ as a formula over the variables $x_{r_1},\dots,x_{r_m}$ which does not depend on $x_{r_m}$, hence the previous argument applies.
        We thus have $A\models \psi(a_{l(r_1)},\dots,a_{l'(r_m)})$. This means that $\psi\in  \tp_A((p_1,a_1),\dots,(p_{i-1} ,a_{i-1}))$, and we can conclude by reasoning as above. 
    \end{itemize}

We claim that the strategy consisting in responding with the so-defined $b_i$ to the play $[(p_1,a_1),\dots,(p_i,a_i)]$ is winning for Duplicator. Suppose that $[(p_1,a_1),\dots,(p_n,a_n)]$ is Spoiler's play, and let Duplicator's answers be $[(p_1,b_1),\dots,(p_n,b_n)]$. As proved above, we have, by induction, that $\tp_A((p_1,a_1),\dots,(p_n,a_n))\subseteq \tp_B((p_1,b_1),\dots,(p_n,b_n))$.
Consider the last positions of the pebbles $a_{l(r_1)},\dots,a_{l(r_m)}$ and $b_{l(r_1)},\dots,b_{l(r_m)}$. To see that these form the graph of an injective function, suppose $a_{l(r_i)}= a_{l(r_j)}$. The formula $x_{r_i}=x_{r_j}$ is an element of $\tp_A((p_1,a_1),\dots,(p_n,a_n))$ thus it also belongs to $\tp_B((p_1,b_1),\dots,(p_n,b_n))$, which means that $b_{l(r_i)}= b_{l(r_j)}$. This shows that the relation is functional. The same argument, applied to the formula $\neg (x_{r_i}=x_{r_j})$, shows that the function is injective. In order to show that this is a partial isomorphism, let $m(1),\dots,m(q)\leq m$.
\begin{itemize}
    \item If $R^A(a_{l(r_{m(1)} )},\dots,a_{l(r_{m(q)})})$ then $R(x_{r_{m(1)} },\dots,x_{r_{m(q)}})\in \tp_A(a_1,\dots,a_j)$. Thus $R(x_{r_{m(1)} },\dots,x_{r_{m(q)}})\in \tp_B(b_1,\dots,b_j)$ and so $R^B(b_{l(r_{m(1)} )},\dots,b_{l(r_{m(q)})})$.
    
    \item For the converse, if $R^A(a_{l(r_{m(1)} )},\dots,a_{l(r_{m(q)})})$ does not hold then the formula $\neg R(x_{r_{m(1)} },\dots,x_{r_{m(q)}})$ belongs to $\tp_A(a_1,\dots,a_j)$ and the same reasoning as above applies (recall that in formulas of $\ELL^k$ negations are allowed at the atomic level).
\end{itemize}

This proves that $\{(a_{l(r_i)} ,b_{l(r_i)})\}_{1\leq i\leq m}$ is a partial isomorphism, so Duplicator wins the existential pebble game.

\ref{i:win-exis-peb} $\Rightarrow$ \ref{i:exist-pres-pebble}
Suppose that $\phi(\varnothing) \in \ELL^k$ satisfies $A\models \phi$. Since $\phi$ has only $k$ variables, we can rename them to $x_1,\dots,x_k$.
We prove by induction on subformulas $\psi(x_{r_1},\dots,x_{r_m})$ of $\phi$ that the following property holds:
Given a positive integer $n$ and $[(p_1,a_1),\dots,(p_n,a_n)]\in \FI_k(A)$, if $b_i$ is Duplicator's response to $[(p_1,a_1),\dots,(p_i,a_i)]$ and $A\models \psi (a_{l(r_1)},\dots,a_{l(r_m)})$, then $B\models \psi (b_{l(r_1)},\dots,b_{l(r_m)})$.

\begin{enumerate}
    \item If $\psi$ is a (possibly infinite) disjunction or conjunction of formulas that satisfy the inductive hypothesis, the desired statement follows at once.
  
    \item If $\psi = R(x_{r_{\sigma_1}},\dots,x_{r_{\sigma_q}})$ with $\sigma_1,\dots,\sigma_q \leq m$ and $A\models \psi (a_{l(r_1)},\dots,a_{l(r_m)})$, then $R^A(a_{l(r_{\sigma_1})},\dots,a_{l(r_{\sigma_q})})$. Given that Duplicator's strategy is winning, we get $R^B(b_{l(r_{\sigma_1})},\dots,b_{l(r_{\sigma_q})})$, \ie $B\models \psi (b_{l(1)},\dots,b_{l(k)})$.    
    
    \item If $\psi$ is $x_{r_i}=x_{r_j}$ and $A\models \psi (a_{l(r_1)},\dots,a_{l(r_m)})$ then $a_{l(r_i)}=a_{l(r_j)}$. The relation $\{(a_{l(r_i)} ,b_{l(r_i)})\}_{1\leq i\leq m}$ is functional, so $b_{l(r_i)}=b_{l(r_j)}$. \Iec $B\models \psi (b_{l(r_1)},\dots,b_{l(r_m)})$. 

    \item If $\psi$ is $\neg R(x_{r_{\sigma_1}},\dots,x_{r_{\sigma_q}})$ or $\neg (x_{r_i}=x_{r_j})$, the proof is the same, mutatis mutandis, and hinges on the fact that relations are reflected and the functional relation $\{(a_{l(r_i)} ,b_{l(r_i)})\}_{1\leq i\leq m}$ is injective.
    
    \item Suppose $\psi$ is $\exists x_i \chi$ and let $A\models \psi (a_{l(r_1)},\dots,a_{l(r_m)})$. We distinguish two cases:
    
    \begin{itemize}
        \item If $i$ does not appear in $r_1,\dots,r_m$ then $\chi = \chi(x_{r_1},\dots, x_{r_m},x_i)$. Set $r_{m+1}\coloneqq i$ and $l(r_{m+1} )\coloneqq n+1$, and let $a_{n+1}$ satisfy $A\models \chi(a_{l(r_1)},\dots,a_{l(r_m)},a_{l(r_{m+1})})$.

        Let $b_1,\dots,b_n,b_{n+1}$ be Duplicator's answers to Spoiler's play 
        \[
        [(p_1,a_1),\dots,(p_n,a_n),(i,a_{n+1})].
        \] 
        By inductive hypothesis applied to $\chi$ and the sequences $p_1\dots,p_n,i$ and $a_1,\dots,a_n,a_{n+1}$, we have $B\models \chi(b_{l(r_1)},\dots,b_{l(r_m)},b_{l(r_{m+1})})$.
        Thus, $b_{n+1}$ is a witness that $B\models \psi (b_{l(r_1)},\dots,b_{l(r_m)})$.
        
        \item If $i=r_j$ for some $j\in\{1,\ldots,m\}$, we set $l'(r_j)\coloneqq n+1$ and $l'(r_q)\coloneqq l(r_q)$ for $q\neq j$. Then $\chi=\chi(x_{r_1},\dots,x_{r_m})$ and since the variable $x_{r_j}$ is in the scope of the outer quantifier $\exists x_i$ in $\psi$, the substitution of the variable $x_{r_j}$ in $\psi$ has no effect. 
Again there exists $a_{n+1}$ such that $A\models \chi(a_{l'(r_1}),\dots,a_{l'(r_m)})$, and by inductive hypothesis this implies that $B\models \chi(b_{l'(r_1}),\dots,b_{l'(r_m)})$ where the $b_i$'s are Duplicator's answers. 
Thus $b_{n+1}$ is a witness that $B\models \psi (b_{l(r_1)},\dots,b_{l(r_m)})$.
    \end{itemize}
\end{enumerate}
Applying the property above to $\phi$, with $n=0$, we get $B\models \phi$ as desired.
\end{proof}

\propexmodlogic*

\begin{proof}
\ref{i:EMLk-preserved} $\Rightarrow$ \ref{i:Dup-wins-k-ex-bisim}
Define $\tp^i_A(a) \coloneqq \{\phi \in \EML_{k-i}\mid A \models \phi (a)\}$; $\tp^i_B(b)$ is defined similarly. Then $\tp^0_A(a)\subseteq \tp^0_B(b)$ by~\ref{i:EMLk-preserved}. 
We claim that if $\tp^{i-1}_A(a_{i-1} )\subseteq \tp^{i-1}_B(b_{i-1})$, for any $R_i$ and $a_{i}$ such that $R_i^{A}(a_{i-1},a_i)$ there exists $b_i\in B$ such that $R_i^{B}(b_{i-1},b_i)$ and $\tp^i_A(a_{i})\subseteq \tp^i_B(b_{i})$.
This is proved by setting 
\[
\psi(x_{i-1}) \coloneqq \exists x_i \Big(R_i(x_{i-1},x_i)\wedge \Big(\bigwedge_{\phi \in \tp^i_A(a_{i})} \phi(x_i) \Big)\Big),
\] 
which is the standard translation of $\Diamond_{R_i}\bigwedge{\tp^i_A(a_{i})}$, if $\tp^i_A(a_{i})$ is regarded as a collection of modal formulas. 
With this notation we have $\psi \in \tp^{i-1}_A(a_{i-1}) \subseteq \tp^{i-1}_B(b_{i-1})$, and so the claim follows as in the previous propositions.

Answering the so-defined element $b_i$ yields a winning strategy for Duplicator: we have $R_i^B(b_{i-1},b_i)$ by construction, and since $\tp^i_A(a_{i})\subseteq \tp^i_B(b_{i})$ for all $1\leq i \leq k $, if $P^A(a_i)$ then $P(x)\in \tp^i_A(a_{i})$ and so $P^B(b_i)$. Similarly, $\neg P^A(a_i)$ implies $\neg P^B(b_i)$.

\ref{i:Dup-wins-k-ex-bisim} $\Rightarrow$ \ref{i:EMLk-preserved}
 Let $\phi (x_0) \in \EML_k$. We prove by induction on subformulas $\psi (x_i)$ of $\phi (x_0)$ that for any $a=a_0,\dots, a_i$, if $b=b_0,\dots, b_i$ is Duplicator's answer to this play, $A\models \psi (a_i)$ entails $B\models \psi (b_i)$.
 Applying this property to $\phi$ with $i=0$ will yield the desired result.

  If $\psi$ is a disjunction or conjunction of formulas that satisfy the inductive hypothesis, the desired statement follows easily. If $\psi$ is $P(x_i)$ or $\neg P(x_i)$, the claim follows at once from the fact that Duplicator's strategy is winning.
    If $\psi$ is $\exists x_{i+1} R(x_i,x_{i+1}) \wedge \chi (x_{i+1})$ and $A\models \psi (a_i)$, there exists $a_{i+1}$ such that $R^A(a_i,a_{i+1})$ and $A\models \chi(a_{i+1})$. Let $b_{i+1}$ be Duplicator's answer to the play $a_0,\dots,a_i,a_{i+1}$, and let $b_0,\dots,b_i$ be their previous answers. Then $R^B(b_i,b_{i+1})$ and by inductive hypothesis we have $B\models \chi(b_{i+1})$, which concludes the proof.
\end{proof}

\propposEFlogic*

\begin{proof}
\ref{i:PFOk-preserv} $\Rightarrow$ \ref{i:Dupl-wins-positive-k-game}
For $C\in \{A,B\}, 0\leq i \leq k$ and $c_1,\dots,c_i\in C$, define 
\[
\tp_C(c_1,\dots,c_i)\coloneqq \{\phi(x_1,\dots,x_i)\in \PFO_{k-i} \mid C\models \phi(c_1,\dots,c_i)\}.
\] 
Then $\tp_C(\bot)=\{\phi(\varnothing) \in \PFO_k \mid C\models \phi\}$ and so item~\ref{i:PFOk-preserv} implies $\tp_A(\bot)\subseteq \tp_B(\bot)$.

Let us prove by induction on $1\leq i \leq k$ that given $a_1,\dots,a_{i-1}\in A$ and ${b_1,\dots,b_{i-1}\in B}$ such that $\tp_A(a_1,\dots,a_{i-1})\subseteq \tp_B(b_1,\dots,b_{i-1})$,

\begin{enumerate}  
    \item\label{i:forth-1} For all $a_i\in A$ there exists $b_i\in B$ such that $\tp_A(a_1,\dots,a_{i})\subseteq \tp_B(b_1,\dots,b_{i})$.
    
    \item\label{i:back-2} For all $b_i\in B$ there exists $a_i\in A$ such that $\tp_A(a_1,\dots,a_{i})\subseteq \tp_B(b_1,\dots,b_{i})$.
\end{enumerate}
For item~\ref{i:forth-1}, let $a_i\in A$ and define $\Phi(x_1,\dots,x_{i-1}) \coloneqq \exists x_i \bigwedge{\tp_A(a_1,\dots,a_i)}$. Note that $\Phi\in \PFO_{k-i+1}$ because $\tp_A(a_1,\dots,a_i)\subseteq \PFO_{k-i}$.
Moreover, $\Phi \in \tp_A(a_1,\dots,a_{i-1})\subseteq \tp_B(b_1,\dots,b_{i-1})$. 
As in Proposition~\ref{p:ex-EF-logic}, this shows that $B\models \Phi(b_1,\dots,b_{i-1})$ and thus there exists $b_i\in B$ such that $B\models \phi (b_1,\dots b_i)$ for every $\phi \in \tp_A(a_1,\dots,a_i)$. Hence $\tp_A(a_1,\dots,a_{i})\subseteq \tp_B(b_1,\dots,b_{i})$.

For item~\ref{i:back-2}, let $b_i\in B$ and consider $S\coloneqq \PFO_{k-i} \setminus \tp_B(b_1,\dots,b_i)$.
If 
\[
\psi(x_1,\dots,x_i)\coloneqq \bigvee{S} \ \text{ and } \ \Phi(x_1,\dots,x_{i-1})\coloneqq \forall x_i \psi \in \PFO_{k-i+1},
\] 
then $B\not\models \psi(b_1,\dots,b_i)$ and so $B\not\models \Phi(b_1,\dots,b_{i-1})$. But $\Phi \notin \tp_B(b_1,\dots,b_{i-1})$ implies $\Phi \notin \tp_A(a_1,\dots,a_{i-1})$ by inductive hypothesis and so $A\not\models \Phi (a_1,\dots,a_{i-1})$.
This yields $a_i$ such that $A\not\models \phi(a_1,\dots,a_i)$ for each $\phi\in S$. Thus $\tp_A(a_1,\dots,a_{i})\subseteq \tp_B(b_1,\dots,b_{i})$.

Now, Duplicator's strategy consisting in responding with the so-built $a_i$ or $b_i$, depending on whether Spoiler plays in $B$ or $A$, is winning. The proof is similar to the case of existential fragments: the property $\tp_A(a_1,\dots,a_{i})\subseteq \tp_B(b_1,\dots,b_{i})$ is preserved at each step. 
The latter inclusion applied to formulas of the form $x_i=x_j$ ensures that the relation $\{(a_i,b_i)\}_{1\leq i \leq k}$ is functional; the corresponding function preserves relations because of the inclusion of types applied to formulas $R(x_{i_1},\dots,x_{i_q})$.

\ref{i:Dupl-wins-positive-k-game} $\Rightarrow$ \ref{i:PFOk-preserv}
Let $\phi (\varnothing)\in \PFO_k$.
We prove the following property by induction on subformulas $\psi(x_1,\dots,x_i)$ of $\phi$:
If the sequences $[a_1,\dots,a_i]$ and $[b_1,\dots,b_i]$ correspond to a play in the positive \EF game in which Duplicator uses their winning strategy, then $A\models \psi(a_1,\dots,a_i)$ implies $B\models \psi(b_1,\dots,b_i)$.

\begin{itemize}
    \item The property holds if $\psi$ is atomic because the strategy is winning for Duplicator, and it is easily seen to hold if $\psi$ is a conjunction or disjunction of formulas that satisfy the property.
        
    \item If $\psi = \exists x_{i+1}\chi(x_1,\dots,x_{i+1})$, let $a_{i+1}\in A$ be a witness for $A\models \psi (a_1,\dots,a_i)$, and let $b_{i+1}\in B$ be Duplicator's answer in round $i+1$ after the sequences $[a_1,\dots,a_i]$, $[b_1,\dots,b_i]$ have been played. The sequences $[a_1,\dots,a_{i+1}]$ and $[b_1,\dots,b_{i+1}]$ correspond to a play according to Duplicator's winning strategy, so the inductive hypothesis applied to $\chi$ yields $B\models \chi(b_1,\dots,b_{i+1})$ and so $B\models \psi(b_1,\dots,b_i)$.
    
    \item If $\psi = \forall x_{i+1}\chi(x_1,\dots,x_{i+1})$ and $A\models \psi (a_1,\dots,a_i)$, the proof is similar: for any $b_{i+1}\in B$, let $a_{i+1}\in A$ be Duplicator's answer to this play in round $i+1$. Then $A\models \chi(a_1,\dots,a_{i+1})$ and so, by inductive hypothesis on $\chi$, $B\models \chi(b_1,\dots,b_{i+1})$. It follows that $B\models \psi (b_1,\dots,b_i)$.
\end{itemize}
The established property, applied to the sentence $\phi$, yields the result.
\end{proof}

\proppospeblogic*

\begin{proof}
    Again, given $p_1,\dots,p_n \leq k$, we fix an enumeration $r_1,\dots,r_m$ of the $p_i$'s and denote by $l(r_j)$ the index of the last appearance of pebble $r_j$ in $p_1,\dots,p_n$.

\ref{i:PLLK-preserved} $\Rightarrow$ \ref{i:Duplicator-wins-positive-k-peb}
We define the positive type 
\[
\tp_C((p_1,c_1),\dots,(p_n,c_n)) \coloneqq \{\phi(x_{r_1},\dots,x_{r_m}) \in \PLLk \mid C\models \phi(a_{l(r_1)},\dots,a_{l(r_m)})\}.
\]
By assumption, $\tp_A(\bot)\subseteq \tp_B(\bot)$. We show that for all $p_1,\dots,p_{n-1}\leq k$, $a_1,\dots,a_{n-1}\in A$, and $b_1,\dots,b_{n-1}\in B$ such that 
\[
\tp_A((p_1,a_1),\dots,(p_{n-1},a_{n-1}))\subseteq \tp_B((p_1,b_1),\dots,(p_{n-1},b_{n-1})),
\] 
the following two properties hold:
\begin{itemize}
\item For all $p_n\leq k$ and $a_n\in A$, there exists $b_n \in B$ such that 
\[
\tp_A((p_1,a_1),\dots,(p_{n},a_{n}))\subseteq \tp_B((p_1,b_1),\dots,(p_{n},b_{n})).
\]

\item For all $p_n\leq k$ and $b_n\in B$, there exists $a_n \in A$ such that 
\[
\tp_A((p_1,a_1),\dots,(p_{n},a_{n}))\subseteq \tp_B((p_1,b_1),\dots,(p_{n},b_{n})).
\]
    \end{itemize}
For the first property, one can define $\psi \coloneqq \exists x_{p_n} \bigwedge{\tp_A((p_1,a_1),\dots,(p_{n},a_{n}))}$, where the conjunction is infinitary, and reason as in existential part.
For the second one, let $S \coloneqq \PLLk \setminus \tp_B((p_1,b_1),\dots,(p_{n},b_{n}))$ and $\psi \coloneqq \forall x_{p_n} \bigvee{S}$, where the disjunction is infinitary. The reasoning in the proof of Proposition~\ref{p:pos-EF-logic} can then be adapted in a straightforward manner to yield the desired result.

    Now we have $\tp_A((p_1,a_1),\dots,(p_{n},a_{n})\subseteq \tp_B((p_1,b_1),\dots,(p_{n},b_{n})$ after any number $n$ of rounds, and the preservation of formulas of the form $x_{r_i}=x_{r_j}$ ensures that the last positions of the pebbles form a functional relation, and preservation of formulas $R(x_{r_{\sigma_1}},\dots,x_{r_{\sigma_q}})$ that it is a partial homomorphism. Thus Duplicator wins the game.

\ref{i:Duplicator-wins-positive-k-peb} $\Rightarrow$ \ref{i:PLLK-preserved}
Let $\phi(\varnothing) \in \PLLk$.
We show by induction on subformulas $\psi(x_{r_1},\dots,x_{r_m})$ of $\phi$ the following property: for any $n\geq 0$, $p_1,\dots,p_n\leq k$, $a_1,\dots,a_n \in A$ and $b_1,\dots b_n \in B$ such that 
\[
[(p_1,a_1),\dots,(p_n,a_n)], [(p_1,b_1),\dots,(p_n,b_n)]
\] 
is a valid play in the positive $k$-pebble game where Duplicator uses their winning strategy, if $A\models \psi(a_{l(r_1)},\dots,a_{l(r_m)})$ then $B\models \psi(b_{l(r_1)},\dots,b_{l(r_m)})$.

    This is true for atomic formulas because the strategy is winning for Duplicator, and for conjunctions or disjunctions of formulas satisfying the property above this is easily seen to hold. 
    If $\psi$ is $\exists x_i \chi$, there is $a_{n+1}\in A$ witnessing that $A\models \psi(a_{l(r_1)},\dots,a_{l(r_m)})$. Set $p_{n+1}=i$ and let $b_{n+1}$ Duplicator's answer to the play $(p_{n+1},a_{n+1})$ at round $n+1$. By inductive hypothesis, $b_{n+1}$ is a witness that $B\models \psi(b_{l(r_1)},\dots,b_{l(r_m)})$.
	Finally, if $\psi$ is $\forall x_i \chi$, consider an arbitrary $b_{n+1}\in B$ and let $a_{n+1}$ be Duplicator's answer to Spoiler's move $(i,b_{n+1})$. By hypothesis we have $A\models \chi(a_{l'(r'_1)},\dots,a_{l'(r'_{m'})})$ (see the proof of Proposition~\ref{p:ex-EF-logic} for the definitions of $l',r',m'$), which yields $B\models \chi(b_{l'(r'_1)},\dots,b_{l'(r'_{m'})})$. Thus $B\models \psi(b_{l(r_1)},\dots,b_{l(r_m)})$.

    Applying the property above to $\phi$, with $n=0$, concludes the proof.
\end{proof}

\propposmodlogic*

\begin{proof}
\ref{i:PMLk-preserved} $\Rightarrow$ \ref{i:Dupl-wins-k-bisim-game}
Define the positive type $\tp^i_A(a) \coloneqq \{\phi \in \PML_{k-i}\mid A \models \phi (a)\}$; $\tp^i_B(b)$ is defined in a similar way.
By assumption we have $\tp^0_A(a)\subseteq \tp^0_B(b)$. We claim that for all $a=a_0,\dots,a_{i-1}\in A$ and $b=b_0,\dots,b_{i-1}\in B$ such that $\tp^{i-1}_A(a_{i-1})\subseteq \tp^{i-1}_B(b_{i-1})$, the following hold:
    \begin{itemize}
        \item For every $R\in \sigma$ and $a_{i}\in A$ such that $R^A(a_{i-1},a_i)$, there exists $b_i\in B$ such that $R^B(b_{i-1},b_i)$ and $\tp^i_A(a_{i})\subseteq \tp^i_B(b_{i})$.

        \item For every $R\in \sigma$ and $b_{i}\in B$ such that $R^B(b_{i-1},b_i)$, there exists $a_i\in B$ such that $R^A(a_{i-1},a_i)$ and $\tp^i_A(a_0,\dots,a_{i})\subseteq \tp^i_B(b_0,\dots,b_{i})$.
    \end{itemize}
The proof is essentially the same as for Propositions~\ref{p:pos-EF-logic} and~\ref{p:pos-peb-logic}: apply the inductive hypothesis to the standard translations of $\Diamond_{R}\bigwedge \left( \tp^i_A(a_i) \right)$ and $\Box_R \bigvee \left( \PML_{k-i} \setminus\tp^i_B(b_i) \right)$, respectively. 
It is not difficult to see that the previous two properties allow us to define a Duplicator winning strategy.

\ref{i:Dupl-wins-k-bisim-game} $\Rightarrow$ \ref{i:PMLk-preserved}
    Let $\phi \in \PML_k$. We prove that for any subformula $\psi$ of $\phi$, and any $i\geq 0$, if $a=a_0,\dots,a_i$, $b=b_0,\dots,b_i$ corresponds to a valid play in the $k$-round bisimulation game where Duplicator uses their winning strategy, then $A\models \psi(a_i)$ implies $B\models \psi(b_i)$.

The atomic case holds because the strategy is winning for Duplicator, and the case where $\psi$ is a conjunction or disjunction of formulas satisfying this property is easily verified.
If $\psi$ starts with $\exists$ (corresponding to a modality $\Diamond_R$) and Spoiler plays a witness in $A$, then by inductive hypothesis Duplicator's answer in $B$ will be a witness for $\psi$.
If $\psi$ starts with $\forall$ (corresponding to a modality $\Box_R$) and Spoiler chooses an arbitrary ($R$-accessible) element of $B$, Duplicator answer in $A$ will make $\psi$ true. By the inductive hypothesis, $\psi$ will hold in $B$ too.

When applied to the trivial subformula $\phi$, this concludes the proof.
\end{proof}

\addtocontents{toc}{\protect\setcounter{tocdepth}{0}}
\section*{Acknowledgements}
The last-named author is grateful to Balder ten Cate for useful information on Lyndon's positivity theorem and interpolation for modal logic.

\addtocontents{toc}{\protect\setcounter{tocdepth}{1}}


\bibliographystyle{amsplain-nodash}

\providecommand{\bysame}{\leavevmode\hbox to3em{\hrulefill}\thinspace}
\providecommand{\MR}{\relax\ifhmode\unskip\space\fi MR }
\providecommand{\MRhref}[2]{%
  \href{http://www.ams.org/mathscinet-getitem?mr=#1}{#2}
}
\providecommand{\href}[2]{#2}

\end{document}